\newif\iffull
\newcommand{\G}{\ensuremath{G}\xspace}
\newcommand{\C}{\ensuremath{C}\xspace}
\newcommand{\Aq}{\ensuremath{\mathsf{A}}\xspace}
\newcommand{\Bq}{\ensuremath{\mathsf{B}}\xspace}
\newcommand{\Cq}{\ensuremath{\mathsf{C}}\xspace}
\newcommand{\Dq}{\ensuremath{\mathsf{D}}\xspace}
\newcommand{\enum}[1]{\ensuremath{E_{#1}}}
\newcommand{\enuma}{\enum{0}} % E_0 (Home)
\newcommand{\enumb}{\enum{5}} % E_1 -> E_4 (big upper region)
\newcommand{\enumc}{\enum{1}} % E_2 -> E_1 = Task1
\newcommand{\enumd}{\enum{9}} % E_3 -> E_9 (triangle between E4 and Task1)
\newcommand{\enume}{\enum{8}} % E_4 -> E_8 (square between E4 and Task1)
\newcommand{\enumf}{\enum{7}} % E_5 -> E_7 (below Home)
\newcommand{\enumg}{\enum{11}} % E_6 -> E_11 (triangle between E5 and Task1)
\newcommand{\enumh}{\enum{10}} % E_7 -> E_10 (square between E5 and Task1)
\newcommand{\enumi}{\enum{6}} % E_8 -> E_5 (big lower region)
\newcommand{\enumj}{\enum{4}} % E_9 -> E_6 (inaccessible left of obstacle)
\newcommand{\enumk}{\enum{2}} % E_10 -> E_2 = Task2
\newcommand{\enuml}{\enum{3}} % E_11 -> E_3 = Task3
\newcommand{\home}{\ensuremath{\mathsf{Home}}\xspace}
\newcommand{\task}[1]{\ensuremath{\mathsf{Task}_{#1}}\xspace}
\newcommand{\taskc}{\task{1}}
\newcommand{\taskk}{\task{2}}
\newcommand{\taskl}{\task{3}}
\renewenvironment{algorithm}[1][tbp]{
  \begin{algorithm2e}[#1]\ignorespaces%
}
{
  \end{algorithm2e}\ignorespacesafterend%
}
\newenvironment{algorithm*}[1][tbp]{
  \begin{algorithm2e*}[#1]\ignorespaces%
}
{
  \end{algorithm2e*}\ignorespacesafterend%
}
\newcommand{\rdsb}{robust stutter bisimulation\xspace}
	\newcommand\appref[1]{#1}
	\newcommand\shorttext[1]{}
    \newcommand\appref[1]{}
    \newcommand\shorttext[1]{#1}
\begin{document}

\begin{frontmatter}

\iffull
    \title{Robust Stutter Bisimulation for Abstraction \\ and Controller Synthesis with Disturbance: Proofs\thanksref{wasp}}
\else
    \title{Robust Stutter Bisimulation for Abstraction \\ and Controller Synthesis with Disturbance\thanksref{wasp}}
\fi

\thanks[wasp]{This work was partially supported by the Wallenberg AI, Autonomous Systems and Software Program (WASP) funded by the Knut and Alice Wallenberg Foundation.}

\author[chalmers,zenseact]{Jonas Krook\corauthref{cor}}%
  \ead{krookj@chalmers.se}\quad
\author[waikato]{Robi Malik}%
  \ead{robi@waikato.ac.nz}\quad
\author[chalmers]{Sahar Mohajerani}%
  \ead{mohajera@chalmers.se}\quad
\author[chalmers]{Martin Fabian}%
  \ead{fabian@chalmers.se} 
  
\corauth[cor]{Corresponding author.}

\address[chalmers]{Department of Electrical Engineering,
	Chalmers University of Technology, Göteborg, Sweden}

\address[waikato]{Department of Software Engineering,
	University of Waikato, Hamilton, New Zealand}

\address[zenseact]{Zenseact, Göteborg, Sweden}

\begin{keyword}
Controller synthesis;
Cyber-Physical Systems;
Disturbances;
Linear temporal logic;
Abstraction.
\end{keyword}

\begin{abstract}
This \whatsit\ proposes a method to synthesise controllers for cyber-physical systems such that the controlled systems satisfy specifications given as linear temporal logic formulas.
The focus is on systems with \emph{disturbance}, where future states cannot be predicted exactly due to uncertainty in the environment.
The approach used to solve this problem is to first construct a finite-state abstraction of the original system and then synthesise a controller for the abstract system.
For this approach, the \emph{\rdsb} relation is introduced, which preserves the existence of controllers for any given linear temporal logic formula. States are related by the \rdsb relation if the same target sets can be guaranteed to be reached or avoided under control of some controllers, thereby ensuring that disturbances have similar effect on paths that start in related states.
This \whatsit\ presents an algorithm to construct the corresponding \rdsb
\emph{quotient} to solve the abstraction problem, and it is shown, by
explicit construction, that there exists a controller enforcing a linear
temporal logic formula for the original system if and only if a
corresponding controller exists for the quotient system. Lastly, the result
of the algorithm and the controller construction are demonstrated by
application to an example of robot navigation.
\end{abstract}

\end{frontmatter}

\section{Introduction}

\emph{Cyber-physical systems} consist of physical systems and digital computers affecting and interacting with each other \citep{Lee:CPS:2015}. This can be a continuous time dynamical system subject to control inputs and process disturbances, which is under control of a computer with a fixed sampling time. Cyber-physical systems are often safety critical, thus it is highly desirable to have correctness guarantees.
One way to achieve such guarantees is to use \emph{formal synthesis} to construct the control logic automatically \citep{BelYorGol:17}.

Formal synthesis uses a model of the cyber-physical system and a \emph{formal specification} to compute the allowed control actions such that the behaviour of the controlled system satisfies the specification. The specification is a formalisation of the requirements, and it can be expressed in, for instance, \emph{Linear Temporal Logic} (\LTL, \citealt{BaiKat:08}). \LTL\ extends propositional logic with temporal operators with which requirements on future behaviour can be specified.

Standard algorithms \citep{Ram:89, KloBel:08, BelYorGol:17}
for controller synthesis with \LTL\ specifications require finite transition systems, whereas many cyber-physical systems are described by continuous models with an infinite state space. Such models can be turned into transition systems by \emph{discretisation}, but in general neither the state space nor the transition relation of the resulting \emph{concrete} transition system are finite~\citep{Tab:09}. One approach to reduce the size of the state space is by attempting to form a finite-state \emph{abstraction} by grouping states into a finite \emph{quotient} state space. One such abstraction method is \emph{bisimulation}~\citep{Mil:89}. It preserves all \LTL\ properties~\citep{BaiKat:08}, and is guaranteed to produce finite quotient spaces for certain types of systems~\citep{AluHenLafPap:00}.

If the bisimulation quotient of a system is infinite, a finite quotient space might be obtained by using a coarser abstraction. One approach to obtain coarser quotients is \emph{approximate} bisimulation where bisimulation is relaxed to allow a bounded difference between the behaviours of the concrete and abstract system \citep{GirPap:07}. Coarser quotients also result from (approximate) \emph{simulation} \citep{Tab:06, ZamPolMazTab:11, BelYorGol:17, ReiWebRun:16}, which relaxes bisimulation by retaining only some controlled behaviours of the concrete system to the abstract system.
\emph{Dual-simulation} \citep{WagOza:16} produces a coarser abstraction than bisimulation by using overlapping subsets.
Another coarse abstraction is obtained by \emph{divergent stutter bisimulation} which allows \emph{stutter steps} within the abstract states~\citep{BaiKat:08}. \citet{MohMalWinLafOza:21} show that divergent stutter bisimulation yields abstractions for which a controller can be synthesised if and only if a controller can be synthesised for the concrete system, given that the synthesis is performed with \LTL\ specifications without the next operator~(\LTLnn). Omitting this operator means that specifications cannot refer to specific time intervals, but this is not always necessary for formalising the requirements.
All the above work applies to \emph{deterministic} transition systems and does not work for systems that are subject to process disturbances.

Many concrete systems are subject to disturbances that either stem from the system dynamics or fidelity loss in the discretisation. Abstraction methods for controller synthesis for non-deterministic transition systems have been considered before. In the work by \citet{LiuOza:16}, \LTLnn\ is used as the specification formalism, and the abstract transition systems are constructed from continuous-time dynamical systems subject to disturbances. \citet{NilOzaTopMur:12} considers abstractions of discrete-time systems subject to disturbances where $N$-step reachability is the basis for abstraction. However, for these approaches, controllers that exist in the concrete system might not have a corresponding controller in the abstract system.
Another abstraction approach is (approximate) \emph{alternating bisimulation}~\citep{AluHenKupVar:98, PolTab:09} which is an extension of (approximate) bisimulation  applicable for synthesis with \LTL\ specifications on non-deterministic systems.

This \whatsit\ addresses the problem of designing a controller for a given non-deterministic concrete transition system and an \LTLnn\ specification through an abstraction method that guarantees that a controller can be synthesised for the concrete system if and only if a controller can be synthesised for the abstract system.
By only considering \LTLnn\ specifications, it is possible to construct coarser abstractions compared to existing work.
The abstractions are based on the \emph{\rdsb} relation introduced in this \whatsit, which 
extends divergent stutter
bisimulation to non-deterministic systems, and relaxes alternating
bisimulation to \LTLnn\ specifications. As such, this work can be seen as a
combination of the works by \citet{MohMalWinLafOza:21} and
\citet{PolTab:09}. It is shown how an abstract transition system can be
constructed based on \rdsb, and how this abstraction is used to synthesise
and implement a controller for the concrete system.

In the following, \sect~\ref{sec:preliminaries} introduces notations of
transitions systems and linear temporal logic.
\sect~\ref{sec:abstraction} gives an overview of the proposed steps for
controller synthesis by abstraction, which are detailed in the
following \sects\
\ref{sec:robustStutterBisimulation}--\ref{sec:concreteController}.
Afterwards, \sect~\ref{sec:example} applies to idea to an example,
and \sect~\ref{sec:conclusions} adds concluding remarks.
Formal proofs of all results can be found in 
\iffull the appendix\else\citet{KroMalMohFab:22}\fi.

\section{Preliminaries}
\label{sec:preliminaries}

\subsection{Transition systems}
\label{sec:transsys}

Transition systems describe how systems or processes evolve from one state
to the next.
The states are labelled with atomic propositions, which can be used by
formulas to refer to specific states or groups of states.

\begin{definition}
  \label{def:ts}
  A \emph{transition system} is a tuple $G = \tsystem$ where
\iffull
  \begin{itemize}
  \item $S$ is a set of \emph{states};
  \item $\Sigma$ is a set of \emph{transition labels};
  \item $\delta \subseteq S \times \Sigma \times S$ is a
    \emph{transition relation};
  \item $S\init \subseteq S$ is a set of \emph{initial states};
  \item \AP\ is a set of \emph{atomic propositions};
  \item $L\colon S \to 2^\AP$ is a \emph{state labelling function}.
  \end{itemize}
\else
  $S$ is a set of \emph{states},
  $\Sigma$ is a set of \emph{transition labels},
  $\delta \subseteq S \times \Sigma \times S$ is a \emph{transition relation},
  $S\init \subseteq S$ is a set of \emph{initial states},
  \AP\ is a set of \emph{atomic propositions}, and
  $L\colon S \to 2^\AP$ is a \emph{state labelling function}.
\fi
\end{definition}

A system modelled by a difference equation
\begin{equation}
  x(t + \Delta t) = g(x(t), u(t), w(t))
\end{equation}
where $x \in \mathcal{X}$ is the state, $u \in \mathcal{U}$ is the control input, and  $w \in \mathcal{W}$ is the disturbance, can be represented as a transition system $G = \tsystem$ with state space $S = \mathcal{X}$ and transition labels $\Sigma = \mathcal{U}$ \citep{Tab:09}. If for $u \in \mathcal{U}$, there exists $w \in \mathcal{W}$ such that $s_1 = g(s_0, u, w)$, then there is transition $(s_0, u, s_1) \in \delta$.

Sequences of states of a transition system are used to represent traversals of the transition system. $S^*$~and $S^{\omega}$ denote the sets of all finite and infinite sequences of elements of~$S$, respectively, and $S^{\infty} = S^* \cup\mkern1.5mu S^{\omega}$. The set of non-empty finite sequences is denoted by $S^+ = S^* \setminus \{\varepsilon\}$, where $\varepsilon$ is the empty sequence. Two sequences $\rho \in S^*$ and $\pi \in S^{\infty}$ can be concatenated to form a new sequence $\rho \pi \in S^{\infty}$.
A finite sequence $\rho \in S^*$ is a \emph{prefix} of $\pi \in S^\infty$, written $\rho \sqsubseteq \pi$, if there exists a sequence $\pi' \in S^\infty$ such that $\rho \pi' = \pi$, and $\rho$ is a \emph{proper prefix} of~$\pi$, written $\rho \sqsubset \pi$, if $\rho \sqsubseteq \pi$ and $\rho \neq \pi$.
To be a traversal of a transition system, a sequence of states must be allowed by the transition relation:
\begin{definition}
  \label{def:Frags}
Let $G$ be a transition system.
A finite sequence of states $\rho = s_0 \cdots s_n \in S^*$ is a \emph{finite path fragment} of~$G$ if for all $0 \leq i < n$ there exists $\sigma_i \in \Sigma$ such that $(s_i, \sigma_i, s_{i+1}) \in \delta$. The set of all finite path fragments of~$G$ is denoted $\Frags^*(G)$.
An infinite sequence of states $\pi \in S^{\omega}$ is an \emph{infinite path fragment} of~$G$ if for all finite prefixes $\rho \sqsubset \pi$ it holds that $\rho \in \Frags^*(G)$. The set of all infinite path fragments of~$G$ is denoted $\Frags^\omega(G)$.
The set of all path fragments of~$G$ is $\Frags^\infty(G) = \Frags^*(G) \cup \Frags^\omega(G)$.
A \emph{path} of~$G$ is an infinite path fragment $\pi = s_0s_1\cdots \in \Frags^\omega(G)$ with $s_0 \in S\init$. The set of all paths of~$G$ is denoted $\Paths(G)$.
\end{definition}

That is, path fragments can start in any state of the transition system, whereas paths start in an initial state.

An important property of transition systems is \emph{deadlock freedom}, which ensures that progress is possible from every state.

\begin{definition}
\label{def:deadlockFree:G}
A transition system $G = \tsystem$ is \emph{\dlfree} if for each
state $s \in S$ there exists a label $\sigma \in \Sigma$ and a state $t \in
S$ such that $(s, \sigma, t) \in \delta$.
\end{definition}

\iffull
In a \dlfree system, every finite path fragment can be extended infinitely.
If a system is not \dlfree, it contains deadlock states from where no
further transitions are possible.
\fi

\subsection{Relations}

Given a set~$X$, a relation $\RR \subseteq X\times X$ is an \emph{equivalence relation} on~$X$ if it is reflexive, symmetric, and transitive. The \emph{equivalence class} of $x \in X$ is $[x]_\RR = \{\, x' \in X \mid (x, x') \in \RR \,\}$. The set of all equivalence classes modulo~\RR, the \emph{quotient space} of $X$ under \RR, is $X/\RR = \{\, [x]_\RR \mid x \in X \,\}$.
Such partitioning into equivalence classes is one way to abstract the state set of a transition system, where in the abstract system each equivalence class is one state.
A relation~$\RR_1$ is a \emph{refinement} of a relation~$\RR_2$ if $\RR_1 \subseteq \RR_2$. In this case, $\RR_1$ is said to be \emph{finer} than~$\RR_2$, and $\RR_2$ is \emph{coarser} than~$\RR_1$.

\begin{definition}\label{def:superblock}
  Let $\RR \subseteq X \times X$ be an equivalence relation. A set $T \subseteq X$ is a \emph{superblock} of~\RR, if for all $x_1 \in T$ and all $x_2 \in X$ such that $(x_1, x_2) \in \RR$, it holds that $x_2 \in T$. The set of all superblocks of~\RR\ is denoted $\SB(\RR)$.
\end{definition}

Superblocks are sets of elements that are closed under the equivalence
relation~\RR. Alternatively, they can be characterised as unions of zero
or more equivalence classes. Note that the empty set also is a superblock.

\begin{example}
\label{ex:equivalence}
Let $X = \{ 1, 2, 3, 4, 5\}$ and 
\begin{align*}
    \RR = \{ &(1, 1), (1, 2), (1, 3), (2, 1), (2, 2), (2, 3), (3, 1), \\
             &(3, 2), (3, 3), (4, 4), (4, 5), (5, 4), (5, 5) \}\ .
\end{align*}
\iffull
    $\RR$ is a relation on $X$ since $\RR \subseteq X \times X$, and it is an equivalence relation since it is reflexive, symmetric, and transitive.
    The equivalence classes are $[1]_\RR = [2]_\RR = [3]_\RR = \{ 1, 2, 3 \}$, and $[4]_\RR = [5]_\RR = \{ 4, 5 \}$. The
\else
    $\RR \subseteq X \times X$ is an equivalence relation, and the
\fi
set of equivalence classes modulo~\RR\ is $X/\RR = \{ [1]_\RR\bcom [2]_\RR\bcom [3]_\RR\bcom [4]_\RR\bcom [5]_\RR \} = \{ \{ 1, 2, 3 \}\bcom \{ 4, 5 \} \}$. The set of superblocks is $\SB(\RR) = \{ \emptyset\bcom \{ 1, 2, 3 \}\bcom \{ 4, 5 \}\bcom \{ 1, 2, 3, 4, 5 \} \}$.\qed
\end{example}

\subsection{Linear Temporal Logic}

A formula of \emph{Linear Temporal Logic without Next} (\LTLnn) is a logical formula consisting of atomic propositions from a set~$\AP$, the propositional logic operators, and the binary operator~$\until$.
Its syntax is defined by $\varphi = \top \mid p \mid \lnot \psi \mid \psi \land \theta \mid \psi \until \theta$, where $p \in \AP$, and $\psi$ and~$\theta$ are $\LTLnn$ formulas.

\begin{definition}\label{def:ltl}
Let $G = \tsystem$. Let $\psi$ and $\theta$ be $\LTLnn$ formulas, and let $p \in \AP$ be an atomic proposition. Whether a path fragment $\pi = s_0s_1\cdots \in \Frags^{\infty}(G)$ satisfies the $\LTLnn$ formula $\varphi$, written $\pi \vDash \varphi$, is defined inductively on the structure of~$\varphi$:
\begin{itemize}
    \item $\pi \vDash \top$ always holds;
    \item $\pi \vDash p$ iff $p \in L(s_0)$;
    \item $\pi \vDash \lnot \psi$ iff $\pi \models \psi$ does not hold;
    \item $\pi \vDash \psi \land \theta$ iff $\pi \vDash \psi$ and $\pi \vDash \theta$;
    \item $\pi \vDash \psi \until \theta$ iff there is $m \geq 0$ such that $s_ms_{m+1}\cdots \vDash \theta$ and for all $0 \leq i < m$ it holds that $s_is_{i+1}\cdots \vDash \psi$.
\end{itemize}
An \LTLnn\ formula~$\varphi$ holds at state~$s$, written $\langle G, s
\rangle \vDash \varphi$, if all infinite path fragments starting at~$s$
satisfy~$\varphi$, i.e, $\pi \models \varphi$ for all infinite path
fragments $\pi \in \Frags^\omega(G)$ with $s \sqsubset \pi$.
The transition system~$G$ \emph{satisfies} the
\LTLnn\ formula~$\varphi$, written $G \models \varphi$, if $\langle G,
s\init\rangle \models \varphi$ for all initial states $s\init \in S\init$.
\end{definition}

The other propositional logic operators ($\lor$, $\rightarrow$,
$\leftrightarrow$) can be defined based on $\top$, $\land$, and $\lnot$.
Some additional temporal operators can be defined based on the existing
ones by
$\finally \psi \equiv \top \until \psi$, 
$\globally \psi \equiv \lnot \finally \lnot \psi$, and 
$\psi \weakuntil \theta \equiv \globally \psi \lor (\psi \until \theta)$. The operators $\until$ and~$\weakuntil$, read as \emph{(strong) until} and \emph{weak until}, are of particular importance in this \whatsit. If $\psi \until \theta$ is satisfied on a path fragment, then $\theta$ eventually holds for some state in the path fragment, and $\psi$ holds in all states before that. $\psi \weakuntil \theta$ is similar to $\psi \until \theta$, but it is also satisfied on path fragments where $\psi$ holds in all states.

\LTLnn can be generalised to be defined over state sets $X\subseteq S$ instead of only atomic propositions. For a path fragment $\pi = s_0s_1\cdots \in \Frags^\infty(G)$, let $\pi \vDash X$ iff $s_0 \in X$. Such \LTLnn formulas will in the following be called \emph{generalised} $\LTLnn$ formulas.

\begin{definition}
Let $G=\tsystem$ be a transition system, and let $\RR \subseteq S \times S$ be an equivalence relation. A \emph{stutter step formula} for~$G$ is a generalised \LTLnn\ formula of the form $P \anyuntil T$, where $P,T \subseteq S$ with $P \cap T = \emptyset$ and $\anyuntil$ is either $\until$ or~$\weakuntil$.
If $P,T \in \SB(\RR)$, then the formula $P \anyuntil T$ is also called 
an \emph{\RR-superblock step formula} from~$P$.
\end{definition}

Stutter step formulas $P \anyuntil T$ describe the immediate future when a system is in some state set~$P$. The set~$T$ contains states that can be entered next from~$P$. A stutter step formula of the form $P \until T$ means that a path visiting~$P$ will eventually reach~$T$, whereas $P \weakuntil T$ means that it is possible to stay in~$P$ indefinitely or enter~$T$. Superblock step formulas require the source and target sets $P$ and~$T$ to be superblocks of an equivalence relation. They describe elementary steps of an abstract system whose states are equivalence classes.
The interest in stutter step formulas is because they make it possible to abstract away so-called \emph{stutter steps}, which are transitions within an equivalence class.
Whether a path fragment satisfies an \RR-superblock step formula is independent of the number of stutter steps. If \RR\ preserves state labels, then it can also be shown that the satisfaction of \LTLnn\ formulas is independent of such repeated steps. 

Given a transition system~$G=\tsystem$ and a sequence $\pi = s_0s_1\cdots \in \Frags^\infty$, the state labelling function~$L$ is extended to~$\pi$ by $L(\pi) = L(s_0)L(s_1)\cdots \in (2^{\AP})^\infty$. Moreover, the \emph{stutter free} sequence $\mathrm{sf}(\pi) \in S^\infty$ is obtained from~$\pi$ by removing all elements~$s_{i+1}$ such that $s_{i+1} = s_i$.

\begin{definition}\label{def:stutter:equivalent}
Let $G$ be a transition system. Two path fragments $\pi_1, \pi_2 \in \Frags^\infty(G)$ are \emph{stutter equivalent} if $\mathrm{sf}(L(\pi_1)) = \mathrm{sf}(L(\pi_2))$ and if~$\pi_1$ and~$\pi_2$ are either both finite or both infinite.
\end{definition}

\begin{theorem}[\citealt{BaiKat:08}]
	Let~$G$ be a transition system, $\varphi$ an \LTLnn\ formula, and let $\pi_1, \pi_2 \in \Frags^\omega(G)$ be two stutter equivalent path fragments. Then $\pi_1 \vDash \varphi$ iff $\pi_2 \vDash \varphi$.
\end{theorem}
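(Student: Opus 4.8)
The plan is to prove the equivalence by structural induction on the \LTLnn\ formula $\varphi$, after setting up a convenient description of the suffixes of an infinite path fragment in terms of its stutter-free label sequence.

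First I would record the following observation. For an infinite path fragment $\pi = s_0 s_1 \cdots \in \Frags^\omega(G)$ let $A_0 A_1 \cdots = \mathrm{sf}(L(\pi))$ be its stutter-free label sequence, whose index set is either $\{0,1,\dots\}$ or, when $L(\pi)$ is eventually constant, a finite set $\{0,\dots,r\}$. This induces a \emph{block index} function $b_\pi\colon \mathbb{N} \to \mathbb{N}$ assigning to each position $m$ the index of the maximal block of equal consecutive labels containing $m$; it satisfies $b_\pi(0) = 0$, is non-decreasing, increases by at most one at each step, and is surjective onto the index set of $\mathrm{sf}(L(\pi))$ (so it is eventually constant in the degenerate case). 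The key point is that the suffix $\pi^{(m)} := s_m s_{m+1} \cdots$ has stutter-free label sequence $A_{b_\pi(m)} A_{b_\pi(m)+1}\cdots$. Consequently, if $\pi_1, \pi_2 \in \Frags^\omega(G)$ are stutter equivalent, i.e.\ $\mathrm{sf}(L(\pi_1)) = \mathrm{sf}(L(\pi_2))$, then $b_{\pi_1}$ and $b_{\pi_2}$ have the same range, $\pi_1^{(m)}$ and $\pi_2^{(n)}$ are stutter equivalent whenever $b_{\pi_1}(m) = b_{\pi_2}(n)$, and in particular (taking $m = n = 0$) the first states of $\pi_1$ and $\pi_2$ carry the same label.

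For the induction itself, the cases $\varphi = \top$, $\varphi = \lnot\psi$, and $\varphi = \psi \land \theta$ are immediate — the first by definition, the other two from the induction hypotheses for $\psi$ and $\theta$ — and $\varphi = p$ follows because $L(s_0)$ agrees on $\pi_1$ and $\pi_2$. The only substantive case is $\varphi = \psi \until \theta$, and by symmetry it suffices to show that $\pi_1 \vDash \psi\until\theta$ implies $\pi_2 \vDash \psi\until\theta$. Let $m \ge 0$ witness $\pi_1 \vDash \psi\until\theta$, so $\pi_1^{(m)} \vDash \theta$ and $\pi_1^{(i)} \vDash \psi$ for all $i < m$. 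Put $k = b_{\pi_1}(m)$ and let $n$ be the least position with $b_{\pi_2}(n) = k$, which exists since $b_{\pi_2}$ has the same range as $b_{\pi_1}$. Then $\pi_2^{(n)}$ is stutter equivalent to $\pi_1^{(m)}$, so the induction hypothesis for $\theta$ gives $\pi_2^{(n)} \vDash \theta$. Now fix $i < n$; by minimality of $n$ and monotonicity of $b_{\pi_2}$ we have $l := b_{\pi_2}(i) < k$, and the least position $i'$ with $b_{\pi_1}(i') = l$ then satisfies $i' < m$ because $b_{\pi_1}$ is non-decreasing with $b_{\pi_1}(m) = k > l$. Hence $\pi_1^{(i')} \vDash \psi$, and since $\pi_1^{(i')}$ is stutter equivalent to $\pi_2^{(i)}$, the induction hypothesis for $\psi$ yields $\pi_2^{(i)} \vDash \psi$. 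Thus $n$ witnesses $\pi_2 \vDash \psi\until\theta$.

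I expect the $\until$ case to be the main obstacle: one has to turn the witnessing index $m$ for $\pi_1$ into a witnessing index $n$ for $\pi_2$ while \emph{simultaneously} checking that every suffix of $\pi_2$ taken at a position before $n$ matches, up to stutter equivalence, a suffix of $\pi_1$ at some position before $m$. The block-index function $b_\pi$ is precisely the bookkeeping device that makes this transparent, reducing the argument to monotonicity and surjectivity of $b_{\pi_1}$ and $b_{\pi_2}$ onto their common range. A minor point still to be handled carefully is the degenerate case in which $\mathrm{sf}(L(\pi))$ is finite (the final block is infinite): there $b_\pi$ is eventually constant, but all the statements above remain valid, so no separate case analysis is needed.
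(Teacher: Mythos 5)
Your proof is correct. Note that the paper does not prove this theorem itself---it is imported from \citet{BaiKat:08}---so there is no in-paper argument to compare against; your structural induction with the block-index bookkeeping for the $\until$ case is essentially the standard proof given in that reference, and all steps (including the monotonicity argument producing the witness $n$ and the handling of the eventually-constant case) check out.
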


\subsection{Controllers}

Transition systems can be controlled by restricting the subset of transitions that are allowed next. A controller does this by deciding, based on the history of visited states, the set of allowed transition labels.

\begin{definition}
Let $G = \tsystem$ be a transition
system. A \emph{general controller}, or simply \emph{controller}, for~$G$
is a function $C:S^+\to2^{\Sigma}$.
\end{definition}

When a controller is applied to a transition system~$G$, the resultant behaviour is described by another transition system, called the 
\emph{controlled} system. As the controller has access to all historic states of the original transition system~$G$, the states of the controlled system are sequences of states of~\G.

\begin{definition}
Let $G = \tsystem$ be a transition system, and let $C\colon S^+ \to 2^{\Sigma}$ be a controller for $G$. The \emph{controlled system} is $C/G = \langle S^+\bcom \Sigma\bcom \delta_C\bcom S\init\bcom \AP\bcom L_C\rangle$, where
\begin{equation*}
  \delta_C = \LongSet{5.5em}{$(s_0 \cdots s_n, \sigma, s_0\cdots s_n s_{n+1})
    \in S^+ \times \Sigma \times S^+$ $\mid$ $(s_n, \sigma, s_{n+1}) \in
    \delta$ and $\sigma \in C(s_0 \cdots s_n)$}.
\end{equation*}
and $L_C(s_0 \cdots s_n) = L(s_n)$.
\end{definition}

The initial states of the controlled system~$C/G$ are the initial states of the original system~$G$, interpreted as sequences of length one. After observing a state sequence $s_0\cdots s_n$,
the state of $G$ is~$s_n$ and the state of $C/G$ is $s_0\cdots s_n$.
The transitions from this state in~$C/G$ are those possible in~\G from~$s_n$ and allowed by~\C from $s_0\cdots s_n$.

\begin{definition}
  \label{def:deadlockFree:C}
  A controller~$C$ for a transition system~$G$ is \emph{\dlfree} if
  $C/G$ is \dlfree.
\end{definition}

\begin{definition}
\label{def:enforces}
Let $G$ be a transition system.
A controller~$C$ \emph{enforces} a (generalised)
\LTLnn-formula~$\psi$ from state~$s$ of~$G$ if $\langle C/G, s \rangle
\vDash \psi$, and $C$ enforces~$\psi$ on~$G$ if $C/G\vDash\psi$.
\end{definition}

In other words, an \LTLnn\ formula is enforced from a state if every controlled path fragment starting from that state satisfies the formula. The state in question may or may not be an initial state, as controllers may also be defined for path fragments that do not start with an initial state. An \LTLnn\ formula is enforced on a transition system if it is enforced from every initial state.

The path fragments of a controlled system~$C/G$ are given by $\Frags^\infty(C/G)$, which by definition are sequences of states of~$C/G$ and thus sequences of sequences of states of~$G$. The following definition of \emph{permitted} path fragments, $\Frags^\infty(C,G)$, is used to replace such by simple sequences of states, projecting a path fragment $(s_0) \penalty750(s_0 s_1) \cdots \penalty750(s_0 s_1 \cdots s_n) \cdots$ of the controlled system~$C/G$ to the path fragment $s_0 s_1 \cdots s_n \cdots$ of~$G$.

\begin{definition}
	\label{def:permitted}
	Let $G$ be a transition system, and
	let $C\colon S^+ \to 2^\Sigma$ be a controller for~$G$.
	A finite path fragment $\rho = s_0 s_1 s_2 \cdots s_n \in \Frags^*(G)$
	is \emph{permitted} by~$C$ if for all $0 \leq i < n$ there exists $\sigma_i
	\in C(s_0\cdots s_i)$ such that $(s_i,\sigma_i,s_{i+1}) \in \delta$.
	The set of all finite path fragments in~$G$ permitted by~$C$ is denoted by
	$\Frags^*(C,G)$.
	The definition is extended to permitted infinite path fragments, $\Frags^\omega(C,G)$, and
	permitted path fragments, $\Frags^\infty(C,G)$. The set of permitted paths, $\Paths(C, G)$, is defined likewise.
\end{definition}

\subsection{Positional Controllers}

\begin{definition}
Let $G = \tsystem$ be a transition system. A \emph{\memoryless\ controller}
for~$G$ is a function $\overlinit{C}\colon S \to 2^{\Sigma}$.
\end{definition}

\begin{definition}
Let $G = \tsystem$ be a transition system,
and let $\overlinit{C}\colon S \to 2^{\Sigma}$ be a \memoryless\ controller for~$G$.
The \emph{controlled system} of $G$ under the control of~$\overlinit{C}$ is $\overlinit{C}/G = C/G$ where $C\colon S^+ \to 2^\Sigma$ with $C(\rho u) = \overlinit{C}(u)$ for all $\rho \in S^*$ and $u \in S$.
\end{definition}

As a \memoryless\ controller only makes decisions based on the current state, its controlled system can be alternatively defined using the original transition system state space.

Later, it will be useful to refer to the sets of states of a transition system from which there exists some (\memoryless) controller that can enforce a given generalised $\LTLnn$ formula.

\begin{definition}
  \label{def:EC}
  Let $G$ be a transition system, and let $\psi$ be a generalised \LTLnn-formula for~$G$. The set of states where $\psi$ can be
  enforced by control is:
  \begin{align*}
    \kern-1em
    \EC[G]\psi &= \LongSet{7.5em}{$s \in S$ $\mid$ there is some
      \dlfree controller~$C$ such that $\langle C/G, s \rangle \models \psi$}; \\
    \kern-1em
    \ECS[G]\psi &= \LongSet{7.5em}{$s \in S$ $\mid$ there is some \dlfree \memoryless\
      controller~$\smash{\overlinit{C}}$ such that $\langle \overlinit{C}/G, s \rangle \models \psi$}.
  \end{align*}
\end{definition}

\begin{figure}
  \centering
  \begin{tikzpicture}
    \node [state] (s0) {0};
    \node [state] (s1) [left=of s0] {$1:a$};
    \node [state] (s2) [right=of s0] {$2:b$};
    \path [->] (s0) edge [bend right] node [above] {$\alpha$} (s1);
    \path [->] (s1) edge [bend right] node [below] {$\alpha$} (s0);
    \path [->] (s0) edge [bend left] node [above] {$\beta$} (s2);
    \path [->] (s2) edge [bend left] node [below] {$\beta$} (s0);
  \end{tikzpicture}
  \caption{Example showing the difference between \memoryless\ and general
    controllers. From state~0, the \LTLnn\ specification $\finally a \land
    \finally b$ can only be enforced by a
general controller.}
  \label{fig:ECvsECS}
\end{figure}
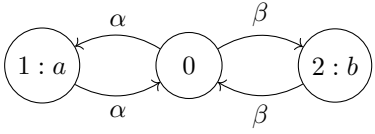

\begin{example}
\label{ex:ECvsECS}
Consider the transition system $G = \tsystem$ in \fig~\ref{fig:ECvsECS}.
\iffull
Here
\begin{itemize}
    \item $S = \{0,1,2\}$, 
    \item $\Sigma = \{\alpha,\beta\}$, 
    \item $\delta = \{(0,\alpha,1)\bcom (1,\alpha,0)\bcom (0,\beta,2)\bcom (2,\beta,0)\}$,
    \item $S\init = \{0\}$, 
    \item $\AP = \{a,b\}$, 
    \item $L(0) = \emptyset$, $L(1) = \{a\}$, and $L(2) = \{b\}$.
\end{itemize}
\else
It has three states, $S = \{0,1,2\}$, labelled by $L(1) = \{a\}$ and
$L(2) = \{b\}$.
\fi

The \LTLnn\ specification $\varphi_1 \equiv \finally a$ can
be enforced by a \memoryless\ controller $\overlinit{C}_1$ with $\overlinit{C}_1(0) = \overlinit{C}_1(1) = \{\alpha\}$ and $\overlinit{C}_1(2) = \{\beta\}$.
Thus $\overlinit{C}_1/G \vDash \varphi_1$.
It is also clear that $\ECS[G]{\varphi_1} = \EC[G]{\varphi_1} = S$.

However, the specification $\varphi_2 \equiv \finally a \land \finally b$ cannot be enforced from the state~0 by any \memoryless\ controller.
This is because a \memoryless\ controller~$\overlinit{C}$ must choose the same control action each time state~0 is entered.
So, either $\overlinit{C}(0) = \{\alpha\}$ or $\overlinit{C}(0) = \{\beta\}$ or $\overlinit{C}(0) = \{\alpha,\beta\}$.
In the first case, $\overlinit{C}$ permits an infinite cycle between states 0 and~1, failing $\finally b$. In the second case, $\overlinit{C}$ permits an infinite cycle between states 0 and~2, failing $\finally a$.
In the third case, $\overlinit{C}$ permits both these cycles and thus fails both $\finally a$ and~$\finally b$ from state~0.
Thus $0 \notin \ECS[G]{\varphi_2}$.

From states 1 and~2, on the other hand, there exist \dlfree \memoryless\ controllers that enforce~$\varphi_2$, e.g.,
$\overlinit{C}_2$ with $\overlinit{C}_2(0) = \overlinit{C}_2(1) = \{\beta\}$ and $\overlinit{C}_2(1) = \{\alpha\}$ enforces $\varphi_2$ from state~1. Hence, $1 \in \ECS[G]{\varphi_2}$. Likewise, $2 \in \ECS[G]{\varphi_2}$ and therefore $\ECS[G]{\varphi_2} = \{1,2\}$. 

As a general 
controller $C : S^+ \to 2^{\Sigma}$ can use the history of visited states to determine its control actions, it can enforce more \LTLnn specifications than a \memoryless\ controller. \C defined as
\begin{align*}
    C(0) &= \{ \beta \} \\
    C(s_0 \cdots s_{n-1} s_n) &= \left\{
    \begin{array}{ll}
        \{ \alpha \} & \textrm{if } s_{n-1} = 2 \\
        \{ \beta \} & \textrm{if } s_{n-1} = 1 \\
    \end{array}
    \right.
\end{align*}
enforces $\varphi_2$ from all states, so $\EC[G]{\varphi_2} = S$.\qed
\end{example}

It is clear that $\ECS[G]\psi \subseteq \EC[G]\psi$ for every
generalised \LTLnn\ formula~$\psi$ (a \memoryless\ controller is a special case of a general controller), but the opposite inclusion does not hold in general as shown in \Examp~\ref{ex:ECvsECS}.
However, \appref{as shown in \app~\ref{app:fixpoints}, }for stutter step formulas, i.e., generalised \LTLnn formulas of the form $P \until T$ and $P \weakuntil T$, the reverse inclusion does indeed hold. 

\begin{proposition}
  \label{prop:EC:any}
  Let $G = \tsystem$ be a \dlfree transition system, and let $\psi$
  be a stutter step formula for~$G$. Then $\EC[G]\psi = \ECS[G]\psi$, and
  there exists a \memoryless\ controller $\overlinit{C}\colon S \to 2^\Sigma$ such that,
  for all states $s \in \ECS[G]\psi$ it holds that $\langle \overlinit{C}/G,s\rangle \models \psi$.
\end{proposition}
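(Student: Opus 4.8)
The plan is to read the statement ``some deadlock-free controller, memoryful or not, enforces $P\anyuntil T$ from~$s$'' as a two-player game in which the controller picks transition labels and the disturbance picks successors, and to solve this game by a single fixpoint computation over~$S$. This yields a closed form for $\EC[G]{P\anyuntil T}$ together with one memoryless controller that realises it uniformly from every state of that set; since $\ECS[G]{P\anyuntil T}\subseteq\EC[G]{P\anyuntil T}$ holds trivially, showing that this memoryless controller already attains $\EC[G]{P\anyuntil T}$ delivers the equality. Concretely, I would introduce the monotone ``controllable predecessor'' operator $\Phi\colon 2^S\to 2^S$,
\[
  \Phi(X)=T\cup\{\,s\in P\mid \exists\,\sigma\in\Sigma\colon \emptyset\neq\{\,t\in S\mid(s,\sigma,t)\in\delta\,\}\subseteq X\,\},
\]
so that $s\in\Phi(X)$ says that either $s\in T$, or $s\in P$ and the controller has a label whose successors all land in~$X$.

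Since $G$ need not be finite, I would work with the transfinite approximants of~$\Phi$ and invoke Knaster--Tarski rather than inducting over~$\mathbb{N}$. For $\psi=P\until T$, take $A$ to be the \emph{least} fixpoint of~$\Phi$ (the controllable attractor of~$T$ inside~$P$), with each $s\in A$ carrying the ordinal rank $\min\{\alpha\mid s\in A_\alpha\}$ along the increasing chain $A_0=\emptyset$, $A_{\alpha+1}=\Phi(A_\alpha)$. For $\psi=P\weakuntil T$, take $W$ to be the \emph{greatest} fixpoint of~$\Phi$, with each $s\notin W$ carrying the co-rank read off the decreasing chain $W_0=S$, $W_{\alpha+1}=\Phi(W_\alpha)$, i.e.\ the unique~$\gamma$ with $s\in W_\gamma\setminus W_{\gamma+1}$. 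Because $P\cap T=\emptyset$, each fixpoint is a disjoint union, $A=T\cup(A\cap P)$ and $W=T\cup(W\cap P)$ with $A,W\subseteq T\cup P$; and because $G$ is \dlfree, $\Phi(S)=T\cup P$, so the co-rank-$0$ states are exactly $S\setminus(P\cup T)$.

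For soundness I would exhibit~$\overline C$ directly: on~$T$ and on the complement of the fixpoint let $\overline C(s)$ be any singleton label admitting a successor (one exists as $G$ is \dlfree); on $A\cap P$ let $\overline C(s)$ be a label whose successor set is non-empty and contained in the approximant just below~$\mathrm{rank}(s)$ (such a label witnesses membership of~$s$ in~$\Phi$ of that approximant); on $W\cap P$ let $\overline C(s)$ be a label whose successor set is non-empty and contained in~$W$. Then $\overline C/G$ is \dlfree, and a short argument shows: every controlled path from $s\in A$ stays in~$A$ with strictly decreasing ordinal rank until it meets~$T$ — which must occur, since a strictly decreasing sequence of ordinals is finite — and hence satisfies $P\until T$; every controlled path from $s\in W$ stays in $W\cap P\subseteq P$ and either remains in~$P$ forever or enters~$T$ through~$P$, and hence satisfies $P\weakuntil T$. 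Thus $A\subseteq\ECS[G]{P\until T}$ and $W\subseteq\ECS[G]{P\weakuntil T}$, with $\overline C$ the claimed uniform memoryless controller.

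The remaining and, I expect, hardest part is the converse: no \dlfree controller, even a memoryful one, can enforce $P\anyuntil T$ from a state outside the fixpoint, so $\EC[G]{P\until T}\subseteq A$ and $\EC[G]{P\weakuntil T}\subseteq W$. Fix a \dlfree controller~$C$ and a start state outside the relevant fixpoint, and let the disturbance defeat~$C$ along a permitted path. Whenever the current state lies in~$P$ but outside the fixpoint, fixpoint-ness ($\Phi(A)=A$, resp.\ $\Phi(W)=W$) together with \dlfreeness of $C/G$ furnishes an allowed transition of~$C/G$ whose target is again outside the fixpoint; whenever the current state lies outside $P\cup T$, the prefix built so far lies entirely in $P\setminus T$, so any \dlfree extension already falsifies $P\anyuntil T$. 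For $P\until T$ the disturbance may simply stay outside~$A$ forever, which never reaches~$T$ and so falsifies $P\until T$. For $P\weakuntil T$, however, staying in~$P$ forever would \emph{satisfy} the formula, so there the disturbance must additionally decrease the co-rank at every step it stays in~$P$; well-foundedness of the ordinals then forces the path out of~$P$ after finitely many steps into $S\setminus(P\cup T)$, falsifying $P\weakuntil T$. The delicate bookkeeping is keeping the witnessing ranks well-defined on an infinite state set while simultaneously ensuring the disturbance's moves are genuine transitions of~$C/G$, and correctly exploiting the asymmetry between the two operators. Combining the two inclusions with $\ECS\subseteq\EC$ gives $\EC[G]{P\anyuntil T}=\ECS[G]{P\anyuntil T}$, equal to the fixpoint, which completes the proof.
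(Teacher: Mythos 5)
Your proposal follows essentially the same route as the paper's own proof (\props\ \ref{prop:EC:W} and~\ref{prop:EC:U} in \app~\ref{app:fixpoints}): your operator $\Phi$ is exactly the paper's $\Theta(X)=T\cup(P\cap\Pre(X))$, you take the least fixpoint for $\until$ and the greatest for $\weakuntil$ via transfinite iterates, and you build the same rank-guided memoryless controller for soundness and the same disturbance-path argument for the converse. The only cosmetic divergence is that for the weak-until converse the paper simply verifies $\EC[G]{\psi}\subseteq\Theta(\EC[G]{\psi})$ and invokes the Knaster--Tarski post-fixpoint characterisation, whereas you unfold this into an explicit co-rank-decreasing disturbance path; both are correct.
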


\appref{The proof for this result is by combination of \props\ \ref{prop:EC:W} and~\ref{prop:EC:U} of \app~\ref{app:fixpoints}, which also explains how to compute $\ECS[G]{\psi}$ for stutter step formulas.}

\section{Controller Synthesis by Abstraction}
\label{sec:abstraction}

The problem considered in this \whatsit\ concerns synthesis of a controller enforcing
an \LTLnn\ formula~$\varphi$ for a concrete transition system~$G$ that models a discrete-time dynamical system, as illustrated by the dashed arrow in \fig~\ref{fig:abstraction:principle}. 
Such discrete-time systems are often derived from difference equations with infinite state, control, and disturbance spaces, so that the transition system's state space becomes infinite.

\begin{namedproblem}{Synthesis}\label{problem:synthesis}
  Given a transition system~$G$ and an \LTLnn\ formula~$\varphi$, find a
  \dlfree controller~$C$ that enforces $\varphi$ on~$G$.
\end{namedproblem}

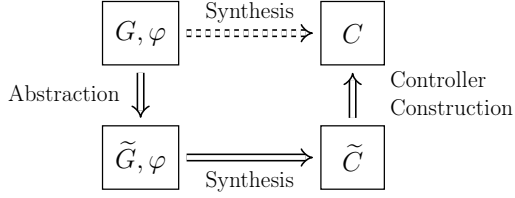
\begin{figure}
\centering
\scalebox{0.6}{	
\tikzset{
  block/.style = {draw, thick, rectangle, minimum height = 4em, minimum width = 4em},
}
\begin{tikzpicture}[every node/.style={outer sep=5pt, inner sep=8pt}]
  \node [block] (G) {\Large$G, \varphi$};
  \node [block, right=28mm of G] (CG) {\Large$C$};
  \node [block, below=of G] (GR) {\Large$\Tilde{G}, \varphi$};
  \node [block] (CR) at (CG |- GR) {\Large$\Tilde{C}$};
		
  \draw [-{Implies}, double distance=3pt, line width=1pt, dashed] (G) -- node[above=-1.5ex] {\large Synthesis} (CG);
  \draw [-{Implies}, double distance=3pt, line width=1pt] (G) --
  node[left] {\large Abstraction\strut} (GR);
  \draw [-{Implies}, double distance=3pt, line width=1pt] (GR) -- node[below=-1ex] {\large Synthesis} (CR);
  \draw [-{Implies}, double distance=3pt, line width=1pt] (CR) --
  node[right] {\large \renewcommand{\arraystretch}{1}
    \begin{tabular}{l} Controller \\ Construction \end{tabular}} (CG);
\end{tikzpicture}
} %%MF end \scalebox
\caption{Illustration of controller synthesis by abstraction. The dashed arrow represents synthesis directly on the concrete system, the solid arrows represent synthesis by abstraction.}
\label{fig:abstraction:principle}
\end{figure}

To apply methods for finite transition systems to infinite state spaces, this \whatsit\ uses an approach based on abstraction
shown by the solid arrows in \fig~\ref{fig:abstraction:principle}.
First, the concrete transition system~$G$ is transformed into a finite-state
\emph{abstraction}~$\Tilde{G}$. Second, an \emph{abstract controller}~$\Tilde{C}$, which 
enforces~$\varphi$ on the abstraction~$\Tilde{G}$, is synthesised.
Third, a \emph{concrete controller}~$C$
is constructed based on the abstract controller~$\Tilde{C}$, which then enforces $\varphi$ on the concrete system~$G$.
As the synthesis of controllers for \LTLnn\ specifications is addressed
elsewhere \citep{Ram:89,KloBel:08}, this \whatsit\ is only concerned with the
first and third steps, i.e., construction of the abstraction~$\Tilde{G}$ and the
concrete controller~$C$.

\begin{namedproblem}{Abstraction}\label{problem:abstraction}
	Given a concrete transition system~$G$, and an \LTLnn\ formula~$\varphi$, construct an abstract transition system~$\Tilde{G}$ such that, there exists a \dlfree controller~$\Tilde{C}$ enforcing~$\varphi$ on $\Tilde{G}$ iff there exists a \dlfree controller~$C$ enforcing~$\varphi$ on~$G$.
\end{namedproblem}

Solving the \ref{problem:abstraction} means that the abstract transition system~$\Tilde{G}$ must be constructed in such a way that it is equivalent to the concrete system~$G$ with respect to synthesis. This ensures the \emph{soundness} and \emph{completeness} of the approach, i.e., every controller obtained from the abstract system is related to a controller for the concrete system, and if there exists a controller for the concrete system, then it can be found by synthesis based on the abstract system.

Even though the existence of a controller~$\Tilde{C}$ for the abstract system carries over to the existence of a controller~$C$ for the concrete system, it is not immediately clear how to construct such a concrete controller. The last problem is to find an effective way to construct the concrete controller~$C$ from the abstract controller~$\Tilde{C}$. 

\begin{namedproblem}{Controller Construction}\label{problem:construction}
	Given a concrete transition system~$G$, an abstract transition system~$\Tilde{G}$, an \LTLnn\ formula~$\varphi$, and a \dlfree controller~$\Tilde{C}$ that enforces~$\varphi$ on~$\Tilde{G}$, construct a \dlfree controller~$C$ that enforces~$\varphi$ on~$G$.
\end{namedproblem}

In the following, \Sect~\ref{sec:robustStutterBisimulation} introduces the robust stutter bisimulation relation, which is used in \Sect~\ref{sec:quotient} to define a quotient transition system~$\Tilde{G}$ that solves the \ref{problem:abstraction}. \Sect~\ref{sec:algorithm} describes an algorithm to compute the needed robust stutter bisimulation relation. Finally, assuming a controller~$\Tilde{C}$ has been synthesised for the abstraction, \Sect~\ref{sec:concreteController} describes the construction of a concrete controller~$C$ and solves the \ref{problem:construction}.

\section{Robust stutter bisimulation}
\label{sec:robustStutterBisimulation}

This section defines the \emph{robust stutter bisimulation} relation,
which identifies two transition systems as equivalent if the same
\LTLnn\ formulas can be enforced on both systems. This is the crucial
criterion to identify solutions to the \ref{problem:abstraction},
ensuring that any transition system~$\Tilde{G}$ that is robust stutter
bisimilar to the concrete system~$G$ can be used as its abstraction, and
thus a solution to the \ref{problem:abstraction}. However, the relation will first be defined over states of one transition system.

The goal is to classify two states of a transition system as robust stutter bisimilar when the same \LTLnn\ formulas can be enforced from both.
Due to the absence of the ``next'' operator, this condition can be
simplified by considering only stutter step formulas, or more precisely
superblock step formulas, as will be clear later. Given an equivalence class~$P$, a superblock step
formula $P \until T$ or $P \weakuntil T$ expresses that the system can
transition to other equivalence classes with or without the guarantee that this transition occurs eventually. The condition can be further simplified to superblock step formulas that can be enforced by \memoryless\ controllers because of \Propn~\ref{prop:EC:any}.
Controllers for arbitrary \LTLnn\ formulas can be constructed by combining \memoryless\ controllers enforcing stutter step formulas from different states.

\begin{definition}
\label{def:robust:stutter:bisimulation}
Let $G = \tsystem$ be a transition system. An equivalence relation $\RR \subseteq S \times S$ is a \emph{robust stutter bisimulation} on~$G$ if for all $(s_1, s_2) \in \mathcal{R}$ the following conditions hold:
\begin{enumerate}
    \item $L(s_1) = L(s_2)$ \label{it:rsb:label}
    \item \label{it:rsb:equivalent}
    for every \RR-superblock step formula $\psi$ from~$[s_1]_\RR$ such that $s_1 \in \ECS[G]{\psi}$ it also holds that $s_2 \in \ECS[G]{\psi}$.
\end{enumerate}
\end{definition}

In other words, a given relation~\RR\ is a robust stutter bisimulation if \ref{it:rsb:label} it preserves the labels of states, and \ref{it:rsb:equivalent} precisely the same superblock step formulas can be enforced from equivalent states.
Although condition~\ref{it:rsb:equivalent} is not inherently symmetric, symmetry is ensured as \RR\ is required to be an equivalence relation. By \propn~\ref{prop:EC:any} it is enough to define enforceability of a formula~$\psi$ in a state~$s$ through $\ECS[G]\psi$, i.e., using \memoryless\ controllers.
\Defn~\ref{def:robust:stutter:bisimulation} can be considered as a generalisation of \emph{divergent stutter bisimulation} \citep{BaiKat:08} to transition systems with disturbances.

\begin{figure}
	\centering
	\begin{tikzpicture}[>=latex, initial text={}, initial where=above, node distance=8mm and 7.5mm]
		\node [state] (a1) {$a_1:a$};
		\node [state] (a2) [right=of a1] {$a_2:a$};
		\node [draw=black, dashed, fit=(a1) (a2), label=left:$\Aq_1$] {};
		\node [state] (a3) [below=of a2] {$a_3:a$};
		\node [state] (a4) at (a1 |- a3) {$a_4:a$};
		\node [draw=black, dashed, fit=(a3) (a4), label=left:$\Aq_2$] {};
		
		\node [state] (b4) [right=8mm of a3] {$b_4:b$};
		\node [state] (b3) [right=of b4] {$b_3:b$};
		\node [state] (b1) at (a2 -| b4) {$b_1:b$};
		\node [state] (b2) at (b1 -| b3) {$b_2:b$};
		\node [draw=black, dashed, fit=(b3) (b4), label=right:$\Bq_{2}$] {};
		\node [draw=black, dashed, fit=(b1) (b2), label=right:$\Bq_1$] {};
		
		\node [state] (c1) [below=of b4] {$c_1:c$};
		\node [state] (c2) at (b3 |- c1) {$c_2:c$};
		\node [draw=black, dashed, fit=(c1) (c2), label=right:$\Cq$] {};
		
		\node [state] (d2) at (a3 |- c1) {$d_2:d$};
		\node [state, initial] (d1) at (a4 |- d2) {$d_1:d$};
		\node [draw=black, dashed, fit=(d1) (d2), label=left:$\Dq$] {};
		
		\path [->] (a1) edge [bend right] node [above] {$\sigma_1$} (a2);
		\path [->] (a2) edge [bend right] node [below] {$\sigma_1$} (a1);
		
		\path [->] (a4) edge [bend right] node [above] {$\sigma_1$} (a3);
		\path [->] (a3) edge [bend right] node [below] {$\sigma_1$} (a4);
		
		\path [->] (a3) edge [bend right] node [above] {$\sigma_2$} (b4);
		
		\path [->] (b2) edge node [below] {$\sigma_1$} (b1);
		
		\path [->] (b1) edge node [above] {$\sigma_1$} (a2);
		\path [->] (b1) edge node [above=1mm] {$\sigma_1$} (a3);
		
		\path [->] (b3) edge node [above] {$\sigma_1$} (b4);
		
		\path [->] (b3) edge node [left] {$\sigma_2$} (b2);
		\path [->] (b4) edge node [right] {$\sigma_2$} (b1);
		
		\path [->] (b4) edge [bend right] node [below] {$\sigma_1$} (a3);
		\path [->] (b4) edge [bend right] node [left] {$\sigma_1$} (c1);
		
		\path [->] (c1) edge [bend right] node [right] {$\sigma_1$} (b4);
		\path [->] (c2) edge node [left] {$\sigma_1$} (b3);
		
		\path [->] (d1) edge [bend right] node [above] {$\sigma_1$} (d2);
		\path [->] (d2) edge [bend right] node [below] {$\sigma_1$} (d1);
		
		\path [->] (d2) edge node [above] {$\sigma_1$} (c1);
		\path [->] (d2) edge node [left] {$\sigma_2$} (a3);
	\end{tikzpicture}
	\caption{A transition system~$G$ with a robust stutter bisimulation~\RR.}
	\label{fig:quotient:system:concrete}
\end{figure}
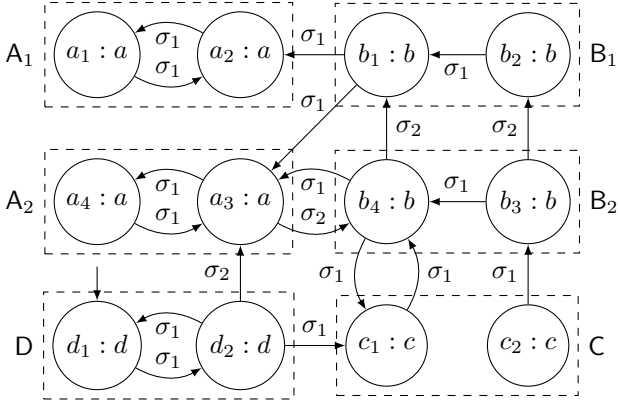

\begin{example}\label{ex:robust:stutter:bisimulation}
  Consider the transition system $G = \tsystem$ in \fig~\ref{fig:quotient:system:concrete} and the equivalence relation $\RR$ that divides the state space into six equivalence classes $S/\RR = \{ \Aq_1, \Aq_2, \Bq_1, \Bq_2, \Cq, \Dq \}$ where $\Aq_1 = \{ a_1\bcom a_2 \}$, $\Aq_2 = \{ a_3\bcom a_4 \}$, $\Bq_1 = \{ b_1, b_2 \}$, $\Bq_2 = \{ b_3, b_4 \}$, $\Cq = \{c_1, c_2\}$, and $\Dq = \{d_1, d_2\}$. Clearly, \RR\ fulfils condition~\ref{it:rsb:label} in \defn~\ref{def:robust:stutter:bisimulation} since the states in each class are all labelled by the same label from $\AP = \{ a, b, c, d \}$. By enumerating all \dlfree \memoryless\ controllers and all \RR-superblock step formulas it can also be verified that condition~\ref{it:rsb:equivalent} is fulfilled by \RR. For instance, the \dlfree \memoryless\ controller $\overlinit{C}(s) = \{ \sigma_1 \}$ enforces the \RR-superblock step formula $\Dq \weakuntil (\Cq \cup \Aq_2)$ from all states $d_i \in \Dq$, it enforces $\Aq_1 \weakuntil \emptyset$ from all $a_i \in \Aq_1$. The formula $\Aq_1 \weakuntil \emptyset$ is equivalent to $\globally\, \Aq_1$ and expresses that the system remains in the equivalence class~$\Aq_1$ indefinitely.
  
  Consider instead the equivalence relation~$\RR'$ where $S/\RR' = \{ \Aq_1, \Aq_2, \Bq_1 \cup \Bq_2, \Cq, \Dq \}$.
  This relation does not satisfy condition~\ref{it:rsb:equivalent}, which can be seen as follows. The above controller $\overlinit{C}$ enforces the $\RR'$-superblock step formula $\psi \equiv (\Bq_1 \cup \Bq_2) \until (\Aq_2 \cup \Cq)$ from the states $b_3$ and~$b_4$, so it follows that $b_3, b_4 \in \ECS{\psi}$. On the other hand, to be \dlfree, any \memoryless\ controller for~$G$ must enable $\sigma_1$ from~$b_1$, enabling the transition from $b_1$ to $a_2 \notin \Aq_2 \cup \Cq$, which means that $\psi$ cannot be enforced from~$b_1$, i.e., $b_1 \notin \ECS{\psi}$. Thus, condition \ref{it:rsb:equivalent} of \defn~\ref{def:robust:stutter:bisimulation} is violated for $(b_1,b_4) \in \RR'$ and $\RR'$ is not a robust stutter bisimulation.\qed
\end{example}

While \defn~\ref{def:robust:stutter:bisimulation} identifies whether a given relation is a robust stutter bisimulation, it is also of interest to define two states as robust stutter bisimilar. This is the case if there exists a robust stutter bisimulation where they are equivalent.

\begin{definition}
\label{def:inbisim}
Let $G$ be a transition system. States $s_1, s_2 \in S$ are \emph{robust 
stutter bisimilar}, denoted $s_1 \bisim
s_2$, if there exists a robust stutter bisimulation relation $\RR$ on~$G$
with $(s_1, s_2) \in \RR$.
\end{definition}

\appref{This defines a relation $\inbisim \subseteq S \times S$ that includes every robust stutter bisimulation. As shown in \app~\ref{app:coarsest}, this relation is again a robust stutter bisimulation. Therefore \inbisim\ is also called the \emph{coarsest robust stutter bisimulation} on~$G$.}

To identify two transition systems as equivalent, the definition of a robust stutter bisimulation on a single transition system is lifted to a relation between transition systems. A way to achieve this is by considering the \emph{union} transition system.

\begin{definition}
Let $G_1 = \tsystem[1]$ and $G_2 = \tsystem[2]$ be two transition systems with $S_1 \cap S_2 = \emptyset$. The \emph{union} of $G_1$ and~$G_2$ is
\iffull
\begin{equation*}
  G_1 \cup G_2 = \tsystem \ ,
\end{equation*}
\else
$G_1 \cup G_2 = \tsystem$,
\fi
where $S = S_1 \cup S_2$, $\Sigma = \Sigma_1 \cup \Sigma_2$, $\delta = \delta_1 \cup \delta_2$, $S\init = S\init_1 \cup S\init_2$, $\AP = \AP_1 \cup \AP_2$, $L(s) = L_1(s)$ for $s \in S_1$, and $L(s) = L_2(s)$ for $s \in S_2$.
\end{definition}

\begin{definition}\label{def:robust:stutter:between}
Let $G_1 = \tsystem[1]$ and $G_2 = \tsystem[2]$ be two transition systems with $S_1 \cap S_2 = \emptyset$. A relation $\RR \subseteq S_1 \times S_2$ is a robust stutter bisimulation between $G_1$ and~$G_2$, if \RR\ is a robust stutter bisimulation on $G_1 \cup G_2$ and the following conditions hold:
\begin{align*}
  \forall s_1 \in S_1\init \ldotp \exists s_2 \in S_2\init \ldotp (s_1, s_2) \in \mathcal{R} \ , & \quad\text{and} \\
  \forall s_2 \in S_2\init \ldotp \exists s_1 \in S_1\init \ldotp (s_1, s_2) \in \mathcal{R} \ .
\end{align*}
$G_1$ and~$G_2$ are robust stutter bisimilar, $G_1 \approx G_2$, if there exists a
robust stutter bisimulation between $G_1$ and~$G_2$.
\end{definition}

\section{Quotient transition system}
\label{sec:quotient}

Given a robust stutter bisimulation~\RR\ on the states of a transition system, a solution to
the \ref{problem:abstraction} still requires to construct an equivalent and hopefully smaller transition system. This is typically done by constructing a \emph{quotient} transition system whose states are the equivalence classes induced by the equivalence relation~\RR. Standard quotient constructions \citep{BaiKat:08} fail to ensure robust stutter bisimilarity between the original and quotient systems because they do not distinguish the different stutter step formulas. This is solved by the following alternative definition.

\begin{definition}\label{def:quotient:system}
Let $G = \tsystem$ be a transition system, and let \RR\ be a robust stutter bisimulation on~$G$. The \emph{robust stutter bisimulation quotient} is
\begin{equation}
  G/\RR = \langle S/\RR, \Sigma_{\RR}, \delta_{\RR}, S\init_{\RR}, \AP, L_{\RR} \rangle 
\end{equation}
where
\begin{align}
\Sigma_{\RR} &= \LongSet{6em}{$\psi$ $\mid$ $\psi$ is
  an \RR-superblock step formula from $P \in S/\RR$ such that $P \subseteq
  \ECS[G]\psi$}; \label{eq:quotient:Sigma} \\
\delta_\RR &= \LongSet{6em}{$(P, \psi, T') \in S/\RR \times \Sigma_\RR
  \times S/\RR$ $\mid$ $\psi \equiv P \anyuntil T$ such that $T' \in S/\RR$
  and $T' \subseteq T$}{$\cup$} \label{eq:quotient:delta} \\
&\hphantom{{}={}}  \LongSet{6em}{$(P, \psi, P) \in S/\RR \times \Sigma_\RR \times S/\RR$ $\mid$ $\psi \equiv P \weakuntil T$ for some $T \in \SB(\RR)$}; \nonumber \\
S\init_{\RR} &= \{\, [s\init]_{\RR} \mid s\init \in S\init \,\} \ ; \label{eq:quotient:Sinit}
\end{align}
and $L_{\RR}([s]_{\RR}) = L(s)$ for all $s \in S$.
\end{definition}

The transition labels~$\Sigma_\RR$ are \RR-superblock step formulas of the form $P \anyuntil T$ that can be enforced from the source set~$P$ in the concrete system~$G$. The quotient transitions~$\delta_\RR$ are then defined based on the formulas these labels represent. A label $P \anyuntil T$, where $P$ is an equivalence class and $T$ is a superblock, is attached to transitions from $P$ to each equivalence class (each abstract state) that constitutes~$T$, representing the controller's ability to force the system to~$T$ without being able to determine the precise concrete state entered. A label $P \weakuntil T$ is additionally attached to a \selfloop\ transition on~$P$, representing the fact that the system may also stay in~$P$.

\begin{figure}%[t!]
\centering
\begin{tikzpicture}[>=latex, initial text={}, initial where=left]
	\node [state] (A1) {$\Aq_1\!:a$};
	\node [state] (B1) [right=25mm of A1] {$\Bq_1\!:b$};
	\node [state] (A2) [below=of A1] {$\Aq_2\!:a$};
	\node [state] (B2) at (A2 -| B1) {$\Bq_2\!:b$};
	\node [state] (C) [below=of B2] {$\Cq:c$};
	\node [state, initial] (D) at (A2 |- C) {$\Dq:d$};
	
	\path [->] (A1) edge [loop left] node [left] {\small $\Aq_1 \weakuntil \emptyset$} (A1);
	\path [->] (B1) edge node [below] {{\small $\Bq_1 \until (\Aq_1 \cup \Aq_2)$}} (A1);
	\path [->] (B1) edge node [left=2mm] {\small $\Bq_1 \until (\Aq_1 \cup \Aq_2)$} (A2);
	\path [->] (A2) edge [loop left] node [left] {\small $\Aq_2 \weakuntil \emptyset$} (A2);
	\path [->] (A2) edge [bend right] node [above] {\small $\Aq_2 \until \Bq_2$} (B2);
	\path [->] (B2) edge node [right] {\small $\Bq_2 \until \Bq_1$} (B1);
	\path [->] (B2) edge node [above] {\small\quad $\Bq_2 \until (\Aq_2 \cup \Cq)$} (A2);
	\path [->] (B2) edge [bend right] node [below left] {\small $\Bq_2 \until (\Aq_2 \cup \Cq)$} (C);
	\path [->] (C) edge [bend right] node [right] {\small $\Cq \until \Bq_2$} (B2);
	\path [->] (D) edge [in=110, out=140, looseness=8] node [left=1mm] {\small $\Dq \weakuntil \Cq$} (D);
	\path [->] (D) edge node [below] {\small $\Dq \weakuntil \Cq$} (C);
	\path [->] (D) edge node [pos=0.6, right] {\small $\Dq \until \Aq_2$} (A2);
	
	\path [->] (A1) edge [in=120, out=150, looseness=8, dashed] node [pos=0.4, left] {\small $\Aq_1 \weakuntil \Bq_1$} (A1);
	\path [->] (A1) edge [bend left, looseness=0.7, dashed] node [above] {\small $\Aq_1 \weakuntil \Bq_1$} (B1);
\end{tikzpicture}
\caption{A part of the quotient system $G/\RR$.}
\label{fig:quotient:system:abstract}
\end{figure}
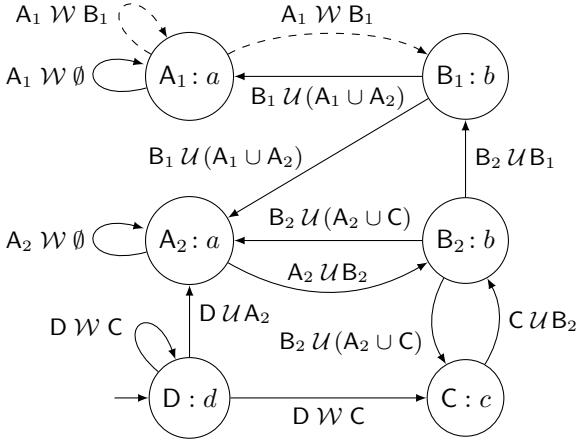

\begin{example}\label{ex:quotient:system}
	Consider again the transition system~$G$ and the robust stutter bisimulation~\RR\ in \fig~\ref{fig:quotient:system:concrete}. Its quotient system~$G/\RR$ is given in \fig~\ref{fig:quotient:system:abstract}. The states are the six equivalence classes $\Aq_1$, $\Aq_2$, $\Bq_1$, $\Bq_2$, $\Cq$, and~$\Dq$, and their labels $a$, $b$, $c$, and~$d$ match the corresponding concrete states.
	The transition labels in $\Sigma_\RR$ are constructed based on the existence of \memoryless\ controllers. As seen in \Examp~\ref{ex:robust:stutter:bisimulation}, the \memoryless\ controller $\overlinit{C}(s) = \{\sigma_1\}$ enforces different \RR-superblock step formulas from different equivalence classes. For instance, $\langle \overlinit{C}/G, a_i\rangle \vDash \Aq_1 \weakuntil \emptyset$ for $a_i \in \Aq_1$, so $\Aq_1 \subseteq \ECS[G]{\Aq_1 \weakuntil \emptyset}$. Clearly, $\Aq_1 \in S/\RR$, so $\psi_1 \equiv \Aq_1 \weakuntil \emptyset \in \Sigma_\RR$ by~\eqref{eq:quotient:Sigma}. Other formulas such as $\Aq_1 \weakuntil \Bq_1$ are implied by~$\psi_1$ and can also be enforced, so a total of 31 formulas $\Aq_1 \weakuntil T$ with $T \subseteq \{\Aq_2,\Bq_1,\Bq_2,\Cq,\Dq\}$ are included in $\Sigma_\RR$ (not all shown in the figure). According to~\eqref{eq:quotient:delta}, the label $\Aq_1 \weakuntil \emptyset$ is attached to the \selfloop\ transition $\Aq_1 \to \Aq_1$, and $\Aq_1 \weakuntil \Bq_1$ is attached to $\Aq_1 \to \Aq_1$ and $\Aq_1 \to \Bq_1$, etc. On the other hand, no stutter step formula $\Aq_1 \until T$ is enforceable from~$\Aq_1$, so those formulas do not appear in~$\Sigma_\RR$. 
	From state~$d_2$, it is not possible to force a transition to only one of $d_1$ or~$c_1$, but it holds that $\langle \overlinit{C}/G, d_i \rangle \vDash \Dq \weakuntil \Cq$. Hence, $\Dq \weakuntil \Cq  \in \Sigma_\RR$ along with implied formulas.
	\Fig~\ref{fig:quotient:system:concrete} shows solid arrows for all transitions corresponding to the labels formed by the strongest \RR-superblock step formulas (in terms of implication) that can be enforced.
	\qed
\end{example}

The following theorem confirms that the quotient $G/\RR$ of a concrete transition system~$G$ with respect to a robust stutter bisimulation~\RR\ is robust stutter bisimilar to the concrete system~$G$. \appref{The proof is given in \app~\ref{app:quotient}.}

\begin{theorem}\label{thm:quotient:robust:stutter:bisimilar}
Let $G$ be a transition system, and let \RR\ be a robust stutter bisimulation on~$G$. Then $G \bisim G/\RR$.
\end{theorem}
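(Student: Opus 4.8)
The plan is to exhibit an explicit robust stutter bisimulation between $G$ and $G/\RR$ in the sense of Definition~\ref{def:robust:stutter:between}, namely the relation $\RR' = \{\, (s, [s]_\RR) \mid s \in S \,\} \subseteq S \times (S/\RR)$. The two initial-state conditions of Definition~\ref{def:robust:stutter:between} follow at once from~\eqref{eq:quotient:Sinit}: for $s\init \in S\init$ we have $(s\init, [s\init]_\RR) \in \RR'$ with $[s\init]_\RR \in S\init_\RR$, and conversely every $P \in S\init_\RR$ equals $[s\init]_\RR$ for some $s\init \in S\init$. It therefore remains to show that $\RR'$ is a robust stutter bisimulation on the union $G \cup (G/\RR)$.

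To this end I would first set up the bookkeeping. Extend $\RR'$ to the equivalence relation $\hat\RR$ on the state set of $G \cup (G/\RR)$ whose equivalence classes are exactly the sets $[s]_\RR \cup \{[s]_\RR\}$, i.e.\ each concrete equivalence class together with the single quotient state it induces. Then the superblocks of $\hat\RR$ are in bijection with those of $\RR$ via $U \mapsto U \cup \{\, [u]_\RR \mid u \in U \,\}$, and every $\hat\RR$-superblock step formula from a class $[x]_{\hat\RR}$ splits into a \emph{concrete part} (an $\RR$-superblock step formula from $[x]_\RR$, over states of $G$) and an \emph{abstract part} (the corresponding superblock step formula in $G/\RR$ from the state $[x]_\RR$). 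Condition~\ref{it:rsb:label} of Definition~\ref{def:robust:stutter:bisimulation} is then immediate, since $L_\RR([s]_\RR) = L(s)$ and $\RR$ preserves labels.

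For condition~\ref{it:rsb:equivalent}, fix $(x_1, x_2) \in \hat\RR$ and an $\hat\RR$-superblock step formula $\psi$ from $[x_1]_{\hat\RR}$ with $x_1 \in \ECS[G \cup G/\RR]{\psi}$; the goal is $x_2 \in \ECS[G \cup G/\RR]{\psi}$. Three cases arise. \emph{(i)}~If $x_1, x_2$ are both states of $G$, then, since the union has no transitions out of $S$ into $S/\RR$, a concrete state lies in $\ECS[G \cup G/\RR]{\psi}$ iff it lies in $\ECS[G]{\psi|_S}$ for the concrete part $\psi|_S$ of $\psi$, which is an $\RR$-superblock step formula from $[x_1]_\RR$; applying condition~\ref{it:rsb:equivalent} for $\RR$ (a robust stutter bisimulation on $G$) transfers enforceability from $x_1$ to $x_2$. \emph{(ii)}~If $x_1, x_2$ are the same quotient state, there is nothing to prove. \emph{(iii)}~The decisive case is $x_1 = s \in S$, $x_2 = [s]_\RR$ (and its mirror image): writing $P = [s]_\RR$ and letting $P \anyuntil T$ and $\{P\} \anyuntil \hat T$ be the concrete and abstract parts of $\psi$ (so $\hat T$ is the set of $\RR$-classes making up the $\RR$-superblock $T$), one must prove that $s \in \ECS[G]{P \anyuntil T}$ iff $P \in \ECS[G/\RR]{\{P\} \anyuntil \hat T}$. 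For the forward direction: if $s$ enforces $P \anyuntil T$ then, since $\RR$ is a robust stutter bisimulation and $s \in P$, so does every state of $P$, hence $P \subseteq \ECS[G]{P \anyuntil T}$ and $P \anyuntil T \in \Sigma_\RR$ by~\eqref{eq:quotient:Sigma}; by~\eqref{eq:quotient:delta} the memoryless quotient controller enabling this label from $P$ moves in one step only into classes of $\hat T$, plus --- when $\anyuntil$ is $\weakuntil$ --- it may take the self-loop at $P$, so it enforces $\{P\} \anyuntil \hat T$. Conversely, a deadlock-free memoryless controller on $G/\RR$ that enforces $\{P\} \anyuntil \hat T$ from $P$ must, by deadlock-freedom and~\eqref{eq:quotient:delta}, enable from $P$ some label $\psi_1 \equiv P \anyuntil T_1 \in \Sigma_\RR$; moreover every label it enables from $P$ has target superblock contained in $T$ (otherwise one controlled step leaves $\{P\} \cup \hat T$), and if $\anyuntil$ is the strong until then every such label is itself a strong until (otherwise a self-loop lets a controlled path stay in $P$ forever). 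Since $\psi_1 \in \Sigma_\RR$ gives $s \in P \subseteq \ECS[G]{\psi_1}$ by~\eqref{eq:quotient:Sigma}, and $\psi_1$ logically entails $P \anyuntil T$ (as $T_1 \subseteq T$, and strong until entails weak until), it follows that $s \in \ECS[G]{P \anyuntil T}$.

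The main obstacle is exactly case~\emph{(iii)}: verifying that the quotient transition relation~\eqref{eq:quotient:delta} --- the class-wise fan-out of a label $P \anyuntil T$ to every $\RR$-class inside $T$, together with the extra self-loop attached to the weak-until labels --- captures enforceability of superblock step formulas precisely, in both directions, with the $\until$/$\weakuntil$ distinction tracked faithfully. A secondary, routine obstacle is the treatment of deadlock-freedom when transporting controllers between $G$ and $G/\RR$ --- in particular a small lemma that $G/\RR$ is deadlock-free, so that the ``arbitrary deadlock-free controls elsewhere'' invoked above actually exist --- and the careful set-up of the union and the superblock bijection so that the reductions in cases \emph{(i)}--\emph{(iii)} are rigorous. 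Throughout, Proposition~\ref{prop:EC:any} is what licenses working only with memoryless controllers and the sets $\ECS[\cdot]{\cdot}$ rather than with general controllers.
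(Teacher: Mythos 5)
Your proposal is correct and follows essentially the same route as the paper's proof: the same extension $\hat\RR$ of $\RR$ to the union $G \cup (G/\RR)$, the same case split on whether the related states are concrete or abstract, the same transfer of enforceability via $\Sigma_\RR$ and~\eqref{eq:quotient:delta} in both directions of the mixed case, and the same closing check of the initial-state conditions. The ``bookkeeping'' you defer (reducing $\ECS$ over the union to $\ECS$ over each component, and the restriction/superblock correspondence) is exactly what the paper isolates as Lemmas~\ref{lem:Gunion} and~\ref{lem:restrict}.
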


\section{Computing the quotient partition}
\label{sec:algorithm}

As shown in \sects~\ref{sec:robustStutterBisimulation}
and~\ref{sec:quotient}, the \ref{problem:abstraction} can be solved  by constructing a quotient with respect to a robust
stutter bisimulation~\RR. While such a relation~\RR\ always exists (e.g.,
the identity relation), it is not immediately clear how to obtain a robust
stutter bisimulation that results in a small quotient. This section
presents an algorithm to compute the coarsest robust stutter bisimulation
on a given transition system, for which the quotient has the smallest
number of states possible.
The idea is to compute the desired relation by \emph{partition refinement}
\citep{PaiTar:87}, where an initial partition is repeatedly modified by
splitting equivalence classes until a robust stutter bisimulation is found.

\begin{definition}\label{def:splitter}
	Let $G=\tsystem$ be a transition system, and let \RR\ be an equivalence relation on~$S$. An \RR-superblock step formula $\psi \equiv P \anyuntil T$ with $P \in S/\RR$ is called a \emph{splitter} of \RR\ if $\emptyset \neq P \cap \ECS{\psi} \neq P$.
\end{definition}

Splitters are formulas that cannot be enforced from all states of an equivalence class. 
The existence of a splitter means that a relation
cannot be a robust stutter bisimulation as it would violate
condition~\ref{it:rsb:equivalent} in
\defn~\ref{def:robust:stutter:bisimulation}.

\def\PropSplitter{%
	Let $G=\tsystem$ be a transition system, and let~\RR\ be an equivalence
	relation on~$S$. Then~\RR\ is a robust stutter bisimulation on~$G$ if and
	only if it satisfies condition~\ref{it:rsb:label} in
	\defn~\ref{def:robust:stutter:bisimulation} and there does not exist any
	splitter of~\RR.}

\begin{proposition}\label{prop:splitter}
	\PropSplitter
\end{proposition}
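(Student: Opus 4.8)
The statement is a biconditional, and the plan is to prove the two directions separately, with most of the work going into the ``if'' direction. For the ``only if'' direction, suppose \RR\ is a robust stutter bisimulation. Condition~\ref{it:rsb:label} holds by \defn~\ref{def:robust:stutter:bisimulation} directly. For the non-existence of splitters, I would argue by contradiction: if $\psi \equiv P \anyuntil T$ is a splitter with $P \in S/\RR$, then $\emptyset \neq P \cap \ECS{\psi} \neq P$, so there exist $s_1 \in P$ with $s_1 \in \ECS[G]{\psi}$ and $s_2 \in P$ with $s_2 \notin \ECS[G]{\psi}$. Since $s_1, s_2$ lie in the same equivalence class $P = [s_1]_\RR$, we have $(s_1, s_2) \in \RR$, and $\psi$ is by assumption an \RR-superblock step formula from $[s_1]_\RR$. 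Then condition~\ref{it:rsb:equivalent} forces $s_2 \in \ECS[G]{\psi}$, a contradiction.

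For the ``if'' direction, assume \RR\ satisfies condition~\ref{it:rsb:label} and has no splitter; I must show \RR\ is a robust stutter bisimulation, i.e., verify condition~\ref{it:rsb:equivalent}. Take any $(s_1, s_2) \in \RR$ and any \RR-superblock step formula $\psi$ from $[s_1]_\RR$ with $s_1 \in \ECS[G]{\psi}$; I need $s_2 \in \ECS[G]{\psi}$. Write $\psi \equiv P \anyuntil T$ with $P = [s_1]_\RR \in S/\RR$. Since $(s_1,s_2) \in \RR$ and \RR\ is an equivalence relation, $s_2 \in [s_1]_\RR = P$ as well, and $\psi$ is also an \RR-superblock step formula from $[s_2]_\RR = P$. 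Now $s_1 \in P \cap \ECS[G]{\psi}$, so this intersection is nonempty; since $\psi$ is not a splitter, the only remaining possibility from \defn~\ref{def:splitter} is $P \cap \ECS[G]{\psi} = P$, i.e., $P \subseteq \ECS[G]{\psi}$. In particular $s_2 \in P \subseteq \ECS[G]{\psi}$, as required.

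The main subtlety to get right is matching the quantifier structure in \defn~\ref{def:robust:stutter:bisimulation}\ref{it:rsb:equivalent} against \defn~\ref{def:splitter}: the bisimulation condition quantifies over superblock step formulas ``from $[s_1]_\RR$'' and over pairs in \RR, whereas the splitter condition is stated for a fixed equivalence class $P \in S/\RR$. The key observation making the two line up is that $(s_1, s_2) \in \RR$ implies $[s_1]_\RR = [s_2]_\RR$, so a formula from $[s_1]_\RR$ is automatically a formula from $[s_2]_\RR$, and both states belong to the same class~$P$. Beyond that bookkeeping, the argument is essentially just unwinding the definitions; no results from earlier sections (not even \propn~\ref{prop:EC:any}) are needed, since both definitions are already phrased in terms of $\ECS{\cdot}$. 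I do not anticipate a genuine obstacle here—the proof is short—so the only risk is a sloppy handling of the ``$\emptyset \neq P \cap \ECS{\psi} \neq P$'' trichotomy, which I would spell out explicitly as ``either the intersection is empty, or it is all of~$P$, or \RR\ has a splitter.''
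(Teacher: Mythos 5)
Your proposal is correct and follows essentially the same argument as the paper's proof: the "only if" direction derives a contradiction with condition~\ref{it:rsb:equivalent} from the two witnesses a splitter would provide, and the "if" direction uses non-emptiness of $P \cap \ECS{\psi}$ plus the absence of splitters to conclude $P \cap \ECS{\psi} = P$. The quantifier bookkeeping you flag ($[s_1]_\RR = [s_2]_\RR$) is handled the same way in the paper, and no external results are needed in either version.
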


\appref{\App~\ref{app:algorithm} contains proofs of all propositions in this
section.}
\Propn~\ref{prop:splitter} gives a stopping condition; once no
splitters can be found, an equivalence relation is a robust stutter
bisimulation. If there is a splitter, the relation can be refined such that it
no longer has this splitter.

\begin{definition}
	\label{def:refine}
	Let $G=\tsystem$ be a transition system, let \RR\ be an equivalence relation on~$S$, and let $\psi \equiv P \anyuntil T$ be a splitter of \RR. The \emph{refinement} of \RR\ by~$\psi$ is
	\begin{equation}
		\label{eq:refine}
		\Refine(\RR, \psi) = \RR \setminus (D \cup D^{-1}) \ ,
	\end{equation}
	where $D = (P \cap \ECS{\psi}) \times (P \setminus \ECS{\psi})$ and $D \cup D^{-1}$ is the symmetric closure of~$D$.
\end{definition}

The refinement of~\RR\ by~$\psi$ is a new equivalence relation where the
equivalence class~$P$ is replaced by two smaller equivalence classes $P
\cap \ECS\psi$ and $P \setminus \ECS\psi$, thus ensuring that the
formula~$\psi$ can either be enforced or not from two equivalent
states. This refinement always results in a finer relation, $\Refine(\RR,
\psi) \subseteq \RR$. Moreover the following proposition shows that, if a
relation is at least as coarse as the coarsest robust stutter bisimulation~\inbisim,
then its refinement by a splitter retains this property.

\def\PropRefineCoarseness{%
	Let $G=\tsystem$ be a transition system, let \RR\ be an equivalence
	relation on~$S$, and let $\psi \equiv P \anyuntil T$ be a splitter
	of~\RR. If $\inbisim \subseteq \RR$ then also $\inbisim \subseteq
	\Refine(\RR,\psi)$.}

\begin{proposition}\label{prop:refine:coarseness}
	\PropRefineCoarseness
\end{proposition}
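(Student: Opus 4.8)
The plan is to work directly with the definition of $\Refine$ in \defn~\ref{def:refine}. Assume $\inbisim \subseteq \RR$; we must show $\inbisim \subseteq \Refine(\RR,\psi) = \RR \setminus (D \cup D^{-1})$, where $D = (P \cap \ECS{\psi}) \times (P \setminus \ECS{\psi})$. Since $\inbisim \subseteq \RR$ already holds, it suffices to prove $\inbisim \cap D = \emptyset$ (then $\inbisim \cap D^{-1} = \emptyset$ follows by symmetry of $\inbisim$). A pair in $\inbisim \cap D$ is a pair $(s_1, s_2) \in \inbisim$ with $s_1, s_2 \in P$, $s_1 \in \ECS{\psi}$ and $s_2 \notin \ECS{\psi}$, so the whole statement reduces to showing that $\ECS{\psi}$ is closed under $\inbisim$. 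Now $\inbisim$, the coarsest robust stutter bisimulation on $G$, is itself a robust stutter bisimulation, and $\inbisim \subseteq \RR$ gives $P \in S/\RR \subseteq \SB(\RR) \subseteq \SB(\inbisim)$ and likewise $T \in \SB(\inbisim)$, where also $P \cap T = \emptyset$ as $\psi$ is a stutter step formula. Hence it is enough to establish the general closure fact: \emph{if $\RR_0$ is a robust stutter bisimulation and $P_0, T_0 \in \SB(\RR_0)$ with $P_0 \cap T_0 = \emptyset$, then $\ECS{P_0 \anyuntil T_0} \in \SB(\RR_0)$.}

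For this fact, note first that a state outside $P_0 \cup T_0$ is never in $\ECS{P_0 \anyuntil T_0}$ (a path from it violates $P_0 \anyuntil T_0$ already at the first position), while every state of $T_0$ that carries any \dlfree \memoryless\ controller is in $\ECS{P_0 \anyuntil T_0}$; since $T_0$ is itself a superblock, it remains to show that $\ECS{P_0\anyuntil T_0} \cap P_0$ is a union of $\RR_0$-classes. So fix an $\RR_0$-class $Q \subseteq P_0$ meeting $\ECS{P_0\anyuntil T_0}$, witnessed by $t \in Q$ with a \dlfree \memoryless\ controller $\overlinit{C}$ enforcing $P_0 \anyuntil T_0$ from $t$. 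Using that $\overlinit{C}$ is \memoryless, a gluing argument shows that every state in $P_0$ reachable from $t$ under $\overlinit{C}$ is again in $\ECS{P_0 \anyuntil T_0}$, hence its $\RR_0$-class lies in $N := \bigcup\{\, Q' \in S/\RR_0 \mid Q' \subseteq P_0,\ Q' \cap \ECS{P_0\anyuntil T_0} \neq \emptyset \,\}$. Consequently $\overlinit{C}$ enforces the $\RR_0$-superblock step formula $Q \anyuntil (T_0 \cup (N \setminus Q))$ from $t$; this is a formula from $[t]_{\RR_0} = Q$, so condition~\ref{it:rsb:equivalent} of \defn~\ref{def:robust:stutter:bisimulation} makes it enforceable from every state of $Q$, and by \Propn~\ref{prop:EC:any} a single \memoryless\ controller $\overlinit{D}_Q$ enforces it from all of $Q$. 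Patching these together into $\overlinit{D}$ with $\overlinit{D}(s) = \overlinit{D}_{[s]_{\RR_0}}(s)$ for $s \in N$ (and any \dlfree choice elsewhere), one checks that $\overlinit{D}$ enforces $P_0 \anyuntil T_0$ from every state of $N$: a $\overlinit{D}$-controlled path stays in its current class until it moves into $T_0$ (and is then done) or into another class of $N$, so it never leaves $P_0$ before reaching $T_0$. This gives $N \subseteq \ECS{P_0\anyuntil T_0}$, and since trivially $N \supseteq \ECS{P_0\anyuntil T_0} \cap P_0$, equality holds, so $\ECS{P_0\anyuntil T_0} \cap P_0$ is a superblock, completing the fact and hence the proposition.

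The step I expect to be the main obstacle is the last one, in the case of strong until (the formula $P_0 \until T_0$ rather than $P_0 \weakuntil T_0$): the patched controller $\overlinit{D}$ only guarantees that a path leaves its current class, not that it makes progress toward $T_0$, so a priori a controlled path could cycle forever among the classes of $N$ and never reach $T_0$, violating $P_0 \until T_0$. I would repair this by replacing the single set $N$ with an ascending, possibly transfinite, chain $\emptyset = N_0 \subseteq N_1 \subseteq \cdots$, where $N_{k+1}$ collects the $\RR_0$-classes $Q \subseteq P_0$ from which $Q \until (T_0 \cup N_k)$ is enforceable; condition~\ref{it:rsb:equivalent} together with \Propn~\ref{prop:EC:any} again make every $N_k$ a superblock carrying a uniform controller, the chain stabilises at $N_\infty = \ECS{P_0 \until T_0} \cap P_0$, and the level index supplies the well-founded measure forcing genuine progress toward $T_0$; the weak-until case needs no such refinement. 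The gluing argument, and checking that each constructed controller is \dlfree, are routine.
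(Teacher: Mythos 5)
Your reduction of the proposition to the closure of $\ECS{\psi}$ under $\inbisim$ is correct and matches the substance of the paper's proof; the difference lies entirely in how that closure is established. The paper obtains it from its Superblock Lemma (\lemm~\ref{lem:superblock}), which is proved by relabelling $G$ with fresh atomic propositions marking $P$ and~$T$ and invoking the full \LTLnn-preservation result (\Thm~\ref{thm:LTLnnPreservation}), i.e., the entire concrete-controller construction. You instead attempt a direct, self-contained proof that $\ECS{P_0 \anyuntil T_0}$ is a superblock of any \rdsb~$\RR_0$ whenever $P_0, T_0 \in \SB(\RR_0)$, using only condition~\ref{it:rsb:equivalent} of \defn~\ref{def:robust:stutter:bisimulation}, \Propn~\ref{prop:EC:any}, and a patching argument. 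For the weak-until case your argument goes through: $N \supseteq \ECS{P_0 \weakuntil T_0} \cap P_0$ is immediate from the definition of~$N$, and the patched controller gives $N \subseteq \ECS{P_0 \weakuntil T_0}$ because a controlled path can only leave $P_0$ by entering~$T_0$.

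The strong-until case, however, has a genuine gap, and it sits at exactly the point that carries the whole proposition. Once you replace $N$ by the stratified chain $(N_k)$, the inclusion $N_\infty \subseteq \ECS{P_0 \until T_0}$ is indeed secured by your well-founded level index; but the inclusion you actually need for closure is the converse one, $\ECS{P_0 \until T_0} \cap P_0 \subseteq N_\infty$ --- every $\RR_0$-class meeting $\ECS{P_0 \until T_0}$ must be swept up at some level~$N_k$. For the unstratified $N$ this containment held by definition; for the chain it is a substantive claim that you assert (``the chain stabilises at $N_\infty = \ECS{P_0 \until T_0} \cap P_0$'') without argument, and your well-founded measure addresses only the soundness direction, not this one. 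The claim is true, but proving it needs its own induction: for instance, assign to each state $x$ reachable from the witness $t$ within $P_0$ under~$\overlinit{C}$ the ordinal stage at which it enters the least fixpoint of \Propn~\ref{prop:EC:U} for the controlled system (these ranks strictly decrease along every $\overlinit{C}$-transition because every controlled path reaches~$T_0$), and show by transfinite induction on the least rank occurring in a class $Q$ that $\overlinit{C}$ enforces $Q \until (T_0 \cup N_\infty)$ from that state, whence $Q \subseteq N_\infty$ by condition~\ref{it:rsb:equivalent} and \Propn~\ref{prop:EC:any}. Without some such argument the strong-until half of your closure fact, and hence the proposition, is not established.
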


Combining the results of \props\ \ref{prop:splitter}
and~\ref{prop:refine:coarseness} leads to
Algorithm~\ref{alg:QuotientPartition}, which computes the coarsest robust
stutter bisimulation for a transition system~$G$. Starting with an initial
equivalence relation~$\RR^0$ defined based on the state labels, which
clearly is at least as coarse as~\inbisim, the algorithm checks for the existence of
splitters. If there are none, the relation is a robust stutter bisimulation
by \Propn~\ref{prop:splitter}, and is returned. Otherwise it is refined
based on \defn~\ref{def:refine}, resulting in a new equivalence relation
still at least as coarse as~\inbisim\ according to \Propn~\ref{prop:refine:coarseness}. If the loop terminates, the result is a robust stutter bisimulation at least as coarse as~\inbisim, which must be equal to~\inbisim.
%If the loop terminates, the result is a robust stutter bisimulation that is equal to~\inbisim.

\IncMargin{1.1em}
\begin{algorithm}[tbp]
	\KwIn{$G = \tsystem$}
	\KwOut{Coarsest robust stutter bisimulation \inbisim}
	\BlankLine
	$\RR^0 \gets \{\, (s,t) \in S \times S \mid L(s) = L(t) \,\}$\;
	$i \gets 0$\;
	\While{\rm there exists a splitter $\psi$ of $\RR^i$\label{l:loop}}{
		$\RR^{i+1} \gets \Refine(\RR^i,\psi)$\;
		$i \gets i+1$\;
	}
	\Return{$\RR^i$}\;
	\caption{Coarsest robust stutter bisimulation\kern-1em}
	\label{alg:QuotientPartition}
\end{algorithm}

\def\PropAlgorithm{%
	If Algorithm~\ref{alg:QuotientPartition} terminates on an input transition
	system~$G$, then the result is the coarsest robust stutter
	bisimulation \inbisim\ on~$G$.}

\begin{proposition}\label{prop:algorithm}
	\PropAlgorithm
\end{proposition}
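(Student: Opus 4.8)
The plan is to show that the sequence of relations $\RR^0 \supseteq \RR^1 \supseteq \cdots$ produced by the algorithm maintains the invariant $\inbisim \subseteq \RR^i$, and that upon termination the returned relation $\RR^i$ is a robust stutter bisimulation; since $\inbisim$ is the coarsest such relation, the two inclusions force equality $\RR^i = \inbisim$. First I would verify the base case: the initial relation $\RR^0 = \{\,(s,t) \mid L(s) = L(t)\,\}$ is clearly an equivalence relation, and since every robust stutter bisimulation satisfies condition~\ref{it:rsb:label} of \defn~\ref{def:robust:stutter:bisimulation}, every such relation — in particular $\inbisim$ — is contained in $\RR^0$. Then, for the inductive step, whenever the loop on line~\ref{l:loop} detects a splitter $\psi$ of $\RR^i$, \Propn~\ref{prop:refine:coarseness} applies directly: from $\inbisim \subseteq \RR^i$ it yields $\inbisim \subseteq \Refine(\RR^i,\psi) = \RR^{i+1}$. (One should also note that $\Refine(\RR^i,\psi)$ is again an equivalence relation, as remarked after \defn~\ref{def:refine}, so the hypotheses of \Propn~\ref{prop:splitter} and \Propn~\ref{prop:refine:coarseness} remain available at the next iteration.) Hence by induction $\inbisim \subseteq \RR^i$ for every $i$ reached.

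Next I would handle termination. If the algorithm terminates and returns $\RR^i$, then the loop condition on line~\ref{l:loop} failed, so there is no splitter of $\RR^i$. Since $\RR^i$ is an equivalence relation refining $\RR^0$, it still satisfies condition~\ref{it:rsb:label}. Therefore \Propn~\ref{prop:splitter} — an equivalence relation satisfying \ref{it:rsb:label} with no splitter is a robust stutter bisimulation — gives that $\RR^i$ is a robust stutter bisimulation on~$G$.

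Finally I would combine the two facts. By \defn~\ref{def:inbisim}, $\inbisim$ is the union of all robust stutter bisimulations on~$G$ and, as noted in the excerpt, is itself a robust stutter bisimulation, hence the coarsest one: every robust stutter bisimulation is contained in $\inbisim$. Applying this to the returned relation $\RR^i$ gives $\RR^i \subseteq \inbisim$. Together with the invariant $\inbisim \subseteq \RR^i$ established above, this yields $\RR^i = \inbisim$, so the output is exactly the coarsest robust stutter bisimulation on~$G$, as claimed.

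I do not expect a serious obstacle here, since the two supporting propositions do all the heavy lifting; the only points requiring care are the bookkeeping ones — confirming that each $\RR^i$ remains an equivalence relation (so that "splitter" and the propositions are well-typed at every step), and making the base-case observation that any robust stutter bisimulation respects state labels and is therefore below $\RR^0$. The statement is explicitly conditional on termination, so no progress/well-foundedness argument is needed; the main work is simply assembling the invariant-plus-stopping-condition argument cleanly.
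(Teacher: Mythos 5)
Your proposal is correct and follows essentially the same route as the paper's proof: establish the invariant $\inbisim \subseteq \RR^i$ via \Propn~\ref{prop:refine:coarseness} (with the base case coming from condition~\ref{it:rsb:label}), use \Propn~\ref{prop:splitter} at termination to conclude $\RR^i$ is a robust stutter bisimulation, and then squeeze $\RR^i = \inbisim$ from coarsest-ness. The bookkeeping points you flag (each $\RR^i$ stays an equivalence relation satisfying condition~\ref{it:rsb:label}) are exactly the ones the paper also notes.
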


The loop entry condition in Algorithm~\ref{alg:QuotientPartition} requires
a search for splitters, which can be performed by enumerating all
$\RR^i$-superblock step formulas $P \anyuntil T$.
Termination of Algorithm~\ref{alg:QuotientPartition} is only guaranteed for
finite-state transition systems. For infinite-state systems, the loop may
find new splitters indefinitely, or the computation of $\ECS\psi$ to check
whether a formula~$\psi$ is a splitter may fail to terminate.
If Algorithm~\ref{alg:QuotientPartition} terminates, then \inbisim\ has a
finite number of equivalence classes, and the quotient $G/\inbisim$ can be
constructed to solve the \ref{problem:abstraction}.

\section{Constructing a concrete controller}
\label{sec:concreteController}

Given two robust stutter bisimilar transition systems and a controller for
one of them, this section demonstrates the construction of a controller for
the other transition system that enforces the same \LTLnn\ formulas. This
serves two purposes. Firstly, it solves the \ref{problem:construction}
by showing how to construct a controller for a concrete system from a
controller synthesised for the abstract system. Secondly, it establishes a proof
of the claimed property of robust stutter bisimulations that the same
\LTLnn\ formulas can be enforced on robust stutter bisimilar transition
systems.

Given two robust stutter bisimilar transition systems $G$ and~$\Tilde{G}$
and a controller~$\Tilde{C}$ enforcing an \LTLnn\ formula~$\varphi$
on~$\Tilde{G}$, the goal is to construct a controller~$C$ that
enforces~$\varphi$ on~$G$. For the construction of a concrete controller,
$\Tilde{G}$ can be assumed to be a quotient $\Tilde{G} = G/\RR$. In the
more general proof of equivalence, $G$ and~$\Tilde{G}$ can be arbitrary
robust stutter bisimilar transition systems.

As $G$ and~$\Tilde{G}$ are robust stutter bisimilar, there exists a
robust stutter bisimulation between them, which by
\defn~\ref{def:robust:stutter:between} is a robust stutter bisimulation on
the union transition system $G \cup \Tilde{G}$. Then $\Tilde{C}$ can be
understood as a controller that enforces~$\varphi$ from all
states~$\tilde{s}$ of $G \cup \Tilde{G}$ that are initial states
of~$\Tilde{G}$, and $C$ must enforce $\varphi$ from equivalent states~$s$
that are initial states of~$G$. Therefore, it is enough to consider a
robust stutter bisimulation~\RR\ on a single transition system $G =
\tsystem$, which may be a union transition system, and two equivalent
states $(s, \tilde{s}) \in \RR$. Given a controller $\Tilde{C}$ for~$G$
that enforces $\varphi$ from~$\tilde{s}$,
% $\langle \Tilde{C}/G,\tilde{s}\rangle \models \varphi$
the goal now is to construct a controller $C$ for~$G$ that enforces
$\varphi$ from~$s$.
% $\langle \Tilde{C}/G,\tilde{s}\rangle \models \varphi$

When the controller~$C$ observes a path fragment $\rho \in S^+$ with $s \prefix \rho$, it must decide which control action to take next. This decision will be guided by the actions taken by~$\Tilde{C}$ in the following way. First the path fragment~$\rho$ seen by~$C$ is mapped by $M\colon S^+ \to S^+$ to a path fragment~$\tilde\rho$ for which $\tilde{s} \prefix \tilde\rho$ and which is permitted by~$\Tilde{C}$. Then the control action taken by~$\Tilde{C}$ when observing~$\tilde\rho$ is used to inform which control action $C$ takes.

$M$ maps a path fragment $\rho \in S^+$ with $s \prefix \rho$ to a nonempty stutter equivalent path fragment~$\tilde\rho \in S^+$ with $\tilde{s} \prefix \tilde\rho$, which is defined recursively.
Assume that some path fragment  $\rho = \rho' u$ with $\rho' \in S^*$ and $u \in S$ has been mapped to $\tilde\rho = \tilde\rho' \tilde{u}$ with $\tilde\rho' \in S^*$ and $\tilde{u} \in S$. If~$G$ under control of $C$ now enters a state $v \in S$ in a new equivalence class, i.e., $(u,v) \notin \RR$, then the construction of~$M$ considers all continuations of $\tilde\rho = \tilde\rho' \tilde{u}$ permitted by~$\Tilde{C}$ that remain in the equivalence class of~$\tilde{u}$ until reaching a state~$\tilde{v}$ equivalent to~$v$, namely
\begin{equation}\label{eq:F}
  F(\tilde{\rho}'\tilde{u}, v) = \LongSet{10em}{$\tilde{\rho}' \tilde{u} \tilde{\tau} \tilde{v} \in \Frags^*(\Tilde{C}, G) \mid \tilde{\tau} \in [\tilde{u}]_\RR^*$ and $(v,\tilde{v}) \in \RR$}.
\end{equation}
Then $M$ is defined by choosing one of these paths. The full recursive
definition of $M(\rho)$ for $\rho \in S^+$ with $s \prefix \rho$ is as
follows:
\begin{align}
  M(s) & = \tilde{s} \ ; \label{eq:M:init} 
  \\
  M(\rho' u v) & = M(\rho'u) && \text{if}\ (u,v) \in \RR\ ; \label{eq:M:stay} 
  \\
  M(\rho' u v) & = \tau \in F(M(\rho'u),v) && \text{if}\ (u,v) \notin \RR\ ;
  \label{eq:M:step}
\end{align}
where $\tau$ is an arbitrary but fixed choice. If $M(\rho'u)$ is undefined
or $F(M(\rho'u), v) = \emptyset$ in \eqref{eq:M:stay} or~\eqref{eq:M:step},
then $M(\rho'u v)$ is undefined. The following construction of~$C$ ensures that this never happens when $\rho$ is permitted by~$C$, so $M(\rho)$ is defined for all $\rho \in \Frags^*(C,G)$ with $s \prefix \rho$.
Also, if $M(\rho)$ is defined, then $M(\rho) \in \Frags^*(\Tilde{C}, G)$, $\tilde{s} \sqsubseteq M(\rho)$, and $M(\rho)$ and~$\rho$ are stutter equivalent, or, more precisely, $M(\rho)$ and~$\rho$ visit the same equivalence classes apart from possible repetitions. Furthermore, $M$ is prefix-preserving, i.e., $\rho' \sqsubseteq \rho$ implies $M(\rho') \sqsubseteq M(\rho)$.

\begin{example}\label{ex:controller:construction:M}
Consider again the transition system~$G$ and the robust stutter
bisimulation~\RR\ in \fig~\ref{fig:quotient:system:concrete}, and its
abstraction~$\Tilde{G} = G/\RR$ from \Examp~\ref{ex:quotient:system}. The
relation \RR\ on~$S$ can be extended to a relation~$\hat\RR$ on $S \cup
(S/\RR)$ by including the abstract states in their equivalence classes,
e.g., $[\Aq_1]_{\hat\RR} = \{ a_1 \bcom a_2 \bcom \Aq_1 \}$. It can be
shown that this extension~$\hat\RR$ is a robust stutter bisimulation on $G
\cup \Tilde{G}$. The initial state $s = d_1$ of $G \cup \Tilde{G}$ is
equivalent to $\tilde{s} = \Dq$.
Then $M(d_1) = \Dq$ according to~\eqref{eq:M:init}.
For the path fragment~$d_1d_2$, the result of the mapping becomes $M(d_1d_2) = M(d_1) = \Dq$ by~\eqref{eq:M:stay}.
Let~$\Tilde{C}$ be defined as $\Tilde{C}(\tilde{\rho} \tilde{v}) = \{ \Dq \weakuntil \Cq \}$ when $\tilde{v} = \Dq$, and $\Tilde{C}(\tilde{\rho} \tilde{v}) = \Sigma \cup \Sigma_\RR$ otherwise. It follows that $\Dq^k\Cq \in \Frags^*(\Tilde{C}, G \cup \Tilde{G})$ for all $k \geq 1$.
The path fragment~$d_1d_2c_1$ is then mapped to a fixed element of $F(M(d_1d_2), c_1) = F(\Dq, c_1) = \{ \Dq\Cq\bcom \Dq\Dq\Cq\bcom \Dq\Dq\Dq\Cq, \ldots \}$ by~\eqref{eq:M:step}, one possibility being $M(d_1d_2c_1) = \Dq\Cq$.\qed
\end{example}

With the mapping~$M$ defined, the control action of~$C$ when observing
$\rho = \rho'u$ can now be based on the control action of the given
controller~$\Tilde{C}$ when observing $M(\rho)$. Assume $M(\rho) =
M(\rho'u) = \tilde{\rho}'\tilde{u} = \tilde\rho$. Since $\Tilde{C}$ is
\dlfree, it enforces at least one \RR-\hskip0pt superblock step
formula~$\psi$ from~$\tilde{u}$ after seeing $\tilde{\rho}$. As the
construction of~$M$ ensures that $u$ and~$\tilde{u}$ are robust stutter
bisimilar, the same formula~$\psi$ can be enforced from~$u$. Therefore, $C$
can be constructed such that its control decision when seeing~$\rho$
follows the control decision of a \memoryless\ controller~$\overlinit{C}$
that enforces~$\psi$ from~$u$. There may be several such controllers
enforcing~$\psi$, in which case an arbitrary but fixed choice is made for a given
path~$\tilde{\rho}$.

Formally, $\psi$ is chosen such that its target set is the \emph{superblock reached} by~$\Tilde{C}$ after~$\tilde{\rho}$, defined by
\begin{align}
  \label{eq:TAfter:tilde}
  \kern-.5em
  T\langle \Tilde{C}, \tilde{\rho}'\tilde{u} \rangle =
  \bigcup \, \LongSet{10em}{$[\tilde{v}]_\RR \mid
    \tilde{\rho}'\tilde{u}\tilde{\tau}\tilde{v} \in \Frags^*(\Tilde{C},
    G)$ \\ where $\tilde{\tau} \in [\tilde{u}]_\RR^*$ and
    $(\tilde{u}, \tilde{v}) \notin \RR$}.
\end{align}
The formula~$\psi$ is chosen to be the \emph{\RR-superblock step
  formula enforced} by~$\Tilde{C}$ after~$\tilde{\rho}$, defined as
\begin{align}
  \label{eq:psiAfter:tilde}
  \Psi\langle \Tilde{C}, \tilde{\rho}'\tilde{u} \rangle \equiv 
  \begin{cases}
    [\tilde{u}]_\RR \weakuntil
    T\langle \Tilde{C}, \tilde{\rho}'\tilde{u} \rangle, &  \\
    \multicolumn{2}{r@{}}{\text{if}\ \Frags^\omega(\Tilde{C}, G) \cap
                        \tilde\rho [\tilde{u}]_\RR^\omega \neq \emptyset\ ;} \\
    [\tilde{u}]_\RR \until T\langle \Tilde{C}, \tilde{\rho}'\tilde{u} \rangle,
    & \quad\text{otherwise .} \\
  \end{cases}
\end{align}
The superblock reached by~$\Tilde{C}$ after~$\tilde{\rho}$ contains exactly the states in equivalence classes that can be entered directly after the equivalence class of~$\tilde{u}$. The \RR-superblock step formula enforced by~$\Tilde{C}$ after~$\tilde{\rho}$ is a condition in the form of a weak or strong until formula that describes how the controller behaves within the equivalence class of~$\tilde{u}$ reached by~$\tilde{\rho}$. If the controller permits a path fragment that remains in the equivalence class of~$\tilde{u}$ forever, then the enforced \RR-superblock step formula is a weak until formula. Otherwise all permitted path fragments eventually leave the equivalence class of~$\tilde{u}$, and the enforced formula is instead a strong until formula. In both cases, the source set is the equivalence class of the current state~$\tilde{u}$, and the target set is the superblock reached after~$\tilde{\rho}$, which contains the states that can be entered after the equivalence class of~$\tilde{u}$.

\begin{example}\label{ex:controller:construction:Psi}
Consider the controller~$\Tilde{C}$ and the path fragment $\tilde{s} = \Dq$ from \Examp~\ref{ex:controller:construction:M}. Under control of~$\Tilde{C}$, a path
fragment starting in~$\Dq$ either stays in~$\Dq$ forever or transits to~$\Cq$
eventually. Hence, the superblock reached
by~$\Tilde{C}$ after~$\Dq$ is $T\langle \Tilde{C}, \Dq \rangle = [\Cq]_{\hat\RR}$, and the $\hat\RR$-superblock step formula enforced by~$\Tilde{C}$
after~$\Dq$ is $\Psi\langle \Tilde{C}, \Dq \rangle \equiv [\Dq]_{\hat\RR}
\weakuntil [\Cq]_{\hat\RR}$. Note that the first case in~\eqref{eq:psiAfter:tilde} is chosen because $\Dq^\omega \in \Frags^\omega(\Tilde{C}, G \cup \Tilde{G})$ and $\Dq^\omega \in [\Dq]_{\hat\RR}^\omega$. \qed
\end{example}

\appref{It is shown in \app~\ref{app:LTLnnPreservation}}
\shorttext{It can be shown}
that $\psi \equiv \Psi
\langle \Tilde{C}, \tilde{\rho} \rangle$ is indeed enforceable. More
precisely, there exists a controller enforcing~$\psi$ from every state
equivalent to the end state $\tilde{u}$ of~$\tilde\rho$. By choosing one
such controller~$\overlinit{C}\langle \psi \rangle$ and using it while staying
in the equivalence class of~$\tilde{u}$, the control decision of~$C$ can
finally be defined.

\begin{definition}\label{def:concrete:emulator}
  Let $G$ be a transition system, and let $\psi \equiv P \anyuntil T$ be a
  stutter step formula for~$G$. Then define $\overlinit{C}\langle \psi
  \rangle$ to be a \memoryless\ controller such that $\langle
  \overlinit{C}\langle \psi \rangle /G, u \rangle \vDash \psi$ for all $u
  \in P$, if such controller exists.
\end{definition}

\begin{definition}\label{def:concrete:controller}
Let \RR\ be a robust stutter bisimulation on a transition system
$G$, let $\Tilde{C}$ be a controller for~$G$, and
let $M$ and~$\Psi$ be defined as above. The \emph{concrete
  controller}~$C\colon S^+ \to 2^\Sigma$ for~$G$ based on~$\Tilde{C}$ is
defined by
\begin{equation*}
  C(\rho'u) =
  \begin{cases}
    \overlinit{C} \langle \Psi \langle \Tilde{C},
    M(\rho'u) \rangle \rangle (u), \kern-3pt &
              \text{if $M(\rho'u)$ is defined}; \\
    \Sigma, & \text{otherwise}.
  \end{cases}
\end{equation*}
\end{definition}

\begin{example}\label{eq:controller:construction}
Let $\Tilde{C}$ be defined as in \examp~\ref{ex:controller:construction:M}.
Assume a controller~$C$ shall be constructed based on~$\Tilde{C}$ for path
fragments starting in the state~$d_1$. Then, for example, $C(d_1)$ is
obtained following \defn~\ref{def:concrete:controller} and
\examps\ \ref{ex:controller:construction:M}
and~\ref{ex:controller:construction:Psi} as $C(d_1) = \overlinit{C}\langle
\Psi \langle \Tilde{C}, M(d_1) \rangle \rangle (d_1) = \overlinit{C}\langle
\Psi \langle \Tilde{C}, \Dq \rangle \rangle (d_1) = \overlinit{C}\langle
[\Dq]_{\hat\RR} \weakuntil [\Cq]_{\hat\RR} \rangle (d_1)$. The only
\memoryless\ controller enforcing $\Psi \langle \Tilde{C}, \Dq\rangle
\equiv [\Dq]_{\hat\RR} \weakuntil [\Cq]_{\hat\RR}$ from $d_1$ and~$d_2$ is
$\overlinit{C}(d_i) = \{ \sigma_1 \}$. Hence, $C(d_1) = \{ \sigma_1 \}$.
Similarly, as $M(d_1d_2) = M(d_1) = \Dq$ from
\examp~\ref{ex:controller:construction:M}, it follows that $C(d_1d_2) =
\overlinit{C}\langle [\Dq]_{\hat\RR} \weakuntil [\Cq]_{\hat\RR} \rangle
(d_2) = \{\sigma_1\}$. \qed
\end{example}

\Thm~\ref{thm:LTLnnPreservation} confirms that the same \LTLnn\ formulas
can be enforced from robust stutter bisimilar states.
\Crl~\ref{cor:LTLnn:preservation} lifts this result to robust stutter
bisimilar transition systems, showing the preservation of 
% the existence of
\LTLnn\ synthesis results under robust stutter bisimulation.

\def\ThmLTLnnPreservation{%
  Let $G$ be a transition systems, let \RR\ be a
  robust stutter bisimulation on $G$, and let $\varphi$ be an
  \LTLnn\ formula. For all $(\tilde{s}, s) \in \RR$ such that $\tilde{s} \in
  \EC[G]{\varphi}$ it also holds that $s \in
  \EC[G]{\varphi}$.}

\begin{theorem}
  \label{thm:LTLnnPreservation}
  \ThmLTLnnPreservation
\end{theorem}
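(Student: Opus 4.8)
The plan is to prove \Thm~\ref{thm:LTLnnPreservation} by showing that the concrete controller~$C$ from \Defn~\ref{def:concrete:controller}, built from a controller~$\Tilde{C}$ enforcing~$\varphi$ from~$\tilde{s}$, actually enforces~$\varphi$ from~$s$. Since $\varphi$ is an \LTLnn\ formula, it suffices (by the quoted stutter-invariance theorem of \citet{BaiKat:08}) to exhibit, for every permitted infinite path fragment $\pi \in \Frags^\omega(C,G)$ starting at~$s$, a stutter equivalent path fragment $\tilde\pi \in \Frags^\omega(\Tilde{C},G)$ starting at~$\tilde{s}$; then $\tilde\pi \models \varphi$ gives $\pi \models \varphi$, and since $\pi$ was arbitrary this yields $\langle C/G, s\rangle \models \varphi$, i.e.\ $s \in \EC[G]{\varphi}$.

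First I would establish the basic well-definedness and bookkeeping facts about $M$ that the text already asserts: by induction on the length of $\rho \in \Frags^*(C,G)$ with $s \sqsubset \rho$, that $M(\rho)$ is defined, lies in $\Frags^*(\Tilde{C},G)$, starts with~$\tilde{s}$, is prefix-preserving, and visits the same sequence of equivalence classes as~$\rho$ up to stutter (hence $M(\rho)$ and $\rho$ are stutter equivalent). The crucial inductive step is the one governed by~\eqref{eq:M:step}: when $C$ moves from a state $u$ to $v$ with $(u,v)\notin\RR$, I must show $F(M(\rho'u),v)\neq\emptyset$. This is exactly where the controller construction must be used: by the induction hypothesis $M(\rho'u)=\tilde\rho'\tilde u$ with $(u,\tilde u)\in\RR$, and $C$ on $\rho'u$ follows a \memoryless\ controller $\overlinit{C}\langle\psi\rangle$ for $\psi\equiv\Psi\langle\Tilde{C},\tilde\rho'\tilde u\rangle$, which is either $[\tilde u]_\RR\until T$ or $[\tilde u]_\RR\weakuntil T$ with $T = T\langle\Tilde{C},\tilde\rho'\tilde u\rangle$; the fact that $C$ leaves $[u]_\RR$ to a state $v$ means $[v]_\RR\subseteq T$, so by the definition of $T\langle\Tilde{C},\tilde\rho'\tilde u\rangle$ there is a continuation $\tilde\rho'\tilde u\tilde\tau\tilde v\in\Frags^*(\Tilde{C},G)$ with $\tilde\tau\in[\tilde u]_\RR^*$ and $(v,\tilde v)\in\RR$, i.e.\ $F(M(\rho'u),v)\neq\emptyset$. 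One also has to check that $\overlinit{C}\langle\psi\rangle$ is well-defined, i.e.\ $\psi$ is genuinely enforceable from~$u$ — this is the claim referenced to \app~\ref{app:LTLnnPreservation} and follows from $(u,\tilde u)\in\RR$, condition~\ref{it:rsb:equivalent}, and the fact that $\Tilde{C}$ being \dlfree enforces~$\psi$ from~$\tilde u$ after~$\tilde\rho$.

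The remaining, and I expect main, obstacle is to handle the \emph{infinite} path fragment and in particular the case where $\pi$ gets trapped in a single equivalence class forever. Given $\pi = s_0 s_1 \cdots \in \Frags^\omega(C,G)$ with $s_0 = s$, the images $M(s_0\cdots s_n)$ form an increasing chain of finite path fragments of $\Tilde{C}/G$ under $\sqsubseteq$ (prefix-preservation), so if $\pi$ visits infinitely many distinct equivalence classes the chain has a well-defined infinite limit $\tilde\pi \in \Frags^\omega(\Tilde{C},G)$, which starts at $\tilde{s}$ and is stutter equivalent to~$\pi$. If instead $\pi$ eventually stays inside one class $[u]_\RR$ from some index $n$ onward, then $M(s_0\cdots s_m)$ stabilises at some finite $\tilde\rho = \tilde\rho'\tilde u$ for all $m\geq n$; here I must produce an infinite completion by invoking that the enforced formula is in this case the weak-until variant $[\tilde u]_\RR \weakuntil T$ (by the first case of~\eqref{eq:psiAfter:tilde}, since $\Tilde{C}$ permits an infinite path staying in $[\tilde u]_\RR$), pick such an infinite permitted continuation $\tilde\rho\tilde u'\tilde u''\cdots \in \Frags^\omega(\Tilde{C},G)$ with all tail states in $[\tilde u]_\RR$, and set $\tilde\pi$ to be that; it is stutter equivalent to~$\pi$ because both eventually stay in the same equivalence class with the same (constant, by label preservation) labelling. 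I would also need to confirm the dichotomy is exhaustive and consistent — that $\pi$ cannot both ``stay forever'' and the enforced formula be strong-until — which again comes from~\eqref{eq:psiAfter:tilde} and the fact that $C$ faithfully follows $\overlinit{C}\langle\psi\rangle$ inside $[u]_\RR$, so any infinite $C$-permitted sojourn in $[u]_\RR$ forces $\Tilde{C}$ to permit an infinite sojourn in $[\tilde u]_\RR$. Once $\tilde\pi$ is constructed in all cases and shown stutter equivalent to~$\pi$ and a member of $\Frags^\omega(\Tilde{C},G)$ beginning at~$\tilde s$, the stutter-invariance theorem closes the argument.
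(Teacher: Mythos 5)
Your proposal is correct and follows essentially the same route as the paper's proof: an induction showing that $M$ is defined, lands in $\Frags^*(\Tilde{C},G)$, and is $\RR$-stutter equivalent on finite permitted fragments (with $F(M(\rho'u),v)\neq\emptyset$ established via $\Psi$, \Lemm~\ref{lem:psiAfter}, \Propn~\ref{prop:EC:any} and condition~\ref{it:rsb:equivalent}), followed by the case split on whether $M(\pi)$ is infinite or finite, using the weak-until case of~\eqref{eq:psiAfter:tilde} to extract an infinite $\Tilde{C}$-permitted sojourn in $[\tilde{u}]_\RR$. One minor wording point: the correct dichotomy is ``$\pi$ changes equivalence class infinitely often'' versus ``$\pi$ is eventually trapped in one class'' (a path may alternate between two classes forever without visiting infinitely many \emph{distinct} classes), but your limit argument for the first branch applies verbatim in that situation, so nothing is lost.
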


\def\CorLTLnnPreservation{%
  Let $G$ and $\Tilde{G}$ be transition systems, let \RR\ be
  a robust stutter bisimulation between them, and let $\varphi$ be an
  \LTLnn\ formula. If there exists a \dlfree controller~$\Tilde{C}$
  for~$\Tilde{G}$ such that $\Tilde{C}/\Tilde{G} \models \varphi$, then
  there exists a \dlfree controller~$C$ for~$G$ such that $C/G
  \models \varphi$.}

\begin{corollary}
  \label{cor:LTLnn:preservation}
  \CorLTLnnPreservation
\end{corollary}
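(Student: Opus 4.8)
The plan is to reduce the statement about transition systems to the statement about states that is already established in \Thm~\ref{thm:LTLnnPreservation}, using the union transition system as the bridge. Suppose $\RR$ is a robust stutter bisimulation between $G$ and $\Tilde{G}$, and suppose $\Tilde{C}$ is a \dlfree controller with $\Tilde{C}/\Tilde{G} \models \varphi$. By \Defn~\ref{def:robust:stutter:between}, $\RR$ is a robust stutter bisimulation on the union system $G \cup \Tilde{G}$, so \Thm~\ref{thm:LTLnnPreservation} applies on $G \cup \Tilde{G}$. First I would observe that $\Tilde{C}/\Tilde{G} \models \varphi$ means $\langle \Tilde{C}/\Tilde{G}, \tilde{s}\init \rangle \models \varphi$ for every $\tilde{s}\init \in S\init_2$; since infinite path fragments of $\Tilde{G}$ starting at $\tilde{s}\init$ coincide with those of $G \cup \Tilde{G}$ starting there (the state sets are disjoint and no transitions cross between the two components), this gives $\tilde{s}\init \in \EC[G \cup \Tilde{G}]{\varphi}$ for each such $\tilde{s}\init$ — here $\Tilde{C}$, possibly extended arbitrarily but \dlfree on the $G$-part, witnesses enforceability.

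Next, for each initial state $s\init \in S\init_1$ of $G$, the second condition of \Defn~\ref{def:robust:stutter:between} supplies some $\tilde{s}\init \in S\init_2$ with $(s\init, \tilde{s}\init) \in \RR$ — or, matching the orientation of \Thm~\ref{thm:LTLnnPreservation} which takes $(\tilde{s}, s) \in \RR$ with $\tilde{s} \in \EC{\varphi}$, I would use the clause $\forall s_1 \in S_1\init\,.\,\exists s_2 \in S_2\init\,.\,(s_1,s_2)\in\RR$ together with symmetry of $\RR$ to get a pair with the abstract state in the first coordinate. Applying \Thm~\ref{thm:LTLnnPreservation} on $G \cup \Tilde{G}$ then yields $s\init \in \EC[G \cup \Tilde{G}]{\varphi}$, i.e., there is a \dlfree controller $C^{s\init}$ on $G \cup \Tilde{G}$ enforcing $\varphi$ from $s\init$.

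Finally I would assemble a single controller $C$ for $G$. The controllers $C^{s\init}$ produced for different initial states may disagree, so I would let $C$, on a path fragment $\rho = s_0 s_1 \cdots$ with $s_0 \in S\init_1$, follow $C^{s_0}$ (restricted to $S_1^+$, which is well defined since controlled path fragments from $s_0$ stay in the $G$-component). On path fragments not starting in $S\init_1$ define $C(\rho) = \Sigma$ so that $C$ is \dlfree provided $G$ is \dlfree; if $G$ need not be \dlfree one instead just needs each $C^{s_0}/G$ \dlfree, which holds by construction from $\EC$. Then for every $s\init \in S\init_1$, $\langle C/G, s\init \rangle = \langle C^{s\init}/G, s\init \rangle \models \varphi$, hence $C/G \models \varphi$, as required. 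The main obstacle — really the only nonroutine point — is the bookkeeping around the union system: making precise that path fragments and controllers do not leak between the two disjoint components, so that enforceability computed in $G \cup \Tilde{G}$ transfers faithfully to $G$ and $\Tilde{G}$ individually; everything else is a direct invocation of \Thm~\ref{thm:LTLnnPreservation} plus the two initial-state matching clauses of \Defn~\ref{def:robust:stutter:between}.
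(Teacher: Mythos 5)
Your proposal is correct and follows essentially the same route as the paper's proof: pass to the union system $G \cup \Tilde{G}$, match each initial state $s\init$ of $G$ to an equivalent initial state of $\Tilde{G}$ via \Defn~\ref{def:robust:stutter:between}, invoke \Thm~\ref{thm:LTLnnPreservation} to obtain a per-initial-state \dlfree controller, and glue these into one controller by dispatching on the first state of the observed path. Your extra care about the orientation of the pair in \Thm~\ref{thm:LTLnnPreservation} and about path fragments not leaking between the disjoint components only makes explicit what the paper's proof leaves implicit.
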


\appref{The full proof of these results, based on the above controller
construction, is given in \app~\ref{app:LTLnnPreservation}.}
If the
concrete controller~$C$ is constructed in this way from a
controller~$\Tilde{C}$ for an abstract transition system~$\Tilde{G}$ that
is robust stutter bisimilar to a concrete transition system~$G$, then it is
a solution to the \ref{problem:construction}.

\section{Robot navigation example}
\label{sec:example}

This section applies the abstraction and controller construction to a robot
navigation problem inspired by \citet{MohMalWinLafOza:21}. In this example,
a robot navigates a two-dimensional space while avoiding the \emph{Obstacle} 
shown in \fig~\ref{fig:example:concrete}. The main difference to the
previous work is the addition of disturbance and an area that is too narrow
for the robot to pass due to the disturbance.

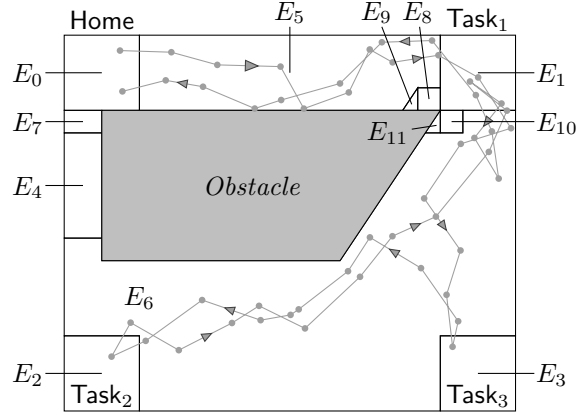
\begin{figure}
\centering
\scalebox{1}{
\begin{tikzpicture}
\draw (5.0000, -0.0000) -- (5.1666, -0.0000) -- (5.3000, -0.0000) -- (5.6665, -0.0000) -- (6.0000, -0.0000) -- (6.0000, -0.7000) -- (6.0000, -1.0000) -- (5.3000, -1.0000) -- (5.0000, -1.0000) -- (5.0000, -0.7000) -- (5.0000, -0.2499) -- cycle;
\draw (-0.0000, -5.0000) -- (-0.0000, -4.0000) -- (0.5000, -4.0000) -- (1.0000, -4.0000) -- (1.0000, -5.0000) -- (0.5000, -5.0000) -- cycle;
\draw (6.0000, -5.0000) -- (6.0000, -4.0000) -- (5.3000, -4.0000) -- (5.0000, -4.0000) -- (5.0000, -5.0000) -- (5.3000, -5.0000) -- cycle;
\draw (-0.0000, 0.0000) -- (0.5000, -0.0000) -- (1.0000, -0.0000) -- (1.0000, -0.7000) -- (1.0000, -1.0000) -- (0.5000, -1.0000) -- (-0.0000, -1.0000) -- (-0.0000, -0.7000) -- cycle;
\draw (5.0000, -1.0000) -- (4.7000, -1.0000) -- (4.7000, -0.7000) -- (5.0000, -0.7000) -- cycle;
\draw (-0.0000, -1.0000) -- (-0.0000, -1.3000) -- (0.5000, -1.3000) -- (0.5000, -1.0000) -- cycle;
\draw (1.0000, -0.0000) -- (4.7000, -0.0000) -- (5.0000, -0.0000) -- (5.0000, -0.2499) -- (5.0000, -0.7000) -- (4.7000, -0.7000) -- (4.5001, -1.0000) -- (1.0000, -1.0000) -- (1.0000, -0.7000) -- cycle;
\draw (6.0000, -1.0000) -- (6.0000, -1.3000) -- (6.0000, -2.7000) -- (6.0000, -3.0000) -- (6.0000, -4.0000) -- (5.3000, -4.0000) -- (5.0000, -4.0000) -- (5.0000, -5.0000) -- (4.7000, -5.0000) -- (2.3340, -5.0000) -- (1.8341, -5.0000) -- (1.0000, -5.0000) -- (1.0000, -4.0000) -- (0.5000, -4.0000) -- (-0.0000, -4.0000) -- (-0.0000, -3.0000) -- (-0.0000, -2.7000) -- (0.5000, -2.7000) -- (0.5000, -3.0000) -- (1.0000, -3.0000) -- (3.1671, -3.0000) -- (3.6670, -3.0000) -- (3.8670, -2.7000) -- (4.7000, -1.4501) -- (4.8001, -1.3000) -- (5.0000, -1.3000) -- (5.3000, -1.3000) -- (5.3000, -1.0000) -- cycle;
\draw (4.7000, -1.0000) -- (4.5001, -1.0000) -- (4.7000, -0.7000) -- cycle;
\draw (5.0000, -1.3000) -- (5.0000, -1.0000) -- (4.8001, -1.3000) -- cycle;
\draw (-0.0000, -1.3000) -- (-0.0000, -2.7000) -- (0.5000, -2.7000) -- (0.5000, -1.3000) -- cycle;
\draw (5.0000, -1.0000) -- (5.3000, -1.0000) -- (5.3000, -1.3000) -- (5.0000, -1.3000) -- cycle;
\draw [fill=gray!50] (5.0000, -1.0000) -- (4.7000, -1.0000) -- (4.5001, -1.0000) -- (1.0000, -1.0000) -- (0.5000, -1.0000) -- (0.5000, -1.3000) -- (0.5000, -2.7000) -- (0.5000, -3.0000) -- (1.0000, -3.0000) -- (3.1671, -3.0000) -- (3.6670, -3.0000) -- (3.8670, -2.7000) -- (4.7000, -1.4501) -- (4.8001, -1.3000) -- cycle;

\draw [draw=gray!75] (0.7418, -0.2048) node [circle, fill=gray!75, minimum size=2.5pt, inner sep=0] {{}} -- (1.4245, -0.2267) node [circle, fill=gray!75, minimum size=2.5pt, inner sep=0] {{}} -- (2.0354, -0.3951) node [circle, fill=gray!75, minimum size=2.5pt, inner sep=0] {{}} -- (2.8757, -0.4181) node [midway, isosceles triangle, fill=gray!75, draw=black!75, inner sep=0, minimum size=4pt, rotate=-1.5690709913605931] {} node [circle, fill=gray!75, minimum size=2.5pt, inner sep=0] {{}} -- (3.1861, -0.9774) node [circle, fill=gray!75, minimum size=2.5pt, inner sep=0] {{}} -- (3.7834, -0.7618) node [circle, fill=gray!75, minimum size=2.5pt, inner sep=0] {{}} -- (4.0618, -0.1900) node [circle, fill=gray!75, minimum size=2.5pt, inner sep=0] {{}} -- (4.3660, -0.3617) node [circle, fill=gray!75, minimum size=2.5pt, inner sep=0] {{}} -- (4.9750, -0.2667) node [midway, isosceles triangle, fill=gray!75, draw=black!75, inner sep=0, minimum size=3pt, rotate=8.86314793467138] {} node [circle, fill=gray!75, minimum size=2.5pt, inner sep=0] {{}} -- (5.5031, -0.5617) node [circle, fill=gray!75, minimum size=2.5pt, inner sep=0] {{}} -- (5.9400, -1.2379) node [circle, fill=gray!75, minimum size=2.5pt, inner sep=0] {{}} -- (5.2723, -1.4458) node [circle, fill=gray!75, minimum size=2.5pt, inner sep=0] {{}} -- (4.7854, -2.1717) node [circle, fill=gray!75, minimum size=2.5pt, inner sep=0] {{}} -- (5.2687, -2.8642) node [midway, isosceles triangle, fill=gray!75, draw=black!75, inner sep=0, minimum size=3pt, rotate=-55.09046832523555] {} node [circle, fill=gray!75, minimum size=2.5pt, inner sep=0] {{}} -- (5.0830, -3.4541) node [circle, fill=gray!75, minimum size=2.5pt, inner sep=0] {{}} -- (5.1629, -4.1456) node [circle, fill=gray!75, minimum size=2.5pt, inner sep=0] {{}} -- (5.2377, -3.8063) node [circle, fill=gray!75, minimum size=2.5pt, inner sep=0] {{}} -- (4.7434, -3.0994) node [circle, fill=gray!75, minimum size=2.5pt, inner sep=0] {{}} -- (4.0643, -2.6900) node [midway, isosceles triangle, fill=gray!75, draw=black!75, inner sep=0, minimum size=3pt, rotate=148.91696455273447] {} node [circle, fill=gray!75, minimum size=2.5pt, inner sep=0] {{}} -- (3.7731, -3.1404) node [circle, fill=gray!75, minimum size=2.5pt, inner sep=0] {{}} -- (3.1176, -3.6478) node [circle, fill=gray!75, minimum size=2.5pt, inner sep=0] {{}} -- (3.0094, -3.7214) node [circle, fill=gray!75, minimum size=2.5pt, inner sep=0] {{}} -- (2.6111, -3.7919) node [circle, fill=gray!75, minimum size=2.5pt, inner sep=0] {{}} -- (1.8336, -3.5241) node [midway, isosceles triangle, fill=gray!75, draw=black!75, inner sep=0, minimum size=3pt, rotate=160.99524067021983] {} node [circle, fill=gray!75, minimum size=2.5pt, inner sep=0] {{}} -- (1.0833, -4.0670) node [circle, fill=gray!75, minimum size=2.5pt, inner sep=0] {{}} -- (0.6294, -4.2768) node [circle, fill=gray!75, minimum size=2.5pt, inner sep=0] {{}} -- (0.8829, -3.8247) node [circle, fill=gray!75, minimum size=2.5pt, inner sep=0] {{}} -- (1.5234, -4.1910) node [circle, fill=gray!75, minimum size=2.5pt, inner sep=0] {{}} -- (2.2304, -3.8294) node [midway, isosceles triangle, fill=gray!75, draw=black!75, inner sep=0, minimum size=3pt, rotate=27.086027906332827] {} node [circle, fill=gray!75, minimum size=2.5pt, inner sep=0] {{}} -- (2.5904, -3.6011) node [circle, fill=gray!75, minimum size=2.5pt, inner sep=0] {{}} -- (3.2011, -3.9014) node [circle, fill=gray!75, minimum size=2.5pt, inner sep=0] {{}} -- (3.9328, -3.2147) node [circle, fill=gray!75, minimum size=2.5pt, inner sep=0] {{}} -- (4.3635, -2.6745) node [circle, fill=gray!75, minimum size=2.5pt, inner sep=0] {{}} -- (4.9417, -2.4138) node [midway, isosceles triangle, fill=gray!75, draw=black!75, inner sep=0, minimum size=3pt, rotate=24.265972571905994] {} node [circle, fill=gray!75, minimum size=2.5pt, inner sep=0] {{}} -- (5.6448, -1.5530) node [circle, fill=gray!75, minimum size=2.5pt, inner sep=0] {{}} -- (5.8862, -1.0041) node [circle, fill=gray!75, minimum size=2.5pt, inner sep=0] {{}} -- (5.3945, -0.6167) node [circle, fill=gray!75, minimum size=2.5pt, inner sep=0] {{}} -- (5.8136, -0.9090) node [circle, fill=gray!75, minimum size=2.5pt, inner sep=0] {{}} -- (5.4533, -1.3669) node [midway, isosceles triangle, fill=gray!75, draw=black!75, inner sep=0, minimum size=3pt, rotate=-128.1922711594843] {} node [circle, fill=gray!75, minimum size=2.5pt, inner sep=0] {{}} -- (5.7764, -1.9079) node [circle, fill=gray!75, minimum size=2.5pt, inner sep=0] {{}} -- (5.6851, -1.2799) node [circle, fill=gray!75, minimum size=2.5pt, inner sep=0] {{}} -- (5.4726, -0.7973) node [circle, fill=gray!75, minimum size=2.5pt, inner sep=0] {{}} -- (4.8900, -0.0712) node [circle, fill=gray!75, minimum size=2.5pt, inner sep=0] {{}} -- (4.2233, -0.0942) node [midway, isosceles triangle, fill=gray!75, draw=black!75, inner sep=0, minimum size=3pt, rotate=-178.02194158568872] {} node [circle, fill=gray!75, minimum size=2.5pt, inner sep=0] {{}} -- (3.6668, -0.6425) node [circle, fill=gray!75, minimum size=2.5pt, inner sep=0] {{}} -- (2.8842, -0.8627) node [circle, fill=gray!75, minimum size=2.5pt, inner sep=0] {{}} -- (2.5332, -0.9761) node [circle, fill=gray!75, minimum size=2.5pt, inner sep=0] {{}} -- (1.8456, -0.6858) node [circle, fill=gray!75, minimum size=2.5pt, inner sep=0] {{}} -- (1.3497, -0.6009) node [midway, isosceles triangle, fill=gray!75, draw=black!75, inner sep=0, minimum size=3pt, rotate=170.29272516451533] {} node [circle, fill=gray!75, minimum size=2.5pt, inner sep=0] {{}} -- (0.7738, -0.7479) node [circle, fill=gray!75, minimum size=2.5pt, inner sep=0] {{}};

\node at (2.5, -2) {\emph{Obstacle}};

\node at (0.5, 0.2) {\home};
\node at (-0.25, -0.5) [left, inner sep=0.5pt, text depth=0] (ea) {\enuma};
\draw (0.5, -0.5) -- (ea);
\node at (3, 0.3) [inner sep=1pt] (eb) {\enumb};
\draw (3, -0.5) -- (eb);
\node at (5.5, 0.2) {\taskc};
\node at (6.25, -0.5) [right, inner sep=0.5pt, text depth=0] (ec) {\enumc};
\draw (5.5, -0.5) -- (ec);

\node at (4.15, 0.3) [inner sep=1pt] (ed) {\enumd};
\draw (4.62, -0.92) -- (ed);
\node at (4.7, 0.3) [inner sep=1pt] (ee) {\enume};
\draw (4.85, -0.85) -- (ee);

\node at (-0.25, -1.15) [left, inner sep=0.5pt, text depth=0] (ef) {\enumf};
\draw (0.25, -1.15) -- (ef);
\node at (4.3, -1.3) [inner sep=1pt] (eg) {\enumg};
\draw (4.95, -1.2) -- (eg);
\node at (6.25, -1.15)  [right, inner sep=0.5pt, text depth=0] (eh) {\enumh};
\draw (5.15, -1.15) -- (eh);
\node at (1, -3.5) {\enumi};

\node at (-0.25, -2) [left, inner sep=0.5pt, text depth=0] (ej) {\enumj};
\draw (0.25, -2) -- (ej);

\node at (0.5, -4.8) {\taskk};
\node at (-0.25, -4.5) [left, inner sep=0.5pt, text depth=0] (ek) {\enumk};
\draw (0.5, -4.5) -- (ek);
\node at (5.5, -4.8) {\taskl};
\node at (6.25, -4.5) [right, inner sep=0.5pt, text depth=0] (el) {\enuml};
\draw (5.5, -4.5) -- (el);
\end{tikzpicture}
} % end scalebox
\caption{The equivalence classes of the coarsest \rdsb for the robot navigation example.}
\label{fig:example:concrete}
\end{figure}

The robot movement is described by the following discrete-time linear system with disturbances
\begin{equation}
x(k+1) = x(k) + u(k) + w(k) \ , \label{eq:example:delta}
\end{equation}
where the state is $x \in \mathcal{X} = ([0, 6] \times [0, 5]) \setminus \mathit{Obstacle}$, the control input is $u \in \mathcal{U} = [-0.6, 0.6]^2$, and the disturbance is $w \in \mathcal{W} = [-0.3, 0.3]^2$.
The dynamics described by~\eqref{eq:example:delta} is used to construct a
transition system~$G$ as in \sect~\ref{sec:transsys}. The set of
initial states is the top left corner, $S\init = \enuma = [0, 1] \times [4, 5]$. The
set of propositions is $\AP = \{ \home, \taskc, \taskk, \taskl \}$, and the
state labelling function~$L$ assigns these propositions to the corner sets, see \fig~\ref{fig:example:concrete}. Thus, the initial equivalence
relation~$\RR^0$ for \alg~\ref{alg:QuotientPartition} has five
equivalence classes: the four corners and the rest of the state space.

When \alg~\ref{alg:QuotientPartition} is applied to the transition
system~$G$, the resulting \rdsb\ \RR\ consists of twelve equivalence
classes $E_0,\ldots,E_{11}$ shown in \fig~\ref{fig:example:concrete}.
The four corners are not split and remain as regions
\enuma, \enumc, \enumk, and~\enuml. It is possible to keep the robot
indefinitely in the corners, because the disturbance cannot push the
robot out of these regions from the central positions and the control input
has higher magnitude than any possible disturbance. For example
$\enuma \weakuntil \emptyset$ is enforceable from all states in~\enuma,
i.e., $\enuma \subseteq \ECS{\enuma \weakuntil \emptyset}$ and thus 
$\enuma \weakuntil \emptyset \in \Sigma_\RR$.

The region between the corners is split into eight equivalence classes.
\enumj\ is split off from the other regions because it is so narrow that, independently of the control input, the disturbance may push the robot into the \emph{Obstacle} from any position within it,  and there is no \RR-superblock step formula from \enumj\ in $\Sigma_\RR$. From all other regions there exists some enforceable \RR-superblock step formula.
The two large equivalence classes \enumb\ and \enumi\ are dissimilar
because the robot can be forced from \enumb\ to~\enuma\ without
visiting another region, but this cannot be enforced
from~\enumi.
That is, $\enumb \until \enuma \in \Sigma_\RR$ but
$\enumi \until \enuma \notin \Sigma_\RR$. Other superblock step formulas
that can be enforced from \enumb\ are $\enumb \until \enumc \in \Sigma_\RR$
and $\enumb \weakuntil \emptyset \in \Sigma_\RR$.
The remaining equivalence classes \enumf, \enume, \enumd, \enumh,
and~\enumg\ are so small in relation to the disturbance that the controller
cannot force the robot to enter or stay in these classes.
\enumf\ is different from \enumb, because from~\enumf\ the robot can only be forced
into~\enuma\ and not into~\enumc:  
$\enumf \until \enumc \notin \Sigma_\RR$ whereas $\enumb \until \enumc \in \Sigma_\RR$.
The equivalence classes \enume\ and~\enumd\ are split off from~\enumb\
because it is possible to force the robot directly into the superblock
$\enumi \cup \enumh$ from $\enume \cup \enumd$ but not
from~\enumb. That is, $\enume \until
(\enumi \cup \enumh) \in \Sigma_\RR$ but $\enumb \until
(\enumi \cup \enumh) \notin \Sigma_\RR$. Similar reasons
cause $\enumh \cup \enumg$ to be split off from~\enumi.
Then \enume\ and \enumd\ are separated since the robot can be forced
to~\enumc\ from \enume\ but not from~\enumd, i.e.,
$\enume \until \enumc \in \Sigma_\RR$ but
$\enumd \until \enumc \notin \Sigma_\RR$. The split of \enume\ and~\enumd\
propagates, causing a separation of \enumh\ and~\enumg, for example
$\enumg \until (\enumb \cup \enumd) \in \Sigma_\RR$ and $\enumh \until
(\enumb \cup \enume \cup \enumd) \in \Sigma_\RR$.

At this point, the partition is stable and no further split occurs. After
collecting all the superblock step formulas enforceable from each of the
twelve regions, an abstract transition system $\Tilde{G} = G/\RR$ is
constructed according to \defn~\ref{def:quotient:system}.
Then an abstract controller~$\Tilde{C}$ is synthesised for~$\Tilde{G}$ to enforce the \LTLnn formula
\begin{equation*}
	\globally\finally \home \land \globally\finally \taskc \land \globally\finally \taskk \land \globally\finally \taskl \ .
\end{equation*}
The abstract controller~$\Tilde{C}$ is used to construct a concrete
controller~$C$ according to \defn~\ref{def:concrete:controller}. \fig~\ref{fig:example:concrete} shows a path fragment where the robot completes
the three tasks and returns to the home region while exposed to the
disturbance, which causes an erratic behaviour.

\section{Conclusions}
\label{sec:conclusions}

This \whatsit\ proposes a method to synthesise controllers that enforce requirements specified in \LTLnn\ for cyber-physical systems subject to disturbances. This is done by constructing a finite-state abstraction of the system and then synthesising a controller for the abstract system. The \emph{\rdsb} relation is shown to characterise the relevant abstraction accurately, and it is shown how this relation can be used to construct a concrete controller. Although the main driver of this work is the ability to handle process noise, \rdsb is more general and can be used on any system that can be represented as transition systems.

For future usage an efficient implementation of \Alg~\ref{alg:QuotientPartition} is important. As given here, every superblock step formula is checked whether it is a splitter, though some implied formulas would not have to be checked. The number of checks may be further reduced by considering what other formulas did not lead to a split.

This \whatsit\ considers disturbances added to the state transitions, but for some applications there is considerable measurement noise that results in unobservable transitions. Future work might investigate how such disturbances can be incorporated in the abstraction.

{\hbadness=2500 \bibliography{malik_abrv,IEEEabrv,malik}}

\begin{thebibliography}{21}
\providecommand{\natexlab}[1]{#1}
\providecommand{\url}[1]{\texttt{#1}}
\expandafter\ifx\csname urlstyle\endcsname\relax
  \providecommand{\doi}[1]{doi: #1}\else
  \providecommand{\doi}{doi: \begingroup \urlstyle{rm}\Url}\fi

\bibitem[Alur et~al.(1998)Alur, Henzinger, Kupferman, and
  Vardi]{AluHenKupVar:98}
Rajeev Alur, Thomas~A. Henzinger, Orna Kupferman, and Moshe~Y. Vardi.
\newblock Alternating refinement relations.
\newblock In Davide Sangiorgi and Robert de~Simone, editors, \emph{9th Int.
  Conf. Concurrency Theory, {CONCUR}\,'98}, volume 1466 of \emph{LNCS}, pages
  163--178. Springer, 1998.
\newblock \doi{10.1007/BFb0055622}.

\bibitem[Alur et~al.(2000)Alur, Henzinger, Lafferriere, and
  Pappas]{AluHenLafPap:00}
Rajeev Alur, Thomas~A. Henzinger, Gerardo Lafferriere, and George~J. Pappas.
\newblock Discrete abstractions of hybrid systems.
\newblock \emph{Proc. {IEEE}}, 88\penalty0 (7):\penalty0 971--984, 2000.
\newblock \doi{10.1109/5.871304}.

\bibitem[Baier and Katoen(2008)]{BaiKat:08}
Christel Baier and Joost-Pieter Katoen.
\newblock \emph{Principles of Model Checking}.
\newblock MIT Press, 2008.

\bibitem[Belta et~al.(2017)Belta, Yordanov, and Gol]{BelYorGol:17}
Calin Belta, Boyan Yordanov, and Ebru~Aydin Gol.
\newblock \emph{Formal Methods for Discrete-Time Dynamical Systems}.
\newblock Springer, 1st edition, 2017.

\bibitem[Cousot and Cousot(1979)]{Cou:79}
Patrick Cousot and Radhia Cousot.
\newblock Constructive versions of {T}arski's fixed point theorems.
\newblock \emph{Pacific J. Math.}, 82\penalty0 (1):\penalty0 43--57, 1979.
\newblock \doi{10.2140/pjm.1979.82.43}.

\bibitem[Girard and Pappas(2007)]{GirPap:07}
Antoine Girard and George~J. Pappas.
\newblock Approximation metrics for discrete and continuous systems.
\newblock \emph{{IEEE} Trans. Autom. Control}, 52\penalty0 (5):\penalty0
  782--798, May 2007.
\newblock \doi{10.1109/TAC.2007.895849}.

\bibitem[Kloetzer and Belta(2008)]{KloBel:08}
Marius Kloetzer and Calin Belta.
\newblock A fully automated framework for control of linear systems from
  temporal logic specifications.
\newblock \emph{{IEEE} Trans. Autom. Control}, 53\penalty0 (1):\penalty0
  287--297, February 2008.
\newblock \doi{10.1109/TAC.2007.914952}.

\bibitem[Lee(2015)]{Lee:CPS:2015}
Edward~A. Lee.
\newblock The past, present and future of cyber-physical systems: A focus on
  models.
\newblock \emph{Sensors}, 15:\penalty0 4837--4869, 2015.
\newblock \doi{10.3390/s150304837}.

\bibitem[Liu and Ozay(2016)]{LiuOza:16}
Jun Liu and Necmiye Ozay.
\newblock Finite abstractions with robustness margins for temporal logic-based
  control synthesis.
\newblock \emph{Nonlinear Analysis: Hybrid Systems}, 22:\penalty0 1--15, 2016.
\newblock \doi{10.1016/j.nahs.2016.02.002}.

\bibitem[Milner(1989)]{Mil:89}
Robin Milner.
\newblock \emph{Communication and concurrency}.
\newblock Series in Computer Science. Prentice-Hall, 1989.

\bibitem[Mohajerani et~al.(2021)Mohajerani, Malik, Wintenberg, Lafortune, and
  Ozay]{MohMalWinLafOza:21}
Sahar Mohajerani, Robi Malik, Andrew Wintenberg, St\'ephane Lafortune, and
  Necmiye Ozay.
\newblock Divergent stutter bisimulation abstraction for controller synthesis
  with linear temporal logic specifications.
\newblock \emph{Automatica}, 130, 109723, August 2021.
\newblock \doi{10.1016/j.automatica.2021.109723}.

\bibitem[Nilsson et~al.(2012)Nilsson, Özay, Topcu, and
  Murray]{NilOzaTopMur:12}
Petter Nilsson, Necmiye Özay, Ufuk Topcu, and Richard~M. Murray.
\newblock Temporal logic control of switched affine systems with an application
  in fuel balancing.
\newblock In \emph{2012 American Control Conf.}, pages 5302--5309, June 2012.
\newblock \doi{10.1109/ACC.2012.6315141}.

\bibitem[Paige and Tarjan(1987)]{PaiTar:87}
Robert Paige and Robert~E. Tarjan.
\newblock Three partition refinement algorithms.
\newblock \emph{SIAM J. Computing}, 16\penalty0 (6):\penalty0 973--989, 1987.
\newblock \doi{10.1137/0216062}.

\bibitem[Pola and Tabuada(2009)]{PolTab:09}
Giordano Pola and Paulo Tabuada.
\newblock Symbolic models for nonlinear control systems: Alternating
  approximate bisimulations.
\newblock \emph{SIAM J. Control Optim.}, 48\penalty0 (2):\penalty0 719--733,
  2009.
\newblock \doi{10.1137/070698580}.

\bibitem[Ramadge(1989)]{Ram:89}
P.~J.~G. Ramadge.
\newblock Some tractable supervisory control problems for discrete-event
  systems modeled by {Buchi} automata.
\newblock \emph{{IEEE} Trans. Autom. Control}, 34\penalty0 (1):\penalty0
  10--19, January 1989.
\newblock \doi{10.1109/9.8645}.

\bibitem[Reissig et~al.(2016)Reissig, Weber, and Rungger]{ReiWebRun:16}
Gunther Reissig, Alexander Weber, and Matthias Rungger.
\newblock Feedback refinement relations for the synthesis of symbolic
  controllers.
\newblock \emph{{IEEE} Trans. Autom. Control}, 62\penalty0 (4):\penalty0
  1781--1796, April 2016.
\newblock \doi{10.1109/TAC.2016.2593947}.

\bibitem[Tabuada(2006)]{Tab:06}
P.~Tabuada.
\newblock Symbolic control of linear systems based on symbolic subsystems.
\newblock \emph{{IEEE} Trans. Autom. Control}, 51\penalty0 (6):\penalty0
  1003--1013, 2006.
\newblock \doi{10.1109/TAC.2006.876946}.

\bibitem[Tabuada(2009)]{Tab:09}
Paulo Tabuada.
\newblock \emph{Verification and control of hybrid systems: a symbolic
  approach}.
\newblock Springer Science \& Business Media, 2009.

\bibitem[Tarski(1955)]{Tar:55}
Alfred Tarski.
\newblock A lattice-theoretical fixpoint theorem and its applications.
\newblock \emph{Pacific J. Math.}, 5\penalty0 (2):\penalty0 285--309, 1955.
\newblock \doi{10.2140/pjm.1955.5.285}.

\bibitem[Wagenmaker and Ozay(2016)]{WagOza:16}
Andrew~J. Wagenmaker and Necmiye Ozay.
\newblock A bi\-simu\-la\-tion-like algorithm for abstracting control systems.
\newblock In \emph{54th Allerton Conf. Communication, Control and Computing},
  pages 569--576, September 2016.
\newblock \doi{10.1109/ALLERTON.2016.7852282}.

\bibitem[Zamani et~al.(2011)Zamani, Pola, Mazo, and Tabuada]{ZamPolMazTab:11}
Majid Zamani, Giordano Pola, Manuel Mazo, and Paulo Tabuada.
\newblock Symbolic models for nonlinear control systems without stability
  assumptions.
\newblock \emph{{IEEE} Trans. Autom. Control}, 57\penalty0 (7):\penalty0
  1804--1809, July 2011.
\newblock \doi{10.1109/TAC.2011.2176409}.

\end{thebibliography}

\iffull
\appendix

\numberwithin{proposition}{section}
\numberwithin{definition}{section}
\numberwithin{remark}{section}

\section{Fixpoint characterisation of weak and strong until}
\label{app:fixpoints}

Robust stutter bisimulations are closely linked to the existence of
\memoryless\ controllers for stutter step formulas according to
\defn~\ref{def:robust:stutter:bisimulation}~\ref{it:rsb:equivalent}.
Therefore it is important for algorithms to compute the set
$\ECS[G]\psi$ of states from which a stutter step formula~$\psi$ can
be enforced. This appendix provides characterisations of $\ECS[G]{\psi}$
that do not require exhaustive searches for \memoryless\ controllers.

Such characterisations can be derived as fixpoints of a function on the
powerset of the state space. The first step towards the development of such
a function is to identify relevant sets of predecessors and successors of
states.

\begin{definition}
Let $G = \tsystem$ be a transition system, let $s \in S$ be a state, and
let $\sigma \in \Sigma$ be a transition label. The \emph{one-step
successor} sets $\Post(s, \sigma)$ and $\Post(s)$ are
\begin{align*}       
  \Post_G(s, \sigma)
  &= \{\, s' \in S \mid (s, \sigma, s') \in \delta \,\} \ ; \\
  \Post_G(s)
  &= \bigcup_{\sigma \in \Sigma} \Post(s, \sigma) \ .
\end{align*}
\end{definition}

\begin{definition}
	Let $G = \tsystem$ be a transition system. The set of \emph{one-step robust controllable predecessors} of a state set $X \subseteq S$ is
	\begin{equation*}
		\Pre_G(X) = \{\, s \in S\mid
		\exists \, \sigma \in \Sigma \ldotp
		\emptyset \neq \Post_G(s, \sigma) \subseteq X \,\}\ .
	\end{equation*}
\end{definition}

The subscript~$G$ will be omitted in the above notations if the transition
system is clear from the context. $\Pre(X)$ contains all states from which
it is possible to ensure, by selecting a transition label~$\sigma$, that
the next state is in~$X$. It is clear that $\Post$ and~$\Pre$ are
\emph\monotonic, e.g., if $X \subseteq Y \subseteq S$ then $\Pre(X)
\subseteq \Pre(Y)$.

\citet{BaiKat:08} describe fixpoint characterisations of weak and strong
until for model checking, which can be adapted to the synthesis problem
considered here. Given a transition system $G$ with state set~$S$ and a
stutter step formula $\psi \equiv P \anyuntil T$ with $P,T \subseteq S$,
the function $\Theta\colon 2^S \to 2^S$ is defined by
\begin{equation}
  \Theta(X) = T \cup (P \cap \Pre(X)) \ .
  \label{eq:Theta}
\end{equation}
The subsets of the state set $S$ can be ordered by set inclusion to form a
complete lattice~$\poset$. It is clear that $\Theta$ is
\monotonic\ on~$\poset$, because union with~$T$, intersection with~$P$, and
the $\Pre$ operator are all \monotonic.

As $\Theta$ is \monotonic\ on the complete lattice~$\poset$, it follows by the Knaster-Tarski theorem \citep{Tar:55} that there exists a unique \emph{least fixpoint} $\lfp \Theta$ and a unique \emph{greatest fixpoint} $\gfp \Theta$ of~$\Theta$. (A \emph{fixpoint} of~$\Theta$ is a set $\mathring{X} \subseteq S$ such that $\Theta(\mathring{X}) = \mathring{X}$.) Furthermore,
\begin{align}
\lfp \Theta &= \glb \{\, X \subseteq S \mid \Theta(X) \subseteq X \,\} \label{eq:tarski:lfp} \ ; \\
\gfp \Theta &= \lub \{\, X \subseteq S \mid X \subseteq \Theta(X) \,\} \ . \label{eq:tarski:gfp}
\end{align}
It is shown in \Propn~\ref{prop:EC:W} below that the greatest fixpoint corresponds to the stutter step formula $P \weakuntil T$, and in \Propn~\ref{prop:EC:U} that the least fixpoint corresponds to $P \until T$.

A slight drawback of~\eqref{eq:tarski:lfp} and~\eqref{eq:tarski:gfp} is that they are not constructive. One construction that finds the fixpoints is transfinite recursion. For an operator $\Theta \colon 2^S \to 2^S$, its transfinite iterates $\ThetaUp\alpha$ and $\ThetaDown\alpha$ are defined for all ordinals~$\alpha$:
\begin{align*}
  \ThetaUp 0 &= \emptyset \\
    \ThetaUp{(\beta+1)} &= \Theta(\ThetaUp\beta) && \textrm{for successor ordinals;} \\
    \ThetaUp{\bigg(\lub_{\beta < \alpha} \beta\bigg)} &= \lub_{\beta < \alpha} (\ThetaUp \beta) && \textrm{for limit ordinals.} \\[\the\smallskipamount]
  \ThetaDown 0 &= S \ ; \\
    \ThetaDown{(\beta+1)} &= \Theta(\ThetaDown\beta) && \textrm{for successor ordinals;} \\
    \ThetaDown{\bigg(\lub_{\beta < \alpha} \beta\bigg)} &= \glb_{\beta < \alpha} (\ThetaDown \beta) && \textrm{for limit ordinals.}
\end{align*}
When $\Theta$ is \monotonic, as it is here, the limits of $\ThetaUp\alpha$ and $\ThetaDown\alpha$ give the least and greatest fixpoint, respectively~\citep{Cou:79}.
In summary,
\begin{align}
    \lfp \Theta &= \glb \{\, X \subseteq S \mid \Theta(X) \subseteq X \,\} =  \ThetaUp{\big(\lub_\alpha \alpha\big)}
    \label{eq:lfp:glb} \\
    \gfp \Theta &= \lub \{\, X \subseteq S \mid X \subseteq \Theta(X) \,\} = \ThetaDown{\big(\lub_\alpha  \alpha\big)}
    \label{eq:gfp:lub}
\end{align}
The following \propn~\ref{prop:EC:W} shows that the sets $\EC[G]{P
  \weakuntil T}$ and $\ECS[G]{P \weakuntil T}$ from which a weak until
formula can be enforced by a general or positional controller are both
equal to the greatest fixpoint of~$\Theta$. In the finite-state case, this
makes it possible to compute these sets using~\eqref{eq:gfp:lub}, while
termination is not guaranteed for infinite state spaces.

\begin{proposition}
  \label{prop:EC:W}
  Let $G = \tsystem$ be a \dlfree
  transition system, and let $\psi \equiv P \weakuntil T$ be a stutter step formula for~$G$. Then
  \begin{equation}
    \EC[G]{\psi} =
    \ECS[G]{\psi} =
    \gfp \Theta \ .
  \end{equation}%
  Additionally, there is a \memoryless\ controller $\overlinit{C}\colon S \to 2^\Sigma$
  such that $\langle \overlinit{C}/G,s\rangle \models \psi$ for all $s \in
  \ECS[G]{\psi}$.
\end{proposition}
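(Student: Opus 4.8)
\emph{Overall strategy.} Since every \memoryless\ controller is a general controller, $\ECS[G]{\psi} \subseteq \EC[G]{\psi}$ holds trivially, so it suffices to prove $\EC[G]{\psi} \subseteq \gfp \Theta$ and $\gfp \Theta \subseteq \ECS[G]{\psi}$. I plan to establish the second inclusion by constructing \emph{one} \dlfree \memoryless\ controller $\overlinit{C}$ that enforces $\psi$ from every state of $\gfp \Theta$; this simultaneously proves the ``additionally'' claim. The first inclusion will follow from the greatest‑fixpoint characterisation \eqref{eq:tarski:gfp} once a suitable $\Theta$‑post‑fixpoint is identified.

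\emph{The inclusion $\gfp \Theta \subseteq \ECS[G]{\psi}$ and the controller.} Write $X^* = \gfp \Theta$. Since $\Theta(X^*) = X^*$, equation \eqref{eq:Theta} gives $X^* = T \cup (P \cap \Pre(X^*))$, and this union is disjoint because $P \cap T = \emptyset$. Define $\overlinit{C}$ as follows: for $s \in P \cap \Pre(X^*)$, pick a label $\sigma_s$ with $\emptyset \neq \Post(s,\sigma_s) \subseteq X^*$ (possible by definition of $\Pre$) and set $\overlinit{C}(s) = \{\sigma_s\}$; for every other state $s$, set $\overlinit{C}(s) = \{\sigma\}$ for some $\sigma$ with $\Post(s,\sigma) \neq \emptyset$, which exists since $G$ is \dlfree. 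Then $\overlinit{C}$ is \dlfree, and it satisfies the invariant that along any path fragment of $\overlinit{C}/G$ a state in $P \cap \Pre(X^*)$ is always succeeded by a state in $X^*$, since its only enabled move leads into $X^*$. Consequently, for a path fragment $\pi = s_0 s_1 \cdots$ of $\overlinit{C}/G$ with $s_0 \in X^*$, either every $s_i$ lies in $P \cap \Pre(X^*) \subseteq P$, so $\pi \vDash \globally P$; or there is a least $m$ with $s_m \notin P \cap \Pre(X^*)$, and then the invariant forces $s_m \in X^* \setminus (P \cap \Pre(X^*)) = T$ while $s_i \in P$ for $i < m$, so $\pi \vDash P \until T$. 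In either case $\pi \vDash P \weakuntil T$, hence $\langle \overlinit{C}/G, s_0 \rangle \vDash \psi$ for all $s_0 \in X^*$, giving $X^* \subseteq \ECS[G]{\psi}$ and the final claim.

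\emph{The inclusion $\EC[G]{\psi} \subseteq \gfp \Theta$.} Fix $s \in \EC[G]{\psi}$ and a \dlfree controller $C$ with $\langle C/G, s\rangle \vDash \psi$. Let $Y$ be the set of all states $t$ that arise as the last state of some $C$‑permitted finite path fragment $s_0 \cdots s_n \in \Frags^*(C,G)$ with $s_0 = s$ and $s_i \in P$ for all $0 \leq i < n$; the length‑one fragment $s$ shows $s \in Y$. The plan is to prove $Y \subseteq \Theta(Y)$, so that \eqref{eq:tarski:gfp} yields $Y \subseteq \gfp \Theta$ and hence $s \in \gfp \Theta$. Take $t \in Y$ with witness $\rho = s_0 \cdots s_n$. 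If $t \in T$ then $t \in \Theta(Y)$, so assume $t \notin T$. First, because $C$ is \dlfree, extend $\rho$ to an infinite permitted path fragment $\pi \in \Frags^\omega(C,G)$ with first state $s$; since $\langle C/G, s\rangle \vDash \psi$ we get $\pi \vDash \globally P$ or $\pi \vDash P \until T$, and because $s_0,\ldots,s_{n-1} \in P$ (hence avoid $T$), checking both disjuncts shows $s_n = t \in P$. Second, because $C/G$ is \dlfree, the state $\rho$ has an outgoing transition, so there is $\sigma \in C(\rho)$ with $\Post(t,\sigma) \neq \emptyset$; every $t' \in \Post(t,\sigma)$ gives a permitted extension $\rho t' \in \Frags^*(C,G)$ that still has all of $s_0,\ldots,s_n$ in $P$, so $t' \in Y$. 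Thus $\emptyset \neq \Post(t,\sigma) \subseteq Y$, i.e., $t \in \Pre(Y)$, whence $t \in P \cap \Pre(Y) \subseteq \Theta(Y)$, completing $Y \subseteq \Theta(Y)$.

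\emph{Expected obstacle.} The conceptual point is choosing the right post‑fixpoint $Y$: although the witnessing controller $C$ is history‑dependent, the set of states reachable by $C$‑permitted fragments that remain inside $P$ turns out to be state‑based and closed under $\Theta$. The routine‑but‑careful part is the path‑fragment bookkeeping --- relating path fragments of $C/G$ to the permitted path fragments $\Frags^\infty(C,G)$, and verifying that every finite permitted fragment extends to an infinite one (because $C$ is \dlfree) so that the hypothesis $\langle C/G, s\rangle \vDash \psi$ can be applied to it. Given those, the fixpoint inclusions are immediate from the Knaster--Tarski characterisation \eqref{eq:tarski:gfp}.
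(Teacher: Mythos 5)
Your proposal is correct, and one half of it diverges from the paper in an interesting way. For the inclusion $\gfp \Theta \subseteq \ECS[G]{\psi}$ you do essentially what the paper does: read off a \memoryless\ controller from the fixpoint equation $\gfp\Theta = T \cup (P \cap \Pre(\gfp\Theta))$ and verify by an invariant/induction along path fragments that every controlled run from $\gfp\Theta$ satisfies $P \weakuntil T$; the only cosmetic difference is that you enable a single witnessing label per state where the paper enables all labels $\sigma$ with $\Post(s,\sigma) \subseteq \gfp\Theta$. For the inclusion $\EC[G]{\psi} \subseteq \gfp\Theta$, however, the paper shows that the set $W = \EC[G]{\psi}$ is \emph{itself} a post-fixpoint, $W \subseteq \Theta(W)$; the step ``there exists $\sigma \in C(w)$ with $\emptyset \neq \Post(w,\sigma) \subseteq W$'' implicitly relies on shifting the history-dependent controller $C$ to each successor state to witness its membership in $\EC[G]{\psi}$. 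You instead fix one state $s$ and one witnessing controller $C$ and take as post-fixpoint the set $Y$ of endpoints of $C$-permitted fragments from $s$ whose non-final states all lie in $P$; your verification that $Y \subseteq \Theta(Y)$ (using deadlock-freedom to extend fragments and the $P \weakuntil T$ semantics to place endpoints in $P \cup T$) is sound, and the argument that successors of a non-$T$ endpoint remain in $Y$ is a direct reachability statement that needs no controller shifting. The paper's choice buys a single uniform post-fixpoint for all of $\EC[G]{\psi}$; yours buys a more self-contained per-state argument at the cost of introducing the auxiliary set $Y$. Both land in $\gfp\Theta$ via the Knaster--Tarski characterisation, so the proposal is a valid alternative proof.
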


\begin{proof}
By~\eqref{eq:gfp:lub}, the greatest fixpoint of~$\Theta$ can written as
$\check{W} = \gfp \Theta = \lub \{ X \subseteq S \mid X \subseteq \Theta(X) \}$.
Then it is to be shown that $W = \EC[G]{\psi}$ and $W' =
\ECS[G]{\psi}$ are both equal to~$\check{W}$. Noting that
$\ECS[G]\psi \subseteq \EC[G]\psi$ for any formula~$\psi$, it is clear that
$W' \subseteq W$. Therefore, it is enough to show $W \subseteq
\check{W} \subseteq W'$.

\begin{itemize}
\item
  To show $W \subseteq \check{W} = \lub \{\, X \subseteq S \mid X \subseteq
  \Theta(X) \,\}$, it is enough to show $W \subseteq \Theta(W)$. Therefore
  let $w \in W = \EC[G]\psi$. Then there exists a controller
  $C$ such that $\langle C/G, w\rangle \models \psi$. If $w \in
  T$, it already holds that $w \in T \cup (P \cap \Pre(W)) = \Theta(W)$. If
  $w \notin T$, then $\langle C/G, w\rangle \models \psi \equiv P
  \weakuntil T$ implies
  that $w \in P$ and there exists $\sigma \in C(w)$ such that $\emptyset
  \neq \Post(w, \sigma) \subseteq \EC[G]{\psi} = W$, i.e., $w \in
  P \cap \Pre(W) \subseteq T \cup (P \cap \Pre(W)) = \Theta(W)$.
\item
  To show $\check{W} \subseteq W'$, first note that 
  \begin{equation}
    \label{eq:EC:AB}
    \check{W} = \Theta(\check{W}) = T \cup (P \cap \Pre(\check{W}))
            \subseteq P \cup T \ .
  \end{equation}
  Also note that $P \cap T = \emptyset$ since $\psi$ is a stutter step formula.
  Define a \memoryless\ controller $\overlinit{C} \colon S \to 2^\Sigma$ such that
  \begin{equation}
    \label{eq:EC:C}
    \overlinit{C}(s) = \begin{cases}
             \{\, \sigma \in \Sigma \mid \Post(s,\sigma) \subseteq \check{W} \,\},
             & \text{if}\ s \in \check{W} \setminus T; \\
             \Sigma, & \text{otherwise}.
    \end{cases}
  \end{equation}
  For $s \in \check{W} \setminus T$, it holds that $s \in \check{W} =
  \Theta(\check{W}) = T \cup (P \cap \Pre(\check{W}))$ and then $s \in P \cap
  \Pre(\check{W}) \subseteq \Pre(\check{W})$, so there exists $\sigma \in
  \Sigma$ such that $\emptyset \neq \Post(s,\sigma) \subseteq \check{W}$.
  Therefore, $\overlinit{C}$ is a \dlfree controller.
  
  It remains to be shown that $\langle \overlinit{C}/G,s \rangle \models \psi$ for all $s \in \check{W}$. Let $\pi = s_0s_1\cdots \in \Frags^{\omega}(\overlinit{C}/G)$ be an
  infinite path fragment in~$\overlinit{C}/G$ such that $s_0 \in \check{W}$. It is shown by
  induction on~$n$ that $s_0s_1\cdots s_n \models P \until T$ or
  $s_0s_1\cdots s_n \in \check{W}^*$.

  First $s_0 \in \check{W}^*$ since $s_0 \in \check{W}$ by assumption.

  Next consider $s_0s_1\cdots s_n s_{n+1}$, where by inductive assumption
  $s_0s_1\cdots s_n \models P \until T$ or $s_0s_1\cdots s_n \in \check{W}^*$. If $s_0s_1\cdots s_n \models P \until
  T$, it immediately holds that $s_0s_1\cdots s_n s_{n+1} \models P \until T$. Otherwise $s_0s_1\cdots s_n \not\models P \until T$ and
  $s_0s_1\cdots s_n \in (\check{W} \setminus T)^*$, so it follows
  from~\eqref{eq:EC:AB} that $s_0s_1\cdots s_n \in P^*$. Also $s_n
  \in \check{W} \setminus T$, and since $s_0s_1\cdots s_n s_{n+1}$ is a path fragment in~$\overlinit{C}/G$
  there exists $\sigma \in \overlinit{C}(s_n)$ such that $s_{n+1} \in \Post(s_n,\sigma)
  \subseteq \check{W}$ by~\eqref{eq:EC:C}. This is enough to show
  $s_0s_1\cdots s_n s_{n+1} \in \check{W}^*$.

  This completes the induction. Now consider an infinite path fragment~$\pi$
  in~$\overlinit{C}/G$ starting from $s \in \check{W}$. If some prefix of~$\pi$ satisfies
  $P \until T$, then $\pi \models P \weakuntil T \equiv \psi$. Otherwise $\pi \in
  \check{W}^\omega$ and no prefix of~$\pi$ satisfies $P \until T$, so
  $\pi \in (\check{W} \setminus T)^\omega$. Then it follows
  from~\eqref{eq:EC:AB} that $\pi \in P^\omega$, which implies $\pi \models
  P \weakuntil T \equiv \psi$. As $\pi$ was chosen to be an arbitrary path fragment starting at
  an arbitrary $s \in \check{W}$, it follows that $\check{W} \subseteq \ECS[G]{\psi} = W'$.
\item
  Additionally, it follows from the construction of~$\overlinit{C}$ in the above proof
  of $\check{W} \subseteq W'$ that $\langle \overlinit{C}/G,s\rangle \models \psi$ for all $s \in \check{W} = \ECS[G]{\psi}$.
  \QED 
\end{itemize}
\end{proof}

\Propn~\ref{prop:EC:U} is similar to \propn~\ref{prop:EC:W} and relates
$\EC[G]{P \until T}$ and $\ECS[G]{P \until T}$ to the least fixpoint
of~$\Theta$.

\begin{proposition}
  \label{prop:EC:U}
  Let $G = \tsystem$ be a \dlfree transition system, and let $\psi \equiv P \until T$ be a stutter step formula. Then
  \begin{equation}
    \EC[G]{\psi} =
    \ECS[G]{\psi} =
    \lfp \Theta \ .
  \end{equation}
  Additionally, there is a \memoryless\ controller $\overlinit{C}\colon S \to 2^\Sigma$
  such that $\langle \overlinit{C}/G,s\rangle \models \psi$ for all $s \in
  \ECS[G]{\psi}$.
\end{proposition}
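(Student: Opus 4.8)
The plan is to mirror the proof of \Propn~\ref{prop:EC:W}, using that, by~\eqref{eq:lfp:glb}, the least fixpoint is the limit of the ascending transfinite iterates, $\lfp\Theta=\ThetaUp{(\lub_\alpha\alpha)}=\bigcup_\alpha\ThetaUp\alpha$. Since $\ECS[G]{\psi}\subseteq\EC[G]{\psi}$ holds for every formula, it suffices to establish the chain $\EC[G]{\psi}\subseteq\lfp\Theta\subseteq\ECS[G]{\psi}$, which then forces all three sets to coincide; the \memoryless\ controller claimed in the ``Additionally'' part will come out of the construction used for the second inclusion, exactly as in \Propn~\ref{prop:EC:W}.

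For $\lfp\Theta\subseteq\ECS[G]{\psi}$ I would build a \memoryless\ controller that ``counts down'' the iterates. For $s\in\lfp\Theta\setminus T$, the least ordinal~$\alpha$ with $s\in\ThetaUp{\alpha}$ must be a successor $\gamma+1$ (limit stages add nothing new, and $\ThetaUp 0=\emptyset$), so from $s\in\Theta(\ThetaUp{\gamma})=T\cup(P\cap\Pre(\ThetaUp{\gamma}))$ and $s\notin T$ we get $s\in P$ and some $\sigma$ with $\emptyset\neq\Post(s,\sigma)\subseteq\ThetaUp{\gamma}$; set $\overlinit{C}(s)=\{\sigma\}$ for one such~$\sigma$, and $\overlinit{C}(s)=\Sigma$ for all other~$s$, which is \dlfree because $G$ is. For any path fragment $s_0s_1\cdots$ of $\overlinit{C}/G$ with $s_0\in\lfp\Theta$, one shows by induction that each $s_i$ lies in $\lfp\Theta$ and that, as long as $s_i\notin T$, we have $s_i\in P$ and the least ordinal $\alpha_i$ with $s_i\in\ThetaUp{\alpha_i}$ satisfies $\alpha_{i+1}<\alpha_i$. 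An infinite strictly decreasing chain of ordinals is impossible, so some $s_m\in T$ while $s_0,\ldots,s_{m-1}\in P$; hence the path satisfies $P\until T$. This proves $\lfp\Theta\subseteq\ECS[G]{\psi}$ and simultaneously the ``Additionally'' clause, since a single \memoryless\ controller works for all of $\lfp\Theta$.

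For $\EC[G]{\psi}\subseteq\lfp\Theta$ I would argue the contrapositive by constructing a ``spoiling'' path. Suppose $s\notin\lfp\Theta$. Since $\lfp\Theta=\Theta(\lfp\Theta)=T\cup(P\cap\Pre(\lfp\Theta))$, we have $s\notin T$ and either $s\notin P$ or $s\notin\Pre(\lfp\Theta)$. Let $C$ be any \dlfree controller. Starting at $s_0=s$, extend a $C$-permitted path fragment while maintaining $s_i\notin\lfp\Theta$: if $s_i\notin P$, extend $s_0\cdots s_i$ to an arbitrary infinite permitted path fragment (possible since $C$ is \dlfree); as $s_0,\ldots,s_i\notin\lfp\Theta\supseteq T$ and $s_i\notin P\cup T$, no index can witness $P\until T$, so that path violates~$\psi$. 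Otherwise $s_i\in P$, hence $s_i\notin\Pre(\lfp\Theta)$; choosing any $\sigma\in C(s_0\cdots s_i)$ and $s_{i+1}'$ with $(s_i,\sigma,s_{i+1}')\in\delta$ (which exist as $C$ is \dlfree), the set $\Post(s_i,\sigma)$ is nonempty and therefore not contained in $\lfp\Theta$, so we may pick $s_{i+1}\in\Post(s_i,\sigma)\setminus\lfp\Theta$ and continue. If this second case recurs forever, the resulting infinite permitted path never leaves $S\setminus\lfp\Theta$, hence never meets $T$, and again fails $P\until T$. Either way $C$ admits a permitted path from~$s$ violating~$\psi$, so $s\notin\EC[G]{\psi}$.

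The real work, and the point of genuine divergence from \Propn~\ref{prop:EC:W} (where the analogous inclusion $W\subseteq\gfp\Theta$ was a one-line post-fixpoint argument), is the $\EC[G]{\psi}\subseteq\lfp\Theta$ direction. The delicate points there are that transitions need not be finitely branching, so no König-type argument is available and one must rely purely on $\lfp\Theta$ being a fixpoint (equivalently, one may rank the well-founded tree of $T$-avoiding permitted path fragments); that the dependent-choice construction must preserve the invariant $s_i\notin\lfp\Theta$ at every step; and that the case $s_i\notin P$ is dispatched by the observation that no $T$-state has occurred yet, so the strong until is already irrecoverably violated. Deadlock-freedom of~$C$ is used essentially in both branches to guarantee the spoiling path can be taken infinite.
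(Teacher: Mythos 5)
Your proof is correct and follows essentially the same route as the paper's: the inclusion $\lfp\Theta\subseteq\ECS[G]{\psi}$ via a \memoryless\ controller read off the transfinite iterates (your path-wise ``strictly decreasing ordinal rank'' argument is just a repackaging of the paper's transfinite induction over the sets $T_\alpha$), and the inclusion $\EC[G]{\psi}\subseteq\lfp\Theta$ via a spoiling path that exploits $\lfp\Theta$ being a fixpoint (the paper phrases this as a contradiction while maintaining the invariant $u_i\in(U\cap P)\setminus\lfp\Theta$, which is why it never needs your extra case $s_i\notin P$, but the two arguments are interchangeable). No gaps.
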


\begin{proof}
Denote the least fixpoint of~$\Theta$ by $\hat{U} = \lfp \Theta$.
It is to be shown that $U =
\EC[G]{\psi}$ and $U' = \ECS[G]{\psi}$ are both equal
to~$\hat{U}$. Noting that $\ECS[G]\psi \subseteq \EC[G]\psi$ for any
formula~$\psi$, it is clear that $U' \subseteq U$. Then it is enough
to show $U \subseteq \hat{U} \subseteq U'$.

\begin{itemize}
\item
  To show $U \subseteq \hat{U}$, assume the contrary, i.e., there exists
  $u_0 \in U$ such that $u_0 \notin \hat{U}$. As $u_0 \in U = \EC[G]{\psi}$, there exists a controller~$C$ such that $\langle
  C/G,u_0\rangle \models \psi$.
  Note that $u_0 \notin \hat{U} = \Theta(\hat{U}) = T \cup (P \cap
  \Pre(\hat{U})) \supseteq T$, i.e., $u_0 \notin T$. Then $\langle
  C/G,u_0\rangle \models \psi \equiv P \until T$ implies $u_0 \in P$, i.e.,
  $u_0 \in (U \cap P) \setminus \hat{U}$.
  From $\langle C/G,u_0\rangle \models P \until T$ and $u_0 \notin T$, it
  follows that there exists $\sigma_0 \in C(u_0)$ such that
  $\Post(u_0,\sigma_0) \neq \emptyset$ and $\langle C/G, u_0u_1\rangle
  \models P \until T$ for all $u_1 \in \Post(u_0,\sigma_0)$.
  As $u_0 \notin \hat{U} = \Theta(\hat{U}) = T \cup (P \cap \Pre(\hat{U}))
  \supseteq P \cap \Pre(\hat{U})$ and $u_0 \in P$, it follows that $u_0
  \notin \Pre(\hat{U})$. This means that there does not exist $\sigma \in
  \Sigma$ such that $\emptyset \neq \Post(u_0,\sigma) \subseteq \hat{U}$.
  Then for above $\sigma_0 \in C(u_0)$ there exists $u_1 \in
  \Post(u_0,\sigma_0)$ such that $u_1 \notin \hat{U}$. Then $u_1 \notin
  \hat{U} \supseteq T$ and $\langle C/G, u_0u_1\rangle \models P \until T$
  together imply $u_1 \in \EC[G]{P \until T} \cap P = U \cap P$. Thus $u_1
  \in (U \cap P) \setminus \hat{U}$.
  By induction, it is possible to construct an infinite sequence of states
  $u_i \in (U \cap P) \setminus \hat{U} \subseteq P \setminus T$ and labels
  $\sigma_i \in C(u_0 \cdots u_i)$ such that $u_{i+1} \in
  \Post(u_i,\sigma_i)$ for all $i \geq 0$. But then it does not hold that
  $\langle C/G,u_0\rangle \models P \until T \equiv \psi$, which
  contradicts the choice of~$C$ as a controller enforcing this property.

\item
  To show $\hat{U} \subseteq U'$, recall that $\hat{U} = \lfp\Theta =
  \lub_\alpha (\ThetaUp\alpha)$ by~\eqref{eq:lfp:glb}. For each
  ordinal~$\alpha$, define a set
  \begin{equation}
    T_\alpha = (\ThetaUp\alpha) \setminus
               \lub_{\beta<\alpha} (\ThetaUp\beta).
  \end{equation}
  For successor ordinals~$\alpha$, the set~$T_\alpha$ contains precisely the elements added in the corresponding iteration of~$\Theta$, while $T_\alpha = \emptyset$ for limit ordinals. The nonempty sets $T_\alpha$ form a partition of~$\hat{U}$ by~\eqref{eq:lfp:glb}, i.e., for each $s \in \hat{U}$ there exists exactly one ordinal~$\alpha$ such that $s \in T_\alpha$. Therefore, the following is a well-defined  \memoryless\ controller $\overlinit{C}\colon S \to 2^\Sigma$:
  \begin{equation}
    \label{eq:tildeU:C}
    \overlinit{C}(s) = \begin{cases}
             \mathrlap{\{\, \sigma \in \Sigma \mid \Post(s,\sigma) \subseteq
                  \lub_{\beta<\alpha} (\ThetaUp\beta) \,\},} & \\
             & \kern3em\text{if}\ s \in T_\alpha \setminus T; \\
             \Sigma, & \kern3em\text{if}\ s \notin \hat{U} \setminus T.
    \end{cases}
  \end{equation}
  It is shown by transfinite induction that, for all $s \in T_\alpha
  \setminus T$,
  \begin{align}
    & \exists \sigma \in \overlinit{C}(s)
      \ldotp \Post(s,\sigma) \neq \emptyset \ ; \label{eq:tildeU:enables} \\
    & \langle \overlinit{C}/G, s\rangle \models \psi \ .
      \label{eq:tildeU:models}
  \end{align}
  Condition~\eqref{eq:tildeU:enables} implies that $\overlinit{C}$ is a
  \dlfree controller, and as $\langle \overlinit{C}/G, s\rangle
  \models P \until T \equiv \psi$ holds trivially for $s \in T$,
  condition~\eqref{eq:tildeU:models} implies $\langle \overlinit{C}/G,
  s\rangle \models \psi$ for all $s \in \hat{U}$. This will be enough to
  show the claim $\hat{U} \subseteq U'$.
  So let $\alpha$ be an ordinal, and assume as inductive assumption that \eqref{eq:tildeU:enables} and~\eqref{eq:tildeU:models} hold for all ordinals $\beta < \alpha$ and all states $s' \in T_\beta \setminus T$. Let $s \in T_\alpha \setminus T$ and consider two cases.
    
  If $\alpha = \gamma + 1$ is a successor ordinal, then $s \in T_\alpha
  \subseteq \ThetaUp\alpha = \Theta(\ThetaUp\gamma) = T \cup (P \cap
  \Pre(\ThetaUp\gamma))$. As $s \notin T$, it follows that $s \in P$ and
  there exists $\sigma \in \Sigma$ such that $\emptyset \neq
  \Post(s,\sigma) \subseteq \ThetaUp\gamma \subseteq \lub_{\beta<\alpha}
  (\ThetaUp\beta)$. Then $\sigma \in \overlinit{C}(s)$ by~\eqref{eq:tildeU:C},
  showing~\eqref{eq:tildeU:enables}. Also by~\eqref{eq:tildeU:C}, for all
  $\sigma \in \overlinit{C}(s)$ it holds that $\Post(s,\sigma) \subseteq
  \lub_{\beta<\alpha} (\ThetaUp\beta) = \lub_{\beta<\alpha} T_\beta$. Then
  by inductive assumption $\langle \overlinit{C}/G, s'\rangle \models \psi
  \equiv P \until T$ for all states $s' \in \Post(s,\sigma)$ with $\sigma
  \in \overlinit{C}(s)$. Together with $s \in P$, this implies $\langle
  \overlinit{C}/G, s\rangle \models P \until T \equiv \psi$ and
  shows~\eqref{eq:tildeU:models}.
  
  If $\alpha = \lub_{\beta<\alpha} \beta$ is a limit ordinal, then
  $T_\alpha = (\ThetaUp\alpha) \setminus \lub_{\beta<\alpha}(\ThetaUp\beta) =
  \lub_{\beta<\alpha}(\ThetaUp\beta) \setminus
  \lub_{\beta<\alpha}(\ThetaUp\beta) = \emptyset$ and there is nothing to
  be shown.
\item
  Additionally, it follows from the construction of~$\overlinit{C}$ in the above proof
  of $\hat{U} \subseteq U'$ that $\langle \overlinit{C}/G,s\rangle \models \psi$
  for all $s \in \hat{U} = \ECS[G]{\psi}$. \QED
\end{itemize}
\end{proof}

\section{Robust stutter bisimulation quotient}
\label{app:quotient}

This section gives the necessary details to prove
\Thm~\ref{thm:quotient:robust:stutter:bisimilar}, which shows that a
transition system is robust stutter bisimilar to its quotient modulo a
\rdsb. The equivalence of a transition system~$G$ with its quotient~$G/\RR$
will be established by considering the union transition system $G \cup
(G/\RR)$, and constructing an equivalence relation on its state set $S \cup
(S/\RR)$. For this construction, an equivalence relation $\RR \subseteq S
\times S$ is extended to a relation on $S \cup (S/\RR)$ in a natural way by
defining that every element $s \in S$ is also equivalent to its equivalence
class $[s]_\RR \in S/\RR$.

\begin{definition}
\label{def:hatR}
Let $\RR \subseteq S \times S$ be an equivalence relation.
The \emph{extension} of \RR\ to $S \cup (S/\RR)$
is the smallest equivalence relation that includes
\begin{equation}
	\RR \cup \{\, (s, [s]_\RR) \mid s \in S \,\} \ .
\end{equation}
\end{definition}

\begin{remark}
\label{rem:hatR}
An explicit construction of the extension of \RR\ to $S \cup (S/\RR)$ is
given by
\begin{align*}
  \hat\RR &=
  \RR \cup
  \{\, (\tilde{s}, \tilde{s}) \mid \tilde{s} \in S/\RR \,\} \cup
  {} \\ & \hphantom{{}={}}
  \{\, (s, [s]_\RR) \mid s \in S \,\} \cup
  \{\, ([s]_\RR, s) \mid s \in S \,\} \ .
\end{align*}
That is, the extension includes the relation~\RR\ and the identity relation
on~$S/\RR$, plus the pairs $(s, [s]_\RR)$ and $([s]_\RR, s)$ that relate an
element $s \in S$ to the equivalence class~$[s]_\RR \in S/\RR$ containing it.
\end{remark}

Conversely, an equivalence relation $S \cup (S/\RR)$ can be
\emph{restricted} to relations on the subsets $S$ and~$S/\RR$, and such
restriction results in equivalence relations.

\begin{definition}
	\label{def:restrict}
	Let $\RR \subseteq S \times S$ be an equivalence relation, and let $S_1
	\subseteq S$. The \emph{restriction} of $\RR$ to~$S_1$,
	denoted~$\RR|S_1$, is $\RR|S_1 = \RR \cap (S_1 \times S_1)$.
\end{definition}

\begin{lemma}\label{lem:restrict}
Let $S_1$ and $S_2$ be disjoint sets and let $S = S_1 \cup S_2$, and let
$\RR$ be an equivalence relation on~$S$. Then
\begin{enumerate}
	\item
	\label{it:restrict:equivalence}
	$\RR|S_1$ is an equivalence relation on~$S_1$.
	\item 
	\label{it:restrict:relation}
	For all $s_1 \in S_1$ it holds that $[s_1]_{\RR|S_1} = [s_1]_\RR \cap S_1$.
	\item
	\label{it:restrict:superblock}
	$T \in \SB(\RR)$ implies $T \cap S_1 \in \SB(\RR|S_1)$.
\end{enumerate}
\end{lemma}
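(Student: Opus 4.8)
The plan is to establish the three parts in the order stated, each being a direct unfolding of the relevant definitions, with part~\ref{it:restrict:superblock} leaning on part~\ref{it:restrict:relation}.

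For part~\ref{it:restrict:equivalence}, I would check reflexivity, symmetry, and transitivity of $\RR|S_1 = \RR \cap (S_1 \times S_1)$ from the corresponding properties of~$\RR$ on~$S$. Reflexivity: every $s_1 \in S_1 \subseteq S$ satisfies $(s_1,s_1) \in \RR$ and trivially $(s_1,s_1) \in S_1 \times S_1$, hence $(s_1,s_1) \in \RR|S_1$. Symmetry: if $(s_1,s_1') \in \RR|S_1$ then $(s_1,s_1') \in \RR$, so $(s_1',s_1) \in \RR$ by symmetry of~$\RR$, and $(s_1',s_1) \in S_1 \times S_1$, giving $(s_1',s_1) \in \RR|S_1$. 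Transitivity is analogous: intersecting with $S_1 \times S_1$ preserves the chaining property since any two composable pairs in $S_1 \times S_1$ already have all their endpoints in~$S_1$.

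For part~\ref{it:restrict:relation}, the key observation is that for $s_1, s' \in S_1$ one automatically has $(s_1,s') \in S_1 \times S_1$, so $(s_1,s') \in \RR|S_1$ iff $(s_1,s') \in \RR$. Hence
$[s_1]_{\RR|S_1} = \{\, s' \in S_1 \mid (s_1,s') \in \RR|S_1 \,\} = \{\, s' \in S_1 \mid (s_1,s') \in \RR \,\} = \{\, s' \in S \mid (s_1,s') \in \RR \,\} \cap S_1 = [s_1]_\RR \cap S_1$.
For part~\ref{it:restrict:superblock}, I would apply \defn~\ref{def:superblock} directly: given $T \in \SB(\RR)$, take any $x_1 \in T \cap S_1$ and any $x_2 \in S_1$ with $(x_1,x_2) \in \RR|S_1$. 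Since $\RR|S_1 \subseteq \RR$, also $(x_1,x_2) \in \RR$; since $x_1 \in T$ and $T$ is a superblock of~$\RR$, it follows that $x_2 \in T$, and together with $x_2 \in S_1$ this yields $x_2 \in T \cap S_1$, so $T \cap S_1 \in \SB(\RR|S_1)$. Alternatively, invoking the characterisation of superblocks as unions of equivalence classes together with part~\ref{it:restrict:relation}: writing $T = \bigcup_i [x_i]_\RR$ gives $T \cap S_1 = \bigcup_i ([x_i]_\RR \cap S_1) = \bigcup_i [x_i]_{\RR|S_1}$, a union of $\RR|S_1$-classes and hence a superblock of~$\RR|S_1$.

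The argument is entirely routine, so I do not anticipate a genuine obstacle; the only mild subtlety is that part~\ref{it:restrict:superblock} reads most cleanly once the description of $\RR|S_1$-classes from part~\ref{it:restrict:relation} is available, which is precisely why the parts are proved in this order.
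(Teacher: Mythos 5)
Your proposal is correct and follows essentially the same route as the paper's proof: each part is verified by directly unfolding the definitions of restriction, equivalence class, and superblock, with part~\ref{it:restrict:superblock} argued exactly as in the paper (via $\RR|S_1 \subseteq \RR$ and closure of $T$ under~$\RR$). The alternative union-of-classes argument you sketch for part~\ref{it:restrict:superblock} is a valid bonus but not needed.
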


\begin{proof}
\ref{it:restrict:equivalence}
As \RR\ is reflexive, $(s, s) \in \RR \cap (S_1 \times S_1) = \RR|S_1$
for any $s \in S_1$, so $\RR|S_1$ is reflexive.
If $(s, t) \in \RR|S_1$ then $(s, t) \in \RR$ and $s, t \in S_1$. As $\RR$
is symmetric, it follows that $(t, s) \in \RR \cap (S_1 \times S_1) =
\RR|S_1$, so $\RR|S_1$ is symmetric.
If $(s, t) \in \RR|S_1$ and $(t, u) \in \RR|S_1$, then $(s, t), (t, u) \in
\RR$ and $s, t, u \in S_1$. As $\RR$ is transitive, it likewise follows
that $(s, u) \in \RR|S_1$, so $\RR|S_1$ is transitive.

\ref{it:restrict:relation}
Let $s_1 \in S_1$ and consider an arbitrary element $s \in S$.
Then $s \in [s_1]_\RR \cap S_1$
is equivalent to $(s_1,s) \in \RR \cap (S_1 \times S_1) = \RR|S_1$,
which in turn is equivalent to $s \in [s_1]_{\RR|S_1}$.

\ref{it:restrict:superblock}
Let $T \in \SB(\RR)$. 
To show $T \cap S_1 \in \SB(\RR|S_1)$, following \defn~\ref{def:superblock},
it is to be shown that, if $s \in T \cap S_1$ and $(s, t) \in \RR|S_1$,
then also $t \in T \cap S_1$.
So let $s \in T \cap S_1$ and $(s, t) \in \RR|S_1$.
From $(s, t) \in \RR|S_1 \subseteq \RR$ and $s \in T \in \SB(\RR)$, it
follows that $t \in T$.
Further, $(s, t) \in \RR|S_1$ implies $t \in S_1$,
and thus $t \in T \cap S_1$.
\end{proof}

The following lemma shows how a stutter step formula for union transition
system is related to a stutter step formula on only one of the systems in
the union.

\begin{lemma}
\label{lem:Gunion}
Let $G_i = \tsystem[i]$ for $i=1,2$ be two \dlfree transition systems
with disjoint state sets, and let $G = G_1 \cup G_2 = \tsystem$.
For any stutter step formula $P \anyuntil T$ with $P,T \subseteq S$ it
holds that
\begin{equation}
	\ECS[G]{P \anyuntil T} \cap S_1 =
	\ECS[G_1]{(P \cap S_1) \anyuntil (T \cap S_1)} \ .
\end{equation}
\end{lemma}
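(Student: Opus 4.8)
The plan is to exploit the structural fact that, since $S_1 \cap S_2 = \emptyset$ and $\delta = \delta_1 \cup \delta_2$ with $\delta_i \subseteq S_i \times \Sigma_i \times S_i$, the transition relation of $G$ contains no transition between $S_1$ and $S_2$; hence every path fragment of $G$ starting in $S_1$ stays inside $S_1$, and $G$ itself is \dlfree. I would first record two consequences. (i) For a path fragment $\pi$ whose states all lie in $S_1$ and any $X \subseteq S$, every suffix $\pi'$ of $\pi$ has its head in $S_1$, so $\pi' \models X$ iff $\pi' \models X \cap S_1$; pushing this through the \LTLnn\ semantics of $\until$ (and then the derived $\weakuntil$) gives $\pi \models P \anyuntil T$ iff $\pi \models (P \cap S_1) \anyuntil (T \cap S_1)$. (ii) A \memoryless\ controller can be moved between $G$ and $G_1$: given a \dlfree\ \memoryless\ $\overlinit{C}$ for $G$, its restriction $\overlinit{C}_1(t) = \overlinit{C}(t) \cap \Sigma_1$ is a \memoryless\ controller for $G_1$, and since no transition leaves $S_1$, \dlfree-ness of $\overlinit{C}/G$ forces, for every $t \in S_1$, some $\sigma \in \overlinit{C}(t) \cap \Sigma_1$ with $\Post_{G_1}(t, \sigma) \neq \emptyset$, so $\overlinit{C}_1$ is \dlfree; conversely, a \dlfree\ \memoryless\ $\overlinit{C}_1$ for $G_1$ extends to $\overlinit{C}(t) = \overlinit{C}_1(t)$ for $t \in S_1$ and $\overlinit{C}(t) = \Sigma$ for $t \in S_2$, which is \dlfree\ for $G$ (on $S_1$ because $\overlinit{C}_1$ is, on $S_2$ because $G_2$ is). In either case, for $s \in S_1$ the permitted infinite path fragments of $\overlinit{C}/G$ starting at $s$ coincide with those of $\overlinit{C}_1/G_1$ starting at $s$ (all contained in $S_1$, with the controller decisions agreeing on $S_1$).

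With these in hand, the $\subseteq$ inclusion goes as follows: take $s \in \ECS[G]{P \anyuntil T} \cap S_1$ witnessed by $\overlinit{C}$; then $\overlinit{C}_1$ is \dlfree\ for $G_1$, and using the identification of $\Frags^\omega(C/G)$ with permitted path fragments together with (i) and the path-fragment coincidence, $\langle \overlinit{C}_1/G_1, s \rangle \models (P \cap S_1) \anyuntil (T \cap S_1)$, so $s \in \ECS[G_1]{(P \cap S_1) \anyuntil (T \cap S_1)}$. The $\supseteq$ inclusion is symmetric: an $s$ from the right-hand side lies in $S_1$, is witnessed by some \dlfree\ \memoryless\ $\overlinit{C}_1$ for $G_1$, which extends to a \dlfree\ $\overlinit{C}$ for $G$ with the same permitted path fragments from $s$, whence $\langle \overlinit{C}/G, s \rangle \models P \anyuntil T$ and $s \in \ECS[G]{P \anyuntil T} \cap S_1$.

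I expect no genuinely deep obstacle; the effort is in the bookkeeping — keeping the label sets $\Sigma$ versus $\Sigma_1$ straight, handling \dlfree-ness carefully in both directions (the $\supseteq$ direction is exactly where the hypothesis that $G_2$ is \dlfree\ is used), and treating the $\weakuntil$ case of observation~(i). A shorter but less self-contained alternative is to invoke the fixpoint characterisations of \props~\ref{prop:EC:W} and~\ref{prop:EC:U}: with $\Theta(X) = T \cup (P \cap \Pre_G(X))$ and $\Theta_1(Y) = (T \cap S_1) \cup ((P \cap S_1) \cap \Pre_{G_1}(Y))$, the absence of cross transitions yields $\Pre_G(X) \cap S_1 = \Pre_{G_1}(X \cap S_1)$, hence $\Theta(X) \cap S_1 = \Theta_1(X \cap S_1)$, and a transfinite induction on the iterates $\Theta^{\uparrow\alpha}$ and $\Theta^{\downarrow\alpha}$ transfers this to the least and greatest fixpoints, i.e.\ to the claimed equality for $\until$ and for $\weakuntil$ respectively.
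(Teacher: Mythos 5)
Your proposal is correct and follows essentially the same route as the paper's proof: restricting a \dlfree\ \memoryless\ controller to $\Sigma_1$ for the $\subseteq$ direction, extending one by $\Sigma$ on $S_2$ for $\supseteq$ (using \dlfree-ness of $G_2$ exactly where you say), and exploiting that paths from $S_1$ never leave $S_1$ so that $P \anyuntil T$ and $(P \cap S_1) \anyuntil (T \cap S_1)$ agree on such paths. The fixpoint alternative you sketch is a genuine variant, but your primary argument matches the paper's.
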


\begin{proof}
First, let $s_1 \in \ECS[G]{P \anyuntil T} \cap S_1$.
Then $s_1 \in S_1$, and there exists a \dlfree
\memoryless\ controller $\overlinit{C} 
\colon S \to 2^\Sigma$ such that $\langle \overlinit{C}/G, s_1 \rangle
\models P \anyuntil T$.
Define a \memoryless\ controller $\overlinit{C}_1 \colon S_1 \to 2^{\Sigma_1}$
for~$G_1$ with $\overlinit{C}_1(s) = \overlinit{C}(s) \cap \Sigma_1$ for $s \in S_1$.
Note that $\overlinit{C}(s) \cap \Sigma_1 \neq \emptyset$ for $s \in S_1$ as
$\overlinit{C}$ is \dlfree and by the construction of~$G$ as disjoint
union $G_1 \cup G_2$, only labels in~$\Sigma_1$ are enabled at $s
\in S_1$. Therefore $\overlinit{C}_1$ is \dlfree.
To show $\langle \overlinit{C}_1/G_1, s_1 \rangle \models (P \cap S_1) \anyuntil
(T \cap S_1)$, assume $\pi \in \Frags^\omega(\overlinit{C}_1, G_1)$ such that $s_1
\sqsubseteq \pi$.
Then $\pi \in \Frags^\omega(G_1) \subseteq \Frags^\omega(G)$.
Also $\pi \in S_1^\omega$, which means that only labels in~$\Sigma_1$ are
enabled at states of~$\pi$, so that $\pi \in \Frags^\omega(\overlinit{C},G)$ by
construction of~$\overlinit{C}_1$.
As $\langle \overlinit{C}_1/G_1, s_1 \rangle \models P \anyuntil T$, it follows
that $\pi \models P \anyuntil T$. Given $\pi \in S_1^\omega$, this
implies $\pi \models (P \cap S_1) \anyuntil (T \cap S_1)$.
Hence, $s_1 \in \ECS[G_1]{(P \cap S_1) \anyuntil (T \cap S_1)}$ and therefore $\ECS[G]{P \anyuntil T} \cap S_1 \subseteq \ECS[G_1]{(P \cap S_1) \anyuntil (T \cap S_1)}$.

Conversely, let $s_1 \in \ECS[G_1]{(P \cap S_1) \anyuntil (T \cap S_1)}$.
Then there exists a \dlfree \memoryless\ controller~$\overlinit{C}_1$ such
that $\langle \overlinit{C}_1/G_1, s_1 \rangle \models (P \cap S_1)
\anyuntil (T \cap S_1)$.
Clearly either $s_1 \in (P \cap S_1) \subseteq S_1$ or $s_1 \in (T \cap
S_1) \subseteq S_1$, and thus $s_1 \in S_1$.
Define a \memoryless\ controller $\overlinit{C} \colon S \to 2^\Sigma$ for~$G$
such that $\overlinit{C}(s) = \overlinit{C}_1(s)$ if $s \in S_1$, and $\overlinit{C}(s) =
\Sigma$ otherwise.
$\overlinit{C}$ behaves like $\overlinit{C}_1$ for all states in~$S_1$.
As $\overlinit{C}_1$ and~$G_2$ are \dlfree, $\overlinit{C}$ is also
\dlfree.
To show $\langle \overlinit{C}/G, s_1 \rangle \models P \anyuntil T$, let $\pi
\in \Frags^\omega(\overlinit{C}, G)$ such that $s_1 \sqsubseteq \pi$.
Then $\pi \in \Frags^\omega(G)$, and as
$G = G_1 \cup G_2$ is composed of two disjoint transition systems, all
the states on the path~$\pi$ must either be in~$S_1$ or in~$S_2$.
Then, since $s_1 \sqsubseteq \pi$ with $s_1 \in S_1$, it follows that $\pi
\in S_1^\omega$ and $\pi \in \Frags^\omega(G_1)$.
By construction also $\overlinit{C}(s) = \overlinit{C}_1(s)$ for all states~$s$
on the path fragment~$\pi$, and thus $\pi \in \Frags^\omega(\overlinit{C}_1,G_1)$.
As $\langle \overlinit{C}_1/G_1, s_1 \rangle \models (P \cap S_1) \anyuntil (T
\cap S_1)$, it follows that $\pi \models (P \cap S_1) \anyuntil (T \cap
S_1)$, and this implies $\pi \models P \anyuntil T$.
Hence, $s_1 \in \ECS[G]{P \anyuntil T}$ and therefore $\ECS[G_1]{(P \cap S_1) \anyuntil (T \cap S_1)} \subseteq \ECS[G]{P \anyuntil T}$.
\end{proof}

Using the above lemmas, it is now possible to prove
\thm~\ref{thm:quotient:robust:stutter:bisimilar}, by showing that the
extension of the relation~\RR\ to states of the combined system $G \cup
(G/\RR)$ is a robust stutter bisimulation.

\begin{repeattheorem}{thm:quotient:robust:stutter:bisimilar}
Let $G$ be a transition system and \RR\ be a robust stutter simulation on~$G$. Then $G \approx G/\RR$.
\end{repeattheorem}

\begin{proof}
Let $G = \tsystem$, and let $\hat{\RR}$ be the extension of \RR\ to $S \cup (S/\RR)$.
It will be shown that $\hat{\RR}$ is a robust stutter bisimulation for the transition system $\hat{G} = G \cup (G/\RR)$.
As~$\hat\RR$ is an equivalence relation by \defn~\ref{def:hatR}, it remains
to prove conditions \ref{it:rsb:label}
and~\ref{it:rsb:equivalent}
of \defn~\ref{def:robust:stutter:bisimulation}.
Consider a pair $(s,t) \in \hat\RR$.
There are four cases to consider depending on whether $s$ or~$t$ are elements of~$S$ or
of~$S/\RR$.
\begin{itemize}
\item $(s, t) \in \hat{\RR}$ where $s, t \in S$. Then $(s, t) \in \RR$ by
  \defn~\ref{def:hatR}.
  As \RR\ is a robust stutter bisimulation on~$G$, it holds by
  \defn~\ref{def:robust:stutter:bisimulation} that \ref{it:rsb:label}
  $L(s) = L(t)$ and \ref{it:rsb:equivalent} $s \in \ECS[G]\psi$ implies $t
  \in \ECS[G]\psi$ for every \RR-superblock step formula $\psi$
  from~$[s]_\RR$.
  Condition~\ref{it:rsb:label} for $\hat\RR$ and~$\hat{G}$ follows immediately.
  For condition~\ref{it:rsb:equivalent}, let $\hat\psi \equiv
  [s]_{\hat{\RR}} \anyuntil \hat{T}$ for some $\hat{T} \in
  \SB(\hat{\RR})$ such that $[s]_{\hat\RR} \cap \hat{T} = \emptyset$,
  and let $s \in \ECS[\hat{G}]{\hat\psi}$. It is to be shown that $t \in
  \ECS[\hat{G}]{\hat\psi}$.

  First note that $[s]_{\RR\vphantom{\hat\RR}} = [s]_{\hat\RR} \cap S$ by
  \lemm~\ref{lem:restrict}~\ref{it:restrict:relation}, and $\hat{T} \cap S
  \in \SB(\RR)$ by \lemm~\ref{lem:restrict}~\ref{it:restrict:superblock}.
  Let $T = \hat{T} \cap S$.
  Then $[s]_\RR \cap T = ([s]_{\hat\RR} \cap S) \cap
  (\hat{T} \cap S) \subseteq [s]_{\hat\RR} \cap \hat{T} = \emptyset$, so
  that $\psi \equiv [s]_\RR \anyuntil T$ is an
  \RR-superblock step formula from~$[s]_\RR$.
  Also, $s \in \ECS[\hat{G}]{\hat\psi}$ means that there exists a \dlfree
  controller for~$\hat{G}$, so $\hat{G}$, $G$, and~$G/\RR$ are \dlfree.
  By \lemm~\ref{lem:Gunion}, it follows that $s \in \ECS[\hat{G}]{\hat\psi}
  \cap S = \ECS[\hat{G}]{[s]_{\hat\RR} \anyuntil \hat{T}} \cap S =
  \ECS[G]{([s]_{\hat\RR} \cap S) \anyuntil (\hat{T} \cap S)} = \ECS[G]{[s]_\RR
  \anyuntil T} = \ECS[G]\psi$.
  Since $\psi$ is an \RR-superblock step formula from~$[s]_\RR$ and
  \RR\ is a \rdsb, this
  implies $t \in \ECS[G]\psi = \ECS[G]{[s]_\RR \anyuntil T}
  = \ECS[G]{([s]_{\hat\RR} \cap S) \anyuntil (\hat{T} \cap S)} =
  \ECS[\hat{G}]{[s]_{\hat\RR} \anyuntil \hat{T}} \cap S =
  \ECS[\hat{G}]{\hat\psi} \cap S \subseteq \ECS[\hat{G}]{\hat\psi}$, again
  by \lemm~\ref{lem:Gunion}.
	
\item $(s, \tilde{s}) \in \hat{\RR}$ where $s \in S$ and $\tilde{s} \in S/\RR$.
  By \defn~\ref{def:hatR} it is clear that $s \in \tilde{s}$, which means
  that $\tilde{s} = [s]_\RR$.
  Note that $L_\RR(\tilde{s}) = L_{\RR}([s]_\RR) = L(s)$ by
  \defn~\ref{def:quotient:system}, showing condition~\ref{it:rsb:label} of
  \defn~\ref{def:robust:stutter:bisimulation}.
  For condition~\ref{it:rsb:equivalent}, let $\hat\psi \equiv
  [s]_{\hat{\RR}} \anyuntil \hat{T}$ for some $\hat{T} \in
  \SB(\hat{\RR})$ such that $[s]_{\hat\RR} \cap \hat{T} =
  \emptyset$, and let $s \in \ECS[\hat{G}]{\hat\psi}$. It is to be shown
  that $\tilde{s} \in \ECS[\hat{G}]{\hat\psi}$.

  Firstly, $s \in \ECS[\hat{G}]{\hat\psi}$ implies that there exists a
  \dlfree controller for~$\hat{G}$, and thus $\hat{G}$, $G$, and~$G/\RR$
  are \dlfree.
  Furthermore,
  \begin{align*}
    s & \in \ECS[\hat{G}]{\hat\psi} \cap S \\
      & = \ECS[\hat{G}]{[s]_{\hat{\RR}} \anyuntil \hat{T}} \cap S \\
      & = \ECS[G]{([s]_{\hat{\RR}} \cap S) \anyuntil (\hat{T} \cap S)}
      &   \text{(by \lemm~\ref{lem:Gunion})} \\
      & = \ECS[G]{[s]_\RR \anyuntil (\hat{T} \cap S)}
      &   \kern-1em
          \text{(by \lemm~\ref{lem:restrict}~\ref{it:restrict:relation})} \\
      & = \ECS[G]{\tilde{s} \anyuntil (\hat{T} \cap S)} \ .
      &   \text{(as $\tilde{s} = [s]_\RR$)}
  \end{align*}
  Let $T = \hat{T} \cap S$ and $\psi \equiv \tilde{s} \anyuntil T$ so that
  $s \in \ECS[G]{\psi}$.
  Note that $T = \hat{T} \cap S \in \SB(\RR)$ by
  \lemm~\ref{lem:restrict}~\ref{it:restrict:superblock} and $\tilde{s} =
  [s]_\RR = [s]_{\hat\RR} \cap S$ by
  \lemm~\ref{lem:restrict}~\ref{it:restrict:relation}, and then $\tilde{s}
  \cap T = ([s]_{\hat\RR} \cap S) \cap (\hat{T} \cap S)
  \subseteq [s]_{\hat\RR} \cap \hat{T} = \emptyset$. Thus $\psi$ is an
  \RR-superblock step formula from~$\tilde{s}$.
  As \RR\ is a robust stutter bisimulation on~$G$ and $s \in
  \ECS[G]{\psi}$, it follows that $\tilde{s} = [s]_\RR \subseteq
  \ECS[G]{\psi}$. Then $\psi \in \Sigma_\RR$ by~\eqref{eq:quotient:Sigma}.
  Define a \memoryless\ controller $\overlinit{C} \colon S/\RR \to
  2^{\Sigma_\RR}$ with $\overlinit{C}(\tilde{s}) = \{ \psi \}$ and
  $\overlinit{C}(t) = \Sigma_\RR$ for $t \neq \tilde{s}$, i.e., the only
  transitions enabled by~$\overlinit{C}$ from the state~$\tilde{s}$ are
  those labelled~$\psi$.

  Let $T_\RR = \hat{T} \cap (S/\RR)$, and consider two cases depending on
  whether $\anyuntil$ in $\psi$ and~$\hat\psi$ is $\until$ or~$\weakuntil$:
  \begin{itemize}
  \item
    If $\mathord{\anyuntil} = \mathord{\until}$ and thus $\psi \equiv
    \tilde{s} \until T$, it follows from \eqref{eq:quotient:delta} that
    $(\tilde{s}, \psi, T') \in \delta_\RR$ implies $T' \in S/\RR$ and $T'
    \subseteq T$.
    From $T' \in S/\RR$ it follows that $T' = [t']_\RR$ for some $t' \in
    S$, and then $t' \in [t']_\RR = T' \subseteq T = \hat{T} \cap S
    \subseteq \hat{T}$, and then also $T' \in \hat{T}$ because $\hat{T}
    \in \SB(\hat\RR)$ and $(t',T') = (t',[t']_\RR) \in \hat\RR$ by
    \defn~\ref{def:hatR}.
    This means $T' \in \hat{T} \cap (S/\RR) = T_\RR$, and it follows by
    construction of $\overlinit{C}(\tilde{s}) = \psi$ that $\langle
    \overlinit{C}/(G/\RR), \tilde{s}\rangle \models \{\tilde{s}\} \until
    T_\RR$.
  \item
    If $\mathord{\anyuntil} = \mathord{\weakuntil}$ and thus $\psi \equiv
    \tilde{s} \weakuntil T$, it likewise follows
    from~\eqref{eq:quotient:delta} that $(\tilde{s},
    \psi, T') \in \delta_\RR$ implies $T' \in T_\RR \cup
    \{\tilde{s}\}$, and therefore $\langle \overlinit{C}/(G/\RR),
    \tilde{s}\rangle \models \{\tilde{s}\} \weakuntil T_\RR$.
  \end{itemize}
  In both cases it is clear that $\overlinit{C}/G$ is \dlfree and $\langle
  \overlinit{C}/(G/\RR), \tilde{s}\rangle \models \{\tilde{s}\} \anyuntil
  T_\RR$. It follows that
  \begin{align*}
    \tilde{s} & \in \ECS[G/\RR]{\{\tilde{s}\} \anyuntil T_\RR} \\
              & = \ECS[G/\RR]{(\{\tilde{s}\} \cap (S/\RR))
                               \anyuntil (\hat{T} \cap (S/\RR))} \span \\
              & = \ECS[\hat{G}]{\{\tilde{s}\} \anyuntil \hat{T}} \cap (S/\RR)
              & \text{(by \lemm~\ref{lem:Gunion})} \\
              & \subseteq \ECS[\hat{G}]{[\tilde{s}]_{\hat\RR}
                                        \anyuntil \hat{T}} \\
              & = \ECS[\hat{G}]{[s]_{\hat\RR} \anyuntil \hat{T}}
              & \text{(as $(s,\tilde{s}) \in \hat\RR$)} \\
              & = \ECS[\hat{G}]{\hat\psi} \ .
  \end{align*}      
	
\item $(\tilde{s},s) \in \hat{\RR}$ where $\tilde{s} \in S/\RR$ and $s \in
  S$. By \defn~\ref{def:hatR} it is clear that $s \in \tilde{s}$, which
  means that $\tilde{s} = [s]_\RR$.
  Note that $L_\RR(s) = L_{\RR}([s]_\RR) = L(\tilde{s})$ by
  \defn~\ref{def:quotient:system}, showing condition~\ref{it:rsb:label} of
  \defn~\ref{def:robust:stutter:bisimulation}.
  For condition~\ref{it:rsb:equivalent}, let $\hat\psi \equiv
  [\tilde{s}]_{\hat{\RR}} \anyuntil \hat{T}$ for some $\hat{T} \in
  \SB(\hat{\RR})$ such that $[\tilde{s}]_{\hat{\RR}} \cap \hat{T} =
  \emptyset$, and let $\tilde{s} \in \ECS[\hat{G}]{\hat\psi}$. It is to be
  shown that $s \in \ECS[\hat{G}]{\hat\psi}$.

  Again, $\tilde{s} \in \ECS[\hat{G}]{\hat\psi}$ implies that there is a
  \dlfree controller for~$\hat{G}$, and thus $\hat{G}$, $G$, and~$G/\RR$
  are \dlfree. Consider $T_\RR = \hat{T} \cap (S/\RR)$. By
  \lemm~\ref{lem:Gunion},
  \lemm~\ref{lem:restrict}~\ref{it:restrict:relation}, and
  Remark~\ref{rem:hatR},
  \begin{align*}
    \tilde{s} & \in \ECS[\hat{G}]{\hat\psi} \cap (S/\RR) \\
              & = \ECS[\hat{G}]{[\tilde{s}]_{\hat{\RR}}
                                \anyuntil \hat{T}} \cap (S/\RR) \\
              & = \ECS[G/\RR]{([\tilde{s}]_{\hat{\RR}} \cap (S/\RR))
                              \anyuntil (\hat{T} \cap (S/\RR))} \\
              & = \ECS[G/\RR]{[\tilde{s}]_{\hat\RR|(S/\RR)} \anyuntil T_\RR} \\
              & = \ECS[G/\RR]{\{\tilde{s}\} \anyuntil T_\RR} \ .
  \end{align*}
  Let $\psi \equiv \{\tilde{s}\} \anyuntil T_\RR$ so that $\tilde{s}
  \in \ECS[G/\RR]{\psi}$.
  Then there exists a \dlfree \memoryless\ controller $\overlinit{C}$
  for~$G/\RR$ such that $\langle \overlinit{C}/(G/\RR), \tilde{s} \rangle
  \models \psi$.
  Consider two cases depending on whether $\tilde{s}^\omega \in
  \Frags^\omega(\overlinit{C}/(G/\RR))$ or not.
  \begin{itemize}
  \item
    If $\tilde{s}^\omega \in \Frags^\omega(\overlinit{C}/(G/\RR))$, then 
    $\psi \equiv \{\tilde{s}\} \weakuntil T_\RR$, and
    there exists $\theta \in \Sigma_\RR$ such that $(\tilde{s}, \theta,
    \tilde{s}) \in \delta_\RR$.
    By \eqref{eq:quotient:Sigma} and~\eqref{eq:quotient:delta}, $\theta$ is
    an \RR-superblock step formula of the form $\theta \equiv \tilde{s}
    \weakuntil T$, for some $T \in \SB(\RR)$.
		
  \item
    If $\tilde{s}^\omega \notin \Frags^\omega(\overlinit{C}/(G/\RR))$, then
    since $\overlinit{C}/(G/\RR)$ is \dlfree, there exists $\theta \in
    \ACT_\RR$ and $\tilde{t} \in T_\RR$ such that $(\tilde{s}, \theta,
    \tilde{t}) \in \delta_\RR$ and $\theta \in \overlinit{C}(\tilde{s})$.
    On the other hand, $\tilde{s}^\omega \notin
    \Frags^\omega(\overlinit{C}/(G/\RR))$ implies $(\tilde{s}, \theta,
    \tilde{s}) \notin \delta_\RR$. It follows from
    \eqref{eq:quotient:Sigma} and~\eqref{eq:quotient:delta} that this
    $\theta$ has the form $\theta \equiv \tilde{s} \until T$ for some $T
    \in \SB(\RR)$ with $\tilde{t} \subseteq T$.
  \end{itemize}
  In both cases, there exist $\theta \in \Sigma_\RR$ and $T \in \SB(\RR)$
  such that $\theta \equiv \tilde{s} \anyuntil_\theta T$.
  The modality $\anyuntil_\theta$ of~$\theta$ may be different from the
  modality \anyuntil\ of $\hat\psi$ and~$\psi$, but if $\theta \equiv
  \tilde{s} \weakuntil T$ then also $\psi \equiv \{\tilde{s}\} \weakuntil
  T_\RR$ and $\hat\psi \equiv [\tilde{s}]_{\hat\RR} \weakuntil \hat{T}$.
	
  It is next shown that $T \subseteq \hat{T} \cap S$. Consider an arbitrary
  state $t \in T$. Then $[t]_\RR \subseteq T$ as $T \in \SB(\RR)$, and
  given $\theta \equiv \tilde{s} \anyuntil_\theta T \in \Sigma_\RR$, it
  follows by~\eqref{eq:quotient:delta} that $(\tilde{s}, \theta, [t]_\RR)
  \in \delta_\RR$.
  Also, it follows from $\langle \overlinit{C}/(G/\RR), \tilde{s} \rangle
  \models \psi \equiv \{\tilde{s}\} \anyuntil T_\RR$ that
  $\Post_{\overlinit{C}/(G/\RR)}(\tilde{s}) \subseteq \{\tilde{s}\} \cup
  T_\RR = \{\tilde{s}\} \cup (\hat{T} \cap (S/\RR)) \subseteq \{\tilde{s}\}
  \cup \hat{T}$.
  As $(\tilde{s}, \theta, [t]_\RR) \in \delta_\RR$, this means $[t]_\RR \in
  \{\tilde{s}\} \cup \hat{T}$. Also $\tilde{s} \cap T = \emptyset$ as
  $\theta \in \ACT_\RR$ is an \RR-superblock step formula, which given $t
  \in T$ implies that $t \notin \tilde{s}$, and then $[t]_\RR \neq
  \tilde{s}$ and $[t]_\RR \in \hat{T}$. But $\hat{T} \in \SB(\hat\RR)$ and
  $([t]_\RR,t) \in \hat\RR$ by \defn~\ref{def:hatR}, which means $t \in
  \hat{T}$. As $t \in T$ was chosen arbitrarily, it follows that $T
  \subseteq \hat{T}$.
  Further, note that $T \subseteq S$ as $T \in \SB(\RR)$, and thus also $T
  \subseteq \hat{T} \cap S$.
	
  Now the claim follows as
  \begin{align*}
  s & \in \tilde{s} \\
    & \subseteq \ECS[G]{\tilde{s} \anyuntil_\theta T}
    & \llap{\text{(as $\tilde{s} \anyuntil_\theta T \equiv \theta
                   \in \Sigma_\RR$ and by \eqref{eq:quotient:Sigma})}} \\
    & \subseteq \ECS[G]{[s]_\RR \anyuntil_\theta (\hat{T} \cap S)} 
    & \text{(as $T \subseteq \hat{T} \cap S$)} \\
    & = \ECS[G]{([s]_{\hat\RR} \cap S)
                \anyuntil_\theta (\hat{T} \cap S)} \kern-2em 
    & \text{(by \lemm~\ref{lem:restrict}~\ref{it:restrict:relation})} \\
    & = \ECS[\hat{G}]{[s]_{\hat\RR} \anyuntil_\theta \hat{T}} \cap S
    & \text{(by \lemm~\ref{lem:Gunion})} \\
    & \subseteq \ECS[\hat{G}]{[s]_{\hat\RR} \anyuntil \hat{T}}
    & \kern-2em
      \text{(as $\mathord{\anyuntil}_\theta = \mathord{\weakuntil}$
             implies $\mathord{\anyuntil} = \mathord{\weakuntil}$)} \\
    & = \ECS[\hat{G}]{[\tilde{s}]_{\hat\RR} \anyuntil \hat{T}}
    & \text{(as $(\tilde{s},s) \in \hat\RR$)} \\
    & = \ECS[\hat{G}]{\hat\psi} \ .
  \end{align*}
	
\item $(\tilde{s}, \tilde{t}) \in \hat{\RR}$ where $\tilde{s}, \tilde{t}
  \in \SmodR$. In this case $\tilde{s} = \tilde{t}$ by
  Remark~\ref{rem:hatR}, which implies $L(\tilde{s}) = L(\tilde{t})$ and
  $\tilde{s} \in \ECS[\hat{G}]\psi$ if and
  only if $\tilde{t} \in \ECS[\hat{G}]\psi$ for all~$\psi$.
\end{itemize}

Finally, it is to be shown that $\hat{\RR}$ is a robust stutter
bisimulation between $G$ and $G/\RR$ as defined by
\Defn~\ref{def:robust:stutter:between}. First, if $s\init \in S\init$
then $[s\init]_\RR \in S\init_\RR$ by~\eqref{eq:quotient:Sinit}. By
construction of $\hat{\RR}$, it follows that $(s\init, [s\init]_\RR) \in
\hat{\RR}$. Second, if $\tilde{s}\init \in S\init_\RR$ then there exists
$s\init \in S\init$ such that $\tilde{s} = [s\init]_\RR$, again
by~\eqref{eq:quotient:Sinit}. By construction of $\hat{\RR}$, it follows
that $(\tilde{s}\init, s\init) = ([s\init]_\RR, s\init) \in \hat{\RR}$.
Hence, $\hat{\RR}$ is a robust stutter bisimulation between $G$
and~$\GmodR$.
\end{proof}

\section{Proofs for concrete controller construction}
\label{app:LTLnnPreservation}

This section provides the necessary results to prove \Thm~\ref{thm:LTLnnPreservation}. The definitions of~$T$ and~$\Psi$ from \Sect~\ref{sec:concreteController} are restated here more formally and in more general terms.

\begin{definition}
\label{def:enforcedSuperblock}
Let $G = \tsystem$ be a transition system,
let $\RR \subseteq S \times S$ be an equivalence relation,
let $C$ be a controller for~$G$,
and let $\rho = \rho'u \in \Frags^*(C,G)$ with $\rho' \in S^*$ and $u \in S$.
The \emph{superblock reached} by~$C$ after~$\rho$ is
\begin{equation}\label{eq:TAfter}
  T\langle C, \rho \rangle =
  \bigcup \, \LongSet{10em}{$[v]_\RR \mid
    \rho\tau v \in \Frags^*(C,G)$ where $\tau \in [u]_\RR^*$ and
    $(u, v) \notin \RR$}.
\end{equation}
The \emph{\RR-superblock step formula enforced} by~$C$ after~$\rho$ is
\begin{equation}\label{eq:psiAfter}
  \Psi\langle C, \rho\rangle \equiv 
  \begin{cases}
    [u]_\RR \weakuntil
    T\langle C, \rho \rangle, &  \\
    \multicolumn{2}{r@{}}{\text{if}\ \Frags^\omega(C, G) \cap
                        \rho [u]_\RR^\omega \neq \emptyset\ ;} \\
    [u]_\RR \until T\langle C, \rho \rangle,
    & \quad\text{otherwise .} \\
  \end{cases}
\end{equation}
\end{definition}

The following \Lemm~\ref{lem:psiAfter} shows that the enforced superblock
step formula $\Psi\langle C, \rho\rangle$ is indeed achievable by control,
i.e., there exists a controller enforcing this formula from the equivalence
class at the end of the considered path~$\rho$.

\begin{lemma}
  \label{lem:psiAfter}
  Let $G = \tsystem$ be a transition system,
  let $\RR \subseteq S \times S$ be an equivalence relation,
  and let $C$ be a controller for~$G$,
  and let $\rho = \rho'u \in \Frags^*(C,G)$ with $\rho' \in S^*$ and $u \in S$.
  Then $u \in \EC{\Psi\langle C, \rho\rangle}$.
\end{lemma}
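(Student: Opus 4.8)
The plan is to exhibit a \dlfree controller witnessing $u \in \EC{\Psi\langle C,\rho\rangle}$, obtained by ``shifting'' the given controller~$C$ so that, starting from~$u$, it behaves exactly as~$C$ does after it has already observed the prefix $\rho = \rho'u$. Formally, define $C'\colon S^+ \to 2^\Sigma$ by $C'(u\, t_1\cdots t_n) = C(\rho\, t_1\cdots t_n)$ whenever the argument starts with~$u$, and $C'(\eta) = \Sigma$ for every other $\eta \in S^+$. Unwinding \defn~\ref{def:permitted} and using $\rho = \rho'u \in \Frags^*(C,G)$ yields the key correspondence: for every $k$, finite or infinite, the sequence $u\, v_1 v_2 \cdots v_k$ is permitted by~$C'$ if and only if $\rho\, v_1 v_2 \cdots v_k$ is permitted by~$C$. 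Since the lemma is applied with~$C$ \dlfree (which is unavoidable anyway, as the conclusion $u \in \EC{\Psi\langle C,\rho\rangle}$ requires a \dlfree witness), $G$ is \dlfree, and hence so is~$C'$.

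It then suffices to show that every permitted infinite path fragment of~$C'$ starting at~$u$ satisfies $\Psi\langle C,\rho\rangle$. Fix $\pi = u\, v_1 v_2 \cdots \in \Frags^\omega(C',G)$ and set $\hat\pi := \rho\, v_1 v_2 \cdots \in \Frags^\omega(C,G)$, which lies there by the correspondence. If every $v_i$ lies in $[u]_\RR$, then $\pi \models \globally [u]_\RR$. Otherwise let $m \ge 1$ be least with $v_m \notin [u]_\RR$; then $v_1 \cdots v_{m-1} \in [u]_\RR^*$, the prefix $\rho\, v_1 \cdots v_{m-1}\, v_m$ of~$\hat\pi$ lies in $\Frags^*(C,G)$, and $(u, v_m) \notin \RR$, so \eqref{eq:TAfter} gives $v_m \in [v_m]_\RR \subseteq T\langle C,\rho\rangle$; together with $v_0 = u$ and $v_1, \dots, v_{m-1} \in [u]_\RR$ this yields $\pi \models [u]_\RR \until T\langle C,\rho\rangle$. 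Now split according to~\eqref{eq:psiAfter}. If $\Frags^\omega(C,G) \cap \rho[u]_\RR^\omega \neq \emptyset$, then $\Psi\langle C,\rho\rangle \equiv [u]_\RR \weakuntil T\langle C,\rho\rangle$, and each of the two alternatives above implies $\pi \models [u]_\RR \weakuntil T\langle C,\rho\rangle$. If $\Frags^\omega(C,G) \cap \rho[u]_\RR^\omega = \emptyset$, then $\Psi\langle C,\rho\rangle \equiv [u]_\RR \until T\langle C,\rho\rangle$, and since $\hat\pi \in \Frags^\omega(C,G)$ we get $\hat\pi \notin \rho[u]_\RR^\omega$, i.e., not every $v_i$ lies in $[u]_\RR$, so the second alternative applies and $\pi \models [u]_\RR \until T\langle C,\rho\rangle$. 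As $\pi$ was arbitrary and $C'$ is \dlfree, $u \in \EC{\Psi\langle C,\rho\rangle}$; and since $\Psi\langle C,\rho\rangle$ is a stutter step formula, \propn~\ref{prop:EC:any} also gives $u \in \ECS{\Psi\langle C,\rho\rangle}$.

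I do not expect a genuinely difficult step here; the work is careful bookkeeping. The point deserving the most care is the correspondence between $C'$-permitted fragments starting at~$u$ and $C$-permitted fragments extending~$\rho$ --- in particular the fact that a prefix of a $C$-permitted infinite fragment is itself a $C$-permitted finite fragment, which is exactly what legitimises the appeal to~\eqref{eq:TAfter} for $v_m \in T\langle C,\rho\rangle$. A secondary subtlety worth flagging is that in the weak-until case of~\eqref{eq:psiAfter} the defining side condition on~$\Psi$ plays no role at all, whereas in the strong-until case it is precisely what forces every permitted continuation eventually to leave~$[u]_\RR$; likewise, \dlfree-ness of~$C$, and hence of~$G$ and of~$C'$, is used and should be stated explicitly.
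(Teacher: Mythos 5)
Your proof is correct and follows essentially the same route as the paper's: both construct the shifted controller $C_\rho(u\tau)=C(\rho\tau)$ and then case-split on whether a permitted continuation stays in $[u]_\RR$ forever (yielding the weak-until case via the side condition of~\eqref{eq:psiAfter}) or eventually exits into $T\langle C,\rho\rangle$ via~\eqref{eq:TAfter}. Your added remarks on \dlfree-ness of the shifted controller and on the role of the side condition in the strong-until case are points the paper leaves implicit, but they do not change the argument.
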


\begin{proof}
\def\Crho{C_\rho}
\def\xtilde{}
\def\xTilde{}
Define a controller $\Crho\colon \xTilde{S}^+ \to 2^\Sigma$
for~$\xTilde{G}$ such that $\Crho(\xtilde{u}\xtilde\tau) =
\xTilde{C}(\xtilde\rho\xtilde\tau)$ for all $\xtilde\tau \in \xTilde{S}^*$.
It will be shown that $\langle
\Crho/\xTilde{G},\xtilde{u}\rangle \models \Psi\langle
\xTilde{C}, \xtilde\rho\rangle$.
Let $\xtilde\pi = \xtilde{u} \xtilde{u}_0 \xtilde{u}_1 \cdots \in
\Frags^\omega(\Crho,\xTilde{G})$, which means
$\xtilde\rho'\xtilde\pi = \xtilde\rho \xtilde{u}_0 \xtilde{u}_1 \cdots
\in\Frags^\omega(\xTilde{C},\xTilde{G})$.
Consider two cases: either all the states $\xtilde{u}_i$ are equivalent
to~$\xtilde{u}$ or not.
	
If $(\xtilde{u},\xtilde{u}_i) \in \RR$ for all $i$, then
$\xtilde\rho'\xtilde\pi \in \Frags^\omega(\xTilde{C},\xTilde{G})$ with
$\xtilde\pi \in [\xtilde{u}]_\RR^\omega$, and then $\xtilde\pi \models
[\xtilde{u}]_\RR \weakuntil T\langle \xTilde{C}, \xtilde\rho\rangle \equiv
\Psi\langle \xTilde{C}, \psi\rangle$ by~\eqref{eq:psiAfter}.

Otherwise $\xtilde\pi$ can be written as $\xtilde\pi = \xtilde{u} \xtilde{u}_0
\cdots \xtilde{u}_m \xtilde{v} \xtilde\pi'$ where $(\xtilde{u},\xtilde{u}_i) \in
\RR$ for $0 \leq i \leq m$ and $(\xtilde{u},\xtilde{v}) \notin \RR$. Then
$\xtilde\rho \xtilde{u}_0 \cdots \xtilde{u}_m \xtilde{v} \in
\Frags^*(\xTilde{C},\xTilde{G})$ and $\xtilde{v} \in T\langle \xTilde{C},
\xtilde\rho\rangle$ by~\eqref{eq:TAfter}. As $\xtilde{u} \xtilde{u}_0 \cdots
\xtilde{u}_m \in [\xtilde{u}]_\RR^*$, it is clear that $\xtilde\pi = \xtilde{u}
\xtilde{u}_0 \cdots \xtilde{u}_m \xtilde{v} \xtilde\pi' \models [\xtilde{u}]_\RR
\anyuntil T\langle \xTilde{C}, \xtilde\rho\rangle \equiv \Psi\langle
\xTilde{C}, \xtilde\rho\rangle$ independently of which case
of~\eqref{eq:psiAfter} applies.
	
It has been shown that $\xtilde\pi \models \Psi\langle \xTilde{C},
\xtilde\rho\rangle$ for every infinite path fragment $\xtilde\pi \in
\Frags^\omega(\Crho,\xTilde{G})$ whose first state
is~$\xtilde{u}$.
Therefore $\langle \Crho/\xTilde{G},\xtilde{u}\rangle
\models \Psi\langle \xTilde{C}, \xtilde\rho\rangle$, which implies the claim
$\xtilde{u} \in \EC[\smash{\xTilde{G}}]{\Psi\langle \xTilde{C},
\xtilde\rho\rangle}$.
\end{proof}

During the equivalence proof of \thm~\ref{thm:LTLnnPreservation}, it needs
to be established that the paths visited by the controller~$\Tilde{C}$ and
the controller~$C$ from \defn~\ref{def:concrete:controller} are not only
stutter equivalent but also visit the same equivalence classes.
Given a transition system~$G$ with state set~$S$, an equivalence relation $\RR \subseteq S \times S$, and a sequence of states $\pi = s_0s_1\cdots \in S^\infty$, the \emph{trace of equivalence classes} of $\pi$ is $[\pi]_\RR = [s_0]_\RR [s_1]_\RR \cdots \in (S/\RR)^\infty$. Then two path fragments $\pi_1, \pi_2 \in \Frags^\infty(G)$ are \emph{\RR-stutter equivalent} if $\mathrm{sf}([\pi_1]_\RR) = \mathrm{sf}([\pi_2]_\RR)$. In other words, two path fragments are \RR-stutter equivalent if they both visit the same equivalence classes in the same order, albeit possibly taking different numbers of steps within equivalence classes. If \RR\ is a \rdsb, then by \defn~\ref{def:robust:stutter:bisimulation}~\ref{it:rsb:label} all infinite \RR-stutter equivalent path fragments are also stutter equivalent.

Furthermore, the mapping~$M$ from \eqref{eq:M:init}--\eqref{eq:M:step} is
extended to infinite path fragments using a closure construction.
For a sequence $\pi \in S^\infty$, the set of finite prefixes of~$\pi$ is
$\pfx(\pi) = \{\, \rho \in S^+ \mid \rho \sqsubseteq \pi \,\}$.
The \emph{closure} of a set of finite sequences $P \subseteq S^*$ is
the set of sequences whose prefixes are all in~$P$, defined by $\clo(P) =
\{\, \pi \in S^\infty \mid \pfx(\pi) \subseteq P \,\}$.
Then the extended map $M\colon S^\infty \to S^\infty$ is
\begin{equation}
  \label{eq:M:extended}
  M(\pi) = \clo(\{ M(\rho) \mid s \sqsubseteq \rho  \sqsubset \pi \}) \ .
\end{equation}
This extension is well-defined because~$M$ is a prefix-preserving map.
If $M(\pi) = \tilde{\pi}$ is infinite, then it is an infinite path fragment
in $\Frags^\omega(\Tilde{C}, G)$ with $\tilde{s} \sqsubseteq \tilde{\pi}$,
and $\pi$ and~$\tilde{\pi}$ are \RR-stutter equivalent.

Using these definitions, it is now possible to prove
\Thm~\ref{thm:LTLnnPreservation} and its \crl~\ref{cor:LTLnn:preservation}.

\begin{repeattheorem}{thm:LTLnnPreservation}
	\ThmLTLnnPreservation
\end{repeattheorem}

\begin{proof}
Let $(\tilde{s}, s) \in \mathcal{R}$ such that $\tilde{s} \in
\EC[G]{\varphi}$. Then there exits a controller $\Tilde{C}$ such that $\langle
\Tilde{C}/G, \tilde{s} \rangle \models \varphi$. To show $s \in
\EC[G]{\varphi}$, it is shown that the controller~$C$ as constructed in
\defn~\ref{def:concrete:controller} enforces~$\varphi$ from~$s$, i.e.,
$\langle C/G, s \rangle \models \varphi$.
	
The proof consists of two parts. First it is shown that the mapping~$M$ in
\eqref{eq:M:init}--\eqref{eq:M:step} is defined for every finite path
fragment permitted by~$C$. Then it is shown that every infinite path
fragment permitted by~$C$ starting in~$s$ is \RR-stutter equivalent to some
infinite path fragment permitted by~$\Tilde{C}$ that starts in~$\tilde{s}$.
	
For the first step, it is shown by induction over the length of path
fragments $\rho \in \Frags^*(C, G)$ with $s \sqsubseteq \rho$ that
\begin{enumerate}
\item\label{it:LTLnnPreservation:M:def}
  $M(\rho)$ is defined;
\item\label{it:LTLnnPreservation:M:frag}
  $M(\rho) \in \Frags^*(\Tilde{C},G)$;
\item\label{it:LTLnnPreservation:M:eqv}
  $M(\rho)$ and $\rho$ are \RR-stutter equivalent.
\end{enumerate}

In the base case, the only path fragment~$\rho$ of length 1 such that $s
\sqsubseteq \rho$ is $\rho = s$, in which case $M(\rho) = M(s) = \tilde{s}$
is defined by construction~\eqref{eq:M:init}. Also, it holds that $\tilde{s} \in
\Frags^*(\Tilde{C},G)$ for the path fragment of length~1, and $\tilde{s}$
and~$s$ are \RR-stutter equivalent as $(\tilde{s}, s) \in \mathcal{R}$
	
For the inductive step, consider $\rho = \rho' u v \in \Frags^*(C, G)$ with
$\rho' \in S^*$, $u, v \in S$, and $s \sqsubseteq \rho$. By inductive
assumption, $M(\rho'u)$ satisfies
\ref{it:LTLnnPreservation:M:def}--\ref{it:LTLnnPreservation:M:eqv}, and
because~$M$ is a map to~$S^+$, this can be written as $M(\rho'u) =
\tilde\rho'\tilde{u} \in \Frags^*(\Tilde{C},G)$. There are two cases to
consider depending on whether $(u, v) \in \RR$ or $(u, v) \notin \RR$.
	
If $(u, v) \in \RR$, it follows by construction~\eqref{eq:M:stay} and by
inductive assumption that $M(\rho) = M(\rho'u v) = M(\rho'u) =
\tilde\rho'\tilde{u} \in \Frags^*(\Tilde{C},G)$ is defined.
Given $(u, v) \in \RR$, it is also clear the $\rho = \rho' u v$ is
\RR-stutter equivalent to $\rho'u$, which is \RR-stutter equivalent to 
$M(\rho) = M(\rho'u)$ by inductive assumption.
	
In the second case, $(u, v) \notin \RR$, it must be shown that
$F(M(\rho'u),v) = F(\tilde\rho'\tilde{u}, v)$ in~\eqref{eq:M:step} is
nonempty. By \Defn~\ref{def:concrete:controller}
\begin{equation}
  \label{eq:LTLnnPreservation:barC}
  C(\rho'u) = \overlinit{C}\langle \psi \rangle (u) \ ,
\end{equation}
where $\psi \equiv \Psi\langle \Tilde{C}, M(\rho'u) \rangle \equiv \Psi
\langle \Tilde{C}, \tilde\rho' \tilde{u} \rangle$.
As $\tilde\rho'\tilde{u} \in \Frags^*(\Tilde{C},G)$, it follows by
\Lemm~\ref{lem:psiAfter} and \Propn~\ref{prop:EC:any} that $\tilde{u}
\in \EC{\psi} = \ECS{\psi}$. Since $\rho = \rho'u$ and $M(\rho) =
\tilde\rho'\tilde{u}$ are \RR-stutter equivalent, it follows that
$(\tilde{u}, u) \in \RR$, and thus $u \in \ECS{\psi}$ as \RR\ is a robust
stutter bisimulation.
Recall that $v \notin [u]_\RR$ as $(u,v) \notin \RR$ and $\rho' u v \in
\Frags^*(C,G)$, which implies $u v \in
\Frags^*(\overlinit{C}\langle\psi\rangle,G)$
by~\eqref{eq:LTLnnPreservation:barC}, and then $v \in T\langle \Tilde{C},
\tilde\rho'\tilde{u} \rangle$ as $\overlinit{C}\langle\psi\rangle$ enforces
$\psi \equiv \Psi\langle \Tilde{C}, \tilde\rho' \tilde{u} \rangle \equiv
[\tilde{u}]_\RR \anyuntil T \langle \Tilde{C}, \tilde\rho'\tilde{u}
\rangle$ for paths starting from~$u$.
Then by~\eqref{eq:TAfter} there exist $\tilde{\tau} \in [\tilde{u}]_\RR^*$
and $(\tilde{v}, v) \in \RR$ such that $\tilde\rho'\tilde{u} \tilde{\tau}
\tilde{v} \in \Frags^*(\Tilde{C}, G)$.
Then $\tilde\rho'\tilde{u} \tilde{\tau} \tilde{v} \in
F(\tilde\rho'\tilde{u}, v)$ by~\eqref{eq:F}, so $F(\tilde\rho'\tilde{u},
v)$ is nonempty, which is enough to show~\ref{it:LTLnnPreservation:M:def}.
Also $M(\rho' u v) \in F(\tilde\rho'\tilde{u}, v) \subseteq
\Frags^*(\Tilde{C},G)$ by \eqref{eq:F} and~\eqref{eq:M:step},
showing~\ref{it:LTLnnPreservation:M:frag}.
Lastly, any $M(\rho' u v) \in F(\tilde\rho'\tilde{u}, v)$ is \RR-stutter
equivalent to $\tilde\rho'\tilde{u}\tilde{v}$ by~\eqref{eq:F},
so \ref{it:LTLnnPreservation:M:eqv} follows as $(\tilde{v}, v) \in \RR$, and
$\tilde\rho'\tilde{u} = M(\rho'u)$ is \RR-stutter equivalent to~$\rho'u$ by
inductive assumption.
This completes the induction.
	
Finally it can be shown that $\langle C/G, s \rangle \vDash \varphi$. This
amounts to showing that any infinite path fragment $\pi \in
\Frags^\omega(C, G)$ with $s \sqsubseteq \pi$ satisfies~$\varphi$. This is
done by considering two cases, depending on whether $\pi$ is mapped to an
infinite or finite path fragment by the extended map~$M$
from~\eqref{eq:M:extended}.

If $M(\pi) = \tilde\pi$ is infinite, then $\rho \in \Frags^*(C,G)$ for
every finite prefix $\rho \sqsubset \pi$ with $s \prefix \rho$, and thus
$M(\rho) \in \Frags^*(\Tilde{C},G)$ by~\ref{it:LTLnnPreservation:M:frag},
which implies $M(\pi) = \clo (\{\, M(\rho) \mid s \prefix \rho \sqsubset
\pi \,\}) \in \Frags^\omega(\Tilde{C},G)$ by~\eqref{eq:M:extended}.
Also, every finite prefix $\rho \sqsubset \pi$ with $s \prefix \rho$ is
\RR-stutter equivalent to~$M(\rho)$ by~\ref{it:LTLnnPreservation:M:eqv},
which again using~\eqref{eq:M:extended} implies that $\tilde\pi = M(\pi)$
is \RR-stutter equivalent to~$\pi$.
	
If $M(\pi) = \tilde{\rho}$ is finite, then $\tilde{\rho} \in
\Frags^*(\Tilde{C}, G)$ with $\tilde{s} \sqsubseteq \tilde{\rho}$, and
there is a finite prefix $\rho \sqsubset \pi$ such that $\tilde{\rho} =
M(\pi) = M(\rho)$. Let $\rho = \rho'u$ and $\tilde{\rho} =
\tilde\rho'\tilde{u}$.
As $\rho$ and $M(\rho) = \tilde{\rho}$ are \RR-stutter equivalent
by~\ref{it:LTLnnPreservation:M:eqv}, it follows that $(u, \tilde{u}) \in
\RR$.
By construction of~$M$ in \eqref{eq:M:stay}, \eqref{eq:M:step},
and~\eqref{eq:M:extended}, it follows from the finiteness of $M(\pi) =
M(\rho'u)$ that there exists $\tau \in [u]_\RR^\omega$ such that $\pi =
\rho' u \tau$. Then also $M(\rho'\tau'u') = M(\rho'u) = M(\rho)$ for all
prefixes $\tau'u' \prefix u\tau$.
For such prefixes it holds that
$\overlinit{C}\langle\Psi\langle\Tilde{C},M(\rho'\tau'u')\rangle\rangle(u')
= \overlinit{C}\langle\Psi\langle\Tilde{C},M(\rho)\rangle(u') =
\overlinit{C}\langle\psi\rangle(u')$ where $\psi \equiv
\Psi\langle\Tilde{C}, M(\rho)\rangle$.
This implies $u\tau \in \Frags^\omega(\overlinit{C}\langle \psi \rangle, G)$.
As $\langle\overlinit{C}\langle\psi\rangle/G, u\rangle \models \psi$
by \defn~\ref{def:concrete:emulator}, it follows that $u\tau \models \psi$.
Furthermore, $\psi \equiv \Psi\langle\Tilde{C}, M(\rho)\rangle \equiv
\Psi\langle\Tilde{C}, M(\rho'u) \rangle$ is of the form $\psi \equiv
[u]_\RR \anyuntil T\langle\Tilde{C},\rho\rangle$ by~\eqref{eq:psiAfter}.
Given $u\tau \in [u]_\RR^\omega$, it follows from $u\tau \models \psi$ by
the semantics of \LTLnn that $\psi \equiv [u]_\RR \weakuntil
T\langle\Tilde{C},\rho\rangle$.
Also note $\psi \equiv \Psi\langle\Tilde{C}, M(\rho)\rangle \equiv
\Psi\langle\Tilde{C}, \tilde\rho\rangle \equiv \Psi\langle\Tilde{C},
\tilde\rho'\tilde{u}\rangle$, so it follows by
construction~\eqref{eq:psiAfter} that $\Frags^\omega(\Tilde{C},G) \cap
\tilde\rho[\tilde{u}]_\RR^\omega \neq \emptyset$.
Thus there exists $\tilde{\tau} \in [\tilde{u}]_\RR^\omega$ such that
$\tilde\rho\tilde\tau \in \Frags^\omega(\Tilde{C}, G)$.
Let $\tilde\pi = \tilde\rho\tilde\tau$. Then $\tilde\pi =
\tilde\rho\tilde\tau \in
\Frags^\omega(\Tilde{C}, G)$, and $\tilde\pi = \tilde\rho\tilde\tau =
\tilde\rho'\tilde{u}\tilde\tau$ is \RR-stutter equivalent to
$\tilde\rho'\tilde{u} = \tilde\rho = M(\pi)$, which is \RR-stutter
equivalent to~$\pi$ by \ref{it:LTLnnPreservation:M:eqv}
and~\eqref{eq:M:extended}.
	
In both cases where an arbitrary infinite path fragment~$\pi$ is picked
from $\Frags^\omega(C, G)$, there exists an \RR-stutter equivalent path
fragment~$\tilde{\pi} \in \Frags^\omega(\Tilde{C}, G)$. As \RR\ is a robust
stutter bisimulation, it preserves the state labels, and it follows that
$\pi$ and~$\tilde{\pi}$ are stutter equivalent. Since $\tilde{\pi} \vDash
\varphi$ and $\varphi$ is an \LTLnn\ formula, it also holds that $\pi
\vDash \varphi$. Because~$\pi$ with $s \prefix \pi$ is picked arbitrarily,
$\langle C/G, s \rangle \vDash \varphi$, and hence $s \in \EC{\varphi}$.
\end{proof}

\begin{repeatcorollary}{cor:LTLnn:preservation}
  \CorLTLnnPreservation
\end{repeatcorollary}

\begin{proof}
Assume $G = \tsystem$ and $\Tilde{G} = \langle \Tilde{S}\bcom
\Tilde\Sigma\bcom \Tilde\delta\bcom \Tilde{S}\init\bcom \Tilde\AP\bcom
\Tilde{L}\rangle$.
Consider an arbitrary state $s \in S\init$.
Since \RR\ is a robust stutter bisimulation between $G$ and~$\Tilde{G}$, it
is a robust stutter bisimulation on $G' = G \cup \Tilde{G}$. Therefore, for
$s \in S\init$ there exists $\tilde{s} \in \Tilde{S}\init$ such that $(s,
\tilde{s}) \in \RR$. As $\Tilde{C}/\Tilde{G} \models \varphi$, it holds
that $\tilde{s} \in
\EC[\smash{\Tilde{G}}]{\varphi} \subseteq \EC[G']{\varphi}$. Then also $s
\in \EC[G']{\varphi}$ as $(s, \tilde{s}) \in \RR$, so there exists a \dlfree
controller $C_s \colon (S \cup \Tilde{S})^+ \to 2^\Sigma$ such that
$\langle C_s/G', s\rangle \models \varphi$.

Define the controller $C \colon S^+ \to 2^\Sigma$ such that $C(s\rho') =
C_s(\rho')$ if $s \in S\init$ and $C(s\rho') = \Sigma$ if $s \in S
\setminus S\init$. That is, when $C$ is faced with a path~$\rho = s\rho'$
starting at some state $s \in S\init$, it emulates the control action of
the controller~$C_s$ that enforces~$\varphi$ starting from this state~$s$.
It is clear that $C$ is \dlfree, and $\langle C/G, s\rangle \models
\varphi$ holds for all states $s \in S\init$, which means $C/G \models
\varphi$.
\end{proof}

As another consequence of \thm~\ref{thm:LTLnnPreservation}, the following
\lemm~\ref{lem:superblock} shows that robust stutter bisimulation preserves
the existence of controllers for general superblock step formulas. That is,
condition~\ref{it:rsb:equivalent} in
\defn~\ref{def:robust:stutter:bisimulation} does not only apply to
superblock step formulas $[s]_\RR \anyuntil T$ from an equivalence class
$[s]_\RR \in S/\RR$, but to general superblock step formulas $P \anyuntil
T$ with $P \in \SB(\RR)$.

\begin{lemma}[Superblock Lemma]
\label{lem:superblock}
Let $G = \tsystem$ be a transition system, let $\RR$ be a robust stutter bisimulation on $G$, and let $\psi$ be an $\RR$-superblock step formula for~$G$. For all states $(s, \tilde{s}) \in \RR$ such that $s \in \ECS[G]{\psi}$ it also holds that $\tilde{s} \in \ECS[G]{\psi}$.
\end{lemma}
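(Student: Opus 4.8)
The plan is to obtain the lemma as a direct consequence of Theorem~\ref{thm:LTLnnPreservation}, after two purely bookkeeping reductions. Write $\psi \equiv P \anyuntil T$ with $P, T \in \SB(\RR)$ and $P \cap T = \emptyset$. If $\ECS[G]{\psi} = \emptyset$ the statement is vacuous, so assume the witnessing state $s$ exists; then $G$ admits a \dlfree\ \memoryless\ controller, hence every state of~$G$ has an outgoing transition, i.e.\ $G$ is \dlfree. Since $\psi$ is in particular a stutter step formula, Proposition~\ref{prop:EC:any} applies and yields $\ECS[G]{\psi} = \EC[G]{\psi}$; so it is enough to prove $\tilde s \in \EC[G]{\psi}$, which puts us in the setting of Theorem~\ref{thm:LTLnnPreservation} where general controllers are allowed.

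The only obstacle is that $\psi$ is a \emph{generalised} \LTLnn\ formula --- its ``atoms'' $P$ and $T$ are arbitrary state sets --- whereas Theorem~\ref{thm:LTLnnPreservation} is stated for ordinary \LTLnn\ formulas. I would remove the mismatch by relabelling. Introduce two fresh atomic propositions $p_P, p_T$ and let $G'$ agree with $G$ on states, labels, transitions and initial states but have labelling $L'(q) = L(q) \cup \{\, p_P \mid q \in P \,\} \cup \{\, p_T \mid q \in T \,\}$. Then $\varphi' \equiv p_P \anyuntil p_T$ is a genuine \LTLnn\ formula, and because $q \in P \Leftrightarrow p_P \in L'(q)$ and $q \in T \Leftrightarrow p_T \in L'(q)$ for every $q \in S$, a path fragment satisfies $\varphi'$ in $G'$ exactly when it satisfies $P \anyuntil T$. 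As controllers and controlled systems are insensitive to labels, this gives $\EC[G']{\varphi'} = \EC[G]{P \anyuntil T} = \EC[G]{\psi}$.

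Next I would check that $\RR$ is still a robust stutter bisimulation on~$G'$. Condition~\ref{it:rsb:label} survives because $P$ and $T$ are $\RR$-superblocks: for $(q_1, q_2) \in \RR$ one has $L(q_1) = L(q_2)$ from $\RR$ being a robust stutter bisimulation on~$G$, together with $q_1 \in P \Leftrightarrow q_2 \in P$ and $q_1 \in T \Leftrightarrow q_2 \in T$, so $L'(q_1) = L'(q_2)$. Condition~\ref{it:rsb:equivalent} is untouched, since it only involves $\delta$, the superblocks $\SB(\RR)$, and the sets $\ECS[\cdot]{\chi}$ for $\RR$-superblock step formulas $\chi$, none of which are altered by the relabelling. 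Then applying Theorem~\ref{thm:LTLnnPreservation} to $G'$, $\RR$, $\varphi'$, and the pair $(s, \tilde s) \in \RR$ (using that $\RR$ is symmetric to match the theorem's orientation) gives $\tilde s \in \EC[G']{\varphi'} = \EC[G]{\psi} = \ECS[G]{\psi}$, the last equality again by Proposition~\ref{prop:EC:any}.

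I do not expect a serious difficulty: all the mathematical weight sits in Theorem~\ref{thm:LTLnnPreservation}, and the only delicate point is verifying that passing to $G'$ keeps $\RR$ a robust stutter bisimulation --- which is exactly where the hypothesis that $P$ and $T$ are superblocks is used. As an alternative to introducing~$G'$, one may instead observe that the proof of Theorem~\ref{thm:LTLnnPreservation} invokes the \LTLnn-ness of $\varphi$ only to pass from ``$\pi$ and $\tilde\pi$ are $\RR$-stutter equivalent'' to ``$\pi \models \varphi \Leftrightarrow \tilde\pi \models \varphi$'', and this implication holds verbatim for $\RR$-superblock step formulas; that argument then proves the lemma directly. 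I would present the relabelling route as primary, since it uses Theorem~\ref{thm:LTLnnPreservation} as a black box.
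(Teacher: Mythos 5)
Your proof is correct and follows essentially the same route as the paper's: relabel the states with fresh propositions marking $P$ and~$T$, observe that \RR\ remains a robust stutter bisimulation on the relabelled system precisely because $P$ and~$T$ are superblocks, and then invoke \Thm~\ref{thm:LTLnnPreservation} together with \Propn~\ref{prop:EC:any}. The only differences are cosmetic (you augment the labelling where the paper replaces it with $\AP' = \{p,t\}$), and your explicit check that the existence of a \dlfree\ controller forces $G$ to be \dlfree\ is a point the paper leaves implicit.
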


\begin{proof}
Let $\psi \equiv P \anyuntil T$ be an \RR-superblock step formula.
Construct a transition system $G' = \langle S\bcom \Sigma\bcom \delta\bcom
S\init\bcom \AP'\bcom L'\rangle$ where $\AP' = \{ p, t \}$ and $L'$ is
defined such that
$p \in L'(u)$ iff $u \in P$ and $t \in L'(u)$ iff $u \in T$. In this
transition system, $\psi \equiv P \anyuntil T$ is logically equivalent to
the \LTLnn\ formula $p \anyuntil t$. It is also clear that \RR\ is a robust stutter bisimulation on~$G'$: the states and transitions in $G$ and~$G'$ are the same, and because $P$ and~$T$ are superblocks of~\RR\ (and hence closed under equivalence), any equivalent states are assigned the same labels by~$L'$.

Now let $(s, \tilde{s}) \in \RR$ such that $s \in \ECS[G]{\psi}$. As $\psi$ does not refer to any proposition in~$\AP$, it follows that $\ECS[G]{\psi} = \ECS[G']{\psi}$. Furthermore, $\ECS[G']{\psi} = \EC[G']{\psi}$ by \Propn~\ref{prop:EC:any}. Hence, $s \in \EC[G']{\psi} = \EC[G']{p \anyuntil t}$ because of logical equivalence in $G'$. As $(s, \tilde{s}) \in \RR$ and $\RR$ is a robust stutter bisimulation on $G'$, it follows by \Thm~\ref{thm:LTLnnPreservation} that $\tilde{s} \in \EC[G']{p \anyuntil t} = \EC[G']{\psi} = \ECS[G']{\psi} = \ECS[G]{\psi}$, again by \Propn~\ref{prop:EC:any} and because $\psi$ does not refer to any atomic propositions in \AP.
\end{proof}

\section{Coarsest robust stutter bisimulation}
\label{app:coarsest}

This section proves that the relation~\inbisim\ from
\defn~\ref{def:inbisim} is the coarsest robust stutter bisimulation on a
transition system. They key argument in this proof is that the transitive
and reflexive closure of the union of robust stutter bisimulations is again
a robust stutter bisimulation.

The \emph{transitive closure}~$(\RR)^+$ of a relation~\RR\ is the smallest
transitive relation such that $\RR \subseteq (\RR)^+$.
The \emph{transitive and reflexive closure} of~\RR\ is $(\RR)^* = (\RR)^+
\cup \mathcal{I}$, where $\mathcal{I}$ is the identity relation.

\begin{lemma}
	\label{lem:union}
	Let $G$ be a transition system, and let $\{\, \RR_i \mid i \in I \,\}$ be a set of robust stutter bisimulations on~$G$, where $I$ is an arbitrary index set. Then $(\bigcup_{i \in I} \RR_i)^*$ is a robust stutter bisimulation on~$G$.
\end{lemma}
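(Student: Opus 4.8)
The plan is to set $\RR = (\bigcup_{i \in I} \RR_i)^*$ and verify the two conditions of \defn~\ref{def:robust:stutter:bisimulation}, after first confirming that $\RR$ is an equivalence relation. Reflexivity and transitivity of $\RR$ are immediate from the definition of the transitive and reflexive closure, and symmetry follows because each $\RR_i$ is symmetric, hence $\bigcup_{i \in I} \RR_i$ is symmetric, and the transitive and reflexive closure of a symmetric relation is again symmetric. The structural observation I would record at this point is that $\RR_i \subseteq \RR$ for every $i \in I$, and consequently $\SB(\RR) \subseteq \SB(\RR_i)$ for every $i$: a set closed under the coarser relation $\RR$ is a fortiori closed under any sub-relation $\RR_i$. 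This inclusion is exactly what lets an $\RR$-superblock step formula be reused verbatim as an $\RR_i$-superblock step formula.

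For condition~\ref{it:rsb:label}, given $(s_1, s_2) \in \RR$ I would take a finite chain $s_1 = t_0, t_1, \ldots, t_n = s_2$ (with $n = 0$ permitted, covering the reflexive case) such that each consecutive pair $(t_k, t_{k+1})$ lies in some $\RR_{i_k}$. Since each $\RR_{i_k}$ preserves labels, $L(t_k) = L(t_{k+1})$ along the whole chain, so $L(s_1) = L(s_2)$.

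The substance is condition~\ref{it:rsb:equivalent}. Fix $(s_1, s_2) \in \RR$ and an $\RR$-superblock step formula $\psi \equiv [s_1]_\RR \anyuntil T$ from $[s_1]_\RR$ with $s_1 \in \ECS[G]{\psi}$; here $[s_1]_\RR, T \in \SB(\RR)$ and $[s_1]_\RR \cap T = \emptyset$. By the inclusion $\SB(\RR) \subseteq \SB(\RR_i)$ noted above, both $[s_1]_\RR$ and $T$ are superblocks of every $\RR_i$, so $\psi$ is an $\RR_i$-superblock step formula for each $i \in I$. Again take a chain $s_1 = t_0, t_1, \ldots, t_n = s_2$ with $(t_k, t_{k+1}) \in \RR_{i_k}$, and prove by induction on $k$ that $t_k \in \ECS[G]{\psi}$. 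The base case $t_0 = s_1 \in \ECS[G]{\psi}$ is the hypothesis; for the inductive step, $\psi$ is an $\RR_{i_k}$-superblock step formula, $\RR_{i_k}$ is a robust stutter bisimulation, and $(t_k, t_{k+1}) \in \RR_{i_k}$, so the Superblock Lemma (\Lemm~\ref{lem:superblock}) applied to $\RR_{i_k}$ gives $t_{k+1} \in \ECS[G]{\psi}$. Hence $s_2 = t_n \in \ECS[G]{\psi}$, establishing condition~\ref{it:rsb:equivalent} and completing the proof.

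I expect the main obstacle — and the reason \Lemm~\ref{lem:superblock} is needed rather than condition~\ref{it:rsb:equivalent} of \defn~\ref{def:robust:stutter:bisimulation} applied directly to the $\RR_{i_k}$ — to be the mismatch between the source set $[s_1]_\RR$ of $\psi$ and the equivalence classes of the finer relations $\RR_{i_k}$ along the chain: in general $[s_1]_\RR$ is a union of many $\RR_{i_k}$-classes, not a single one, so the bare definition of a robust stutter bisimulation for $\RR_{i_k}$ (which only constrains superblock step formulas whose source is an $\RR_{i_k}$-class) does not apply. The Superblock Lemma is precisely the strengthening that removes this restriction, and the direction of the inclusion $\SB(\RR) \subseteq \SB(\RR_i)$ (coarse superblocks are fine superblocks) is the structural fact that allows $\psi$ to be fed into it unchanged at every link of the chain.
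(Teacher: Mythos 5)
Your proof is correct and follows essentially the same route as the paper's: the same chain decomposition of a pair in the closure, the same observation that superblocks of the coarser closure are superblocks of each $\RR_i$, and the same appeal to the Superblock Lemma at each link of the chain. Your closing remark about why the bare definition of robust stutter bisimulation for $\RR_{i_k}$ does not suffice is accurate and matches the role Lemma~\ref{lem:superblock} plays in the paper's argument.
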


\begin{proof}
As the union of symmetric relations is again symmetric, it is clear that
the reflexive and transitive closure of the union of equivalence
relations~$\RR_i$ is again an equivalence relation.
It remains to be shown that $\RR = (\bigcup_{i \in I} \RR_i)^*$ satisfies
conditions \ref{it:rsb:label} and~\ref{it:rsb:equivalent} in
\defn~\ref{def:robust:stutter:bisimulation}.
	
	Let $(x,y) \in \RR$. This can be written as
	\begin{equation}
		x = x_0 \inRR_{i_1} x_1 \inRR_{i_2} \cdots \inRR_{i_n} x_n = y
	\end{equation}
	where $i_1,\ldots,i_n \in I$.
	As each $\RR_i$ is a robust stutter bisimulation, it immediately follows that
	$L(x) = L(x_0) = L(x_1) = \cdots = L(x_n) = L(y)$, showing~\ref{it:rsb:label}.
	To show condition~\ref{it:rsb:equivalent},
	let $x = x_0 \in \ECS[G]\psi$ for some \RR-superblock step formula $\psi \equiv [x_0]_\RR \anyuntil T$.
	If $n = 0$, it immediately holds that $y = x_n = x_0 \in \ECS[G]\psi$.
	Otherwise notice that $\RR_{i_1} \subseteq \bigcup_{i \in I} \RR_i \subseteq (\bigcup_{i \in I} \RR_i)^* = \RR$.
	Then $[x_0]_\RR$ and~$T$ are superblocks of~$\RR_{i_1}$, i.e., $\psi$ is an $\RR_{i_1}$-superblock step formula.
	Since $\RR_{i_1}$ is a robust stutter bisimulation and $(x_0,x_1) \in \RR_{i_1}$, it follows by \Lemm~\ref{lem:superblock} that $x_1 \in \ECS[G]\psi$.
	By induction, it follows that $y = x_n \in \ECS[G]\psi$.
\end{proof}

An alternative way to define the coarsest \rdsb~\inbisim\ is by considering
the union of all \rdsb{s}. This union is again a \rdsb, so it must be the
coarsest such relation. This argument is made in the following proof.

\begin{proposition}
\label{prop:inbisim}
Let $G$ be a transition system. The relation~\inbisim\ is a \rdsb that
includes every \rdsb on~$G$.
\end{proposition}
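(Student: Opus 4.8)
The plan is to identify $\inbisim$ explicitly as the union of all robust stutter bisimulations on~$G$, and then invoke \Lemm~\ref{lem:union} together with a short fixpoint-style argument. By \defn~\ref{def:inbisim}, $(s_1,s_2) \in \inbisim$ holds precisely when some robust stutter bisimulation~\RR\ on~$G$ satisfies $(s_1,s_2) \in \RR$. Hence, letting $\{\,\RR_i \mid i \in I\,\}$ enumerate all robust stutter bisimulations on~$G$ (a set, since each $\RR_i \subseteq S \times S$, and nonempty since the identity relation is trivially a robust stutter bisimulation), we have $\inbisim = \bigcup_{i \in I} \RR_i$.

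First I would record the ``includes'' part, which is immediate: if \RR\ is any robust stutter bisimulation on~$G$ and $(s_1,s_2) \in \RR$, then $s_1 \bisim s_2$ by \defn~\ref{def:inbisim}, so $(s_1,s_2) \in \inbisim$; thus $\RR \subseteq \inbisim$. Next, to show that $\inbisim$ is itself a robust stutter bisimulation, consider its reflexive and transitive closure $(\inbisim)^* = \bigl(\bigcup_{i \in I} \RR_i\bigr)^*$. By \Lemm~\ref{lem:union}, $(\inbisim)^*$ is a robust stutter bisimulation on~$G$. Therefore $(\inbisim)^*$ is one of the relations in the family $\{\,\RR_i \mid i \in I\,\}$, so $(\inbisim)^* \subseteq \bigcup_{i \in I} \RR_i = \inbisim$. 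Since $\inbisim \subseteq (\inbisim)^*$ always holds, we conclude $\inbisim = (\inbisim)^*$, which is a robust stutter bisimulation. Combining this with the previous observation, $\inbisim$ is a robust stutter bisimulation that includes every robust stutter bisimulation on~$G$.

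The only subtlety is that the bare union $\bigcup_{i \in I} \RR_i$ need not be transitive on its own, which is exactly why passing to the reflexive and transitive closure (and using that it is again a robust stutter bisimulation) is necessary; the rest is a direct unwinding of \defn~\ref{def:inbisim}. I do not expect any genuine obstacle here, because the substantive content---that the reflexive and transitive closure of a union of robust stutter bisimulations is again a robust stutter bisimulation---has already been established in \Lemm~\ref{lem:union} by way of the Superblock \Lemm~\ref{lem:superblock}.
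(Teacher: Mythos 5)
Your proposal is correct and follows essentially the same route as the paper: both express $\inbisim$ as the union of all robust stutter bisimulations on~$G$, obtain the inclusion part immediately from \defn~\ref{def:inbisim}, and then show $\inbisim = (\inbisim)^*$ by applying \Lemm~\ref{lem:union} to the reflexive and transitive closure and noting that this closure, being itself a robust stutter bisimulation, is contained in the union. The added remarks (nonemptiness of the family via the identity relation, and the observation that the bare union need not be transitive) are harmless and consistent with the paper's argument.
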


\begin{proof}
By \defn~\ref{def:inbisim}, the relation~\inbisim\ can be written as
\begin{equation}
  \inbisim = \bigcup \; \LongSet{9em}{\RR\ $\mid$ \RR\ is a robust stutter
    bisimulation on~$G$}.
  \label{eq:Runion}
\end{equation}
Thus, \inbisim\ includes every \rdsb on~$G$. To prove that \inbisim\ is a
\rdsb, by \lemm~\ref{lem:union} it is enough to show $\inbisim =
(\inbisim)^*$. It is clear that $\inbisim \subseteq (\inbisim)^*$ by
definition of closure. Conversely, as $(\inbisim)^*$ is a \rdsb, it follows
from~\eqref{eq:Runion} that $(\inbisim)^* \subseteq \inbisim$.
\end{proof}

\section{Correctness of algorithm}
\label{app:algorithm}

This section contains the proofs of the propositions in
Section~\ref{sec:algorithm}, which show the correctness of
Algorithm~\ref{alg:QuotientPartition}. \Propn~\ref{prop:splitter} shows
that splitters only exists for relations that are not \rdsb{s}, and
\propn~\ref{prop:refine:coarseness} shows that refinement preserves the
property of a relation to be at least as coarse as the coarsest
\rdsb~\inbisim. \Propn~\ref{prop:algorithm} combines these results to
show that Algorithm~\ref{alg:QuotientPartition} produces the coarsest
\rdsb\ if it terminates.

\begin{repeatproposition}{prop:splitter}
  \PropSplitter
\end{repeatproposition}

\begin{proof}
Assume that \RR\ is a robust stutter bisimulation on~$G$.
Then~\RR\ satisfies condition~\ref{it:rsb:label} in
\defn~\ref{def:robust:stutter:bisimulation}. If there was
a splitter~$\psi$ of~\RR, then by \Defn~\ref{def:splitter} there would exist
$(s_1, s_2) \in \RR$ such that $s_1 \in \ECS{\psi}$ and $s_2 \notin
\ECS{\psi}$. However, this contradicts condition~\ref{it:rsb:equivalent} of
\defn~\ref{def:robust:stutter:bisimulation}.
	
Conversely, assume that \RR\ satisfies condition~\ref{it:rsb:label} in
\defn~\ref{def:robust:stutter:bisimulation} and there does not exist any
splitter of~\RR. It is enough to show the condition~\ref{it:rsb:equivalent}
of \defn~\ref{def:robust:stutter:bisimulation}. Therefore, let $(s_1, s_2)
\in \RR$ and $s_1 \in \ECS{\psi}$ for some \RR-superblock step formula
$\psi \equiv [s_1]_\RR \anyuntil T$. Clearly $\ECS{\psi} \cap [s_1]_\RR
\neq \emptyset$. As it was assumed that there are no splitters, $\psi$
cannot be a splitter, so by \defn~\ref{def:splitter} it follows that
$\ECS{\psi} \cap [s_1]_\RR = [s_1]_\RR$. As $(s_1, s_2) \in \RR$, it
follows that $s_2 \in [s_1]_\RR = \ECS{\psi} \cap [s_1]_\RR \subseteq
\ECS{\psi}$.
\end{proof}

\begin{repeatproposition}{prop:refine:coarseness}
  \PropRefineCoarseness
\end{repeatproposition}

\begin{proof}
Assume $\inbisim \subseteq \RR$, and consider two robust stutter bisimilar
states $s_1 \bisim s_2$. It is to be shown that $(s_1, s_2) \in
\Refine(\RR, \psi)$. As $\psi \equiv P \anyuntil T$ is an \RR-superblock
step formula, $P$ and $T$ are superblocks of \RR, and then also of the
finer relation \inbisim. This means that~$\psi$ is an \inbisim-superblock
step formula for~$G$. Note that $s_1 \bisim s_2$ implies $(s_1, s_2) \in
\RR$ by assumption $\inbisim \subseteq \RR$, so $[s_1]_\RR = [s_2]_\RR$.
Consider two cases.
	
If $[s_1]_\RR \neq P$ then note that $P \in S/\RR$ by
\defn~\ref{def:splitter}, so that $[s_1]_\RR \cap P = \emptyset$. Then
$s_1,s_2 \notin P \supseteq P \cap \ECS\psi$ and thus $(s_1, s_2) \notin D
\cup D^{-1}$ in~\eqref{eq:refine}, so that $(s_1, s_2) \in \RR \setminus (D
\cup D^{-1}) = \Refine(\RR, \psi)$.
	
Otherwise $[s_1]_\RR = P$, and there are two further cases: either $s_1 \in
\ECS{\psi}$ or $s_1 \notin \ECS{\psi}$. If $s_1 \in \ECS{\psi}$, then since
$s_1 \bisim s_2$ and \inbisim\ is a robust stutter bisimulation on~$G$
and~$\psi$ is a \inbisim-superblock step formula for~$G$, it follows from
\Lemm~\ref{lem:superblock} that $s_2 \in \ECS{\psi}$. Thus $s_1,s_2 \notin
P \setminus \ECS\psi$ and $(s_1, s_2) \notin D \cup D^{-1}$
in~\eqref{eq:refine}, so $(s_1, s_2) \in \Refine(\RR, \psi)$. Similarly, if
$s_1 \notin \ECS{\psi}$ it follows that $s_2 \notin \ECS{\psi}$ so that
$s_1,s_2 \notin P \cap \ECS\psi$ and again $(s_1, s_2) \in \Refine(\RR,
\psi)$.
\end{proof}

\begin{repeatproposition}{prop:algorithm}
  \PropAlgorithm
\end{repeatproposition}

\begin{proof}
It is clear that $\RR^0$ satisfies condition~\ref{it:rsb:label} in
\defn~\ref{def:robust:stutter:bisimulation}, and each iteration of the loop
produces a finer equivalence relation that continues to satisfy this
condition. If the loop terminates, then the loop entry condition on
line~\ref{l:loop} of Algorithm~\ref{alg:QuotientPartition} ensures that
there are no splitters of the result~$\RR^i$ and it follows from
\Propn~\ref{prop:splitter} that $\RR^i$ is a robust stutter bisimulation.
Furthermore, the initial relation $\RR^0$ is the coarsest equivalence
relation that satisfies condition~\ref{it:rsb:label} in
\defn~\ref{def:robust:stutter:bisimulation} and thus coarser than~\inbisim,
and then it follows by induction from \Propn~\ref{prop:refine:coarseness}
that $\RR^i$ is also coarser than~\inbisim\ for all $i \geq 0$. On
termination, $\RR^i \supseteq \inbisim$ is a robust stutter bisimulation,
and as \inbisim\ is the coarsest robust stutter bisimulation, it follows
that $\RR^i = \inbisim$.
\end{proof}
\fi

\end{document}